
\documentclass[journal,12pt,onecolumn,draftclsnofoot]{IEEEtran}
\ifCLASSINFOpdf
\else
\fi
\usepackage[utf8]{inputenc} 
\usepackage[T1]{fontenc}
\usepackage{ifthen}
\usepackage[cmex10]{amsmath} 
\usepackage{enumitem}
\usepackage{amsfonts,amssymb,amsthm,epsfig,epstopdf,url,array}
\usepackage{thmtools,thm-restate}
\usepackage[table]{xcolor}
\usepackage{hyperref}
\usepackage{bbm}
\hypersetup{
    colorlinks=true,
    linkcolor=blue,
    filecolor=magenta,      
    urlcolor=red,
    citecolor=blue,
}
\usepackage{bbm}
\usepackage{times}
\usepackage{algorithm}
\usepackage{balance}
 \usepackage{float}
\usepackage[noend]{algpseudocode}
\theoremstyle{plain}
\newtheorem{theorem}{Theorem}

\usepackage{caption}
\usepackage{subcaption}

\theoremstyle{definition}
\newtheorem{definition}{Definition}
\newcommand\numberthis{\addtocounter{equation}{1}\tag{\theequation}}

\theoremstyle{remark}

\DeclareMathOperator*{\argmax}{arg\,max} 
\DeclareMathOperator*{\E}{\mathbb{E}}

\hyphenation{op-tical net-works semi-conduc-tor}

\newcommand{\indep}{\raisebox{0.05em}{\rotatebox[origin=c]{90}{$\models$}}}

\newcolumntype{P}[1]{>{\centering\arraybackslash}p{#1}}

\let\emptyset\varnothing
\allowdisplaybreaks
\begin{document}
%
\title{Algorithms for reconstruction over single and multiple deletion channels\footnote{This work was presented in part at ISIT 2018 \cite{Srini2018} and ISIT 2019 \cite{Srini2019}. This work was supported in part by NSF grants 1705077, 1740047 and
UC-NL grant LFR-18-548554.}}
%
%
%

\author{Sundara~Rajan~Srinivasavaradhan,
        Michelle~Du,
        Suhas~Diggavi and Christina~Fragouli
\thanks{All the authors are with the Department
of Electrical and Computer Engineering, University of California Los Angeles,
CA, 90095, USA. Email: \{sundar, michelleruodu, suhas.diggavi, christina.fragouli\}@ucla.edu.}
}

\maketitle

\begin{abstract}
Recent advances in DNA sequencing technology and DNA storage systems have rekindled the interest in deletion channels. Multiple recent works have looked at variants of sequence reconstruction over a single and over multiple deletion channels, a notoriously difficult problem due to its highly combinatorial nature. Although works in theoretical computer science have provided algorithms which guarantee \textit{perfect reconstruction} with multiple independent observations from the deletion channel, they are only applicable in the large blocklength regime and more restrictively, when the number of observations is also large. Indeed, with only a few observations, perfect reconstruction of the input sequence may not even be possible in most cases. In such situations, maximum likelihood (ML) and maximum aposteriori (MAP) estimates for the deletion channels are natural questions that arise and these have remained open to the best of our knowledge. In this work, we take steps to answer the two aforementioned questions. Specifically: 1. We show that solving for the ML estimate over the single deletion channel (which can be cast as a discrete optimization problem) is equivalent to solving its relaxation, a continuous optimization problem; 2. We exactly compute the symbolwise posterior distributions (under some assumptions on the priors) for both the single as well as multiple deletion channels. As part of our contributions, we also introduce tools to visualize and analyze error events, which we believe could be useful in other related problems concerning deletion channels.
\end{abstract}

\begin{IEEEkeywords}
Deletion channels, Trace reconstruction, symbolwise MAP, Edit graph, Dynamic programming
\end{IEEEkeywords}

%
\IEEEpeerreviewmaketitle

\section{Introduction}
\label{sec:intro}

Sequence reconstruction over deletion channels, both with and without a codebook, has received considerable attention in the information theory as well as in the theoretical computer science literature. From an information theory perspective, reconstruction over the deletion channel, or more specifically a  maximum-likelihood (ML) argument for the deletion channel, would give further insight on the capacity of the deletion channel, a long-standing open problem (see \cite{mitzenmacher2009survey}). To quote \cite{mitzenmacher2009survey} -- ``at the very least, progress in this direction would likely surpass previous results on the capacity of the deletion channels''. Yet, there are no results on reconstruction over a deletion channel with statistical guarantees. In this work, we take steps in this direction.

In this space, the problem of \textit{trace reconstruction}, as introduced in \cite{Batu2004}, has also received renewed interest in the past few years (see \cite{Holenstein2008,Peres2017}, \cite{De2017}, \cite{holden18}, \cite{Nazarov:2017}, \cite{holden2018lower}, \cite{chase2019new}). The problem of trace reconstruction can be stated simply as follows: consider a sequence $X$ which is simultaneously passed through $t$ independent deletion channels to yield $t$ output subsequences (also called \textit{traces}) of $X$ (see Fig.~\ref{fig:tdeletion}). How many such traces are needed to reconstruct $X$ perfectly? A variety of upper and lower bounds for this problem have been proposed, both for worst case and average case reconstruction. Our problem formulation is complementary to this, as we discuss next.
 
\begin{figure}[!h]
\begin{center}
\includegraphics[scale=0.2]{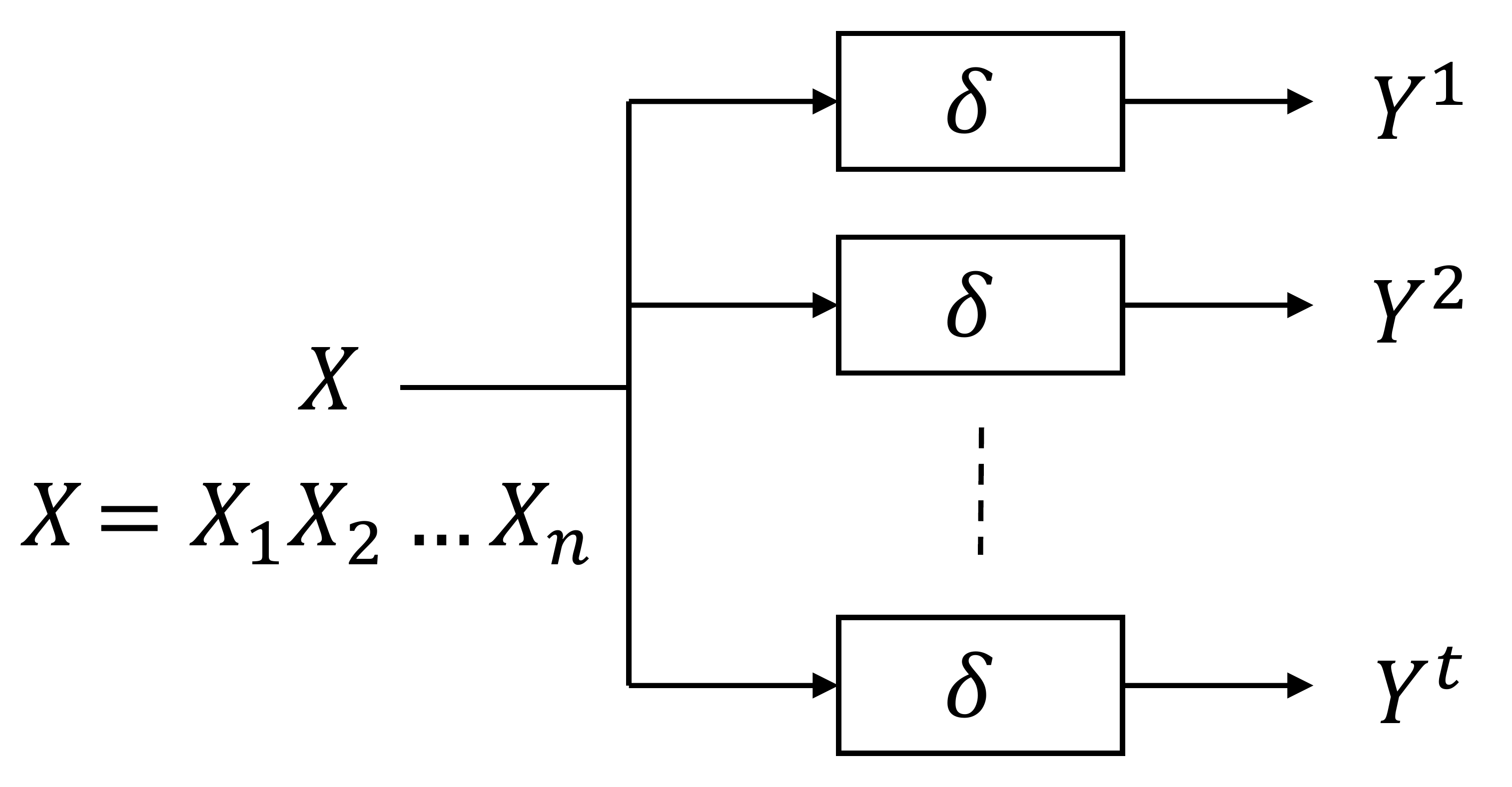}
\caption{The $t$-trace deletion channel model: the  sequence $X$ is passed through $t$ independent deletion channels to yield $t$ \textit{traces}. We aim to estimate $X$ from the $Y^{i}$s.}
\label{fig:tdeletion}
\end{center}
\end{figure}
 
\noindent \textbf{Problem formulation.}  Given an input sequence of length $n$ (known apriori), the independently and identically distributed (i.i.d.) deletion channel deletes each input symbol indepedently with probability $\delta$, producing at its output a subsequence of the input sequence. Consider a sequence $X$ passed through $t$ ($t$ is fixed) such deletion
channels as shown in
Fig.~\ref{fig:tdeletion}. We call this the $t$-trace deletion channel model. We ask four main questions:

\begin{figure}[!h]
\begin{center}
\includegraphics[scale=0.2]{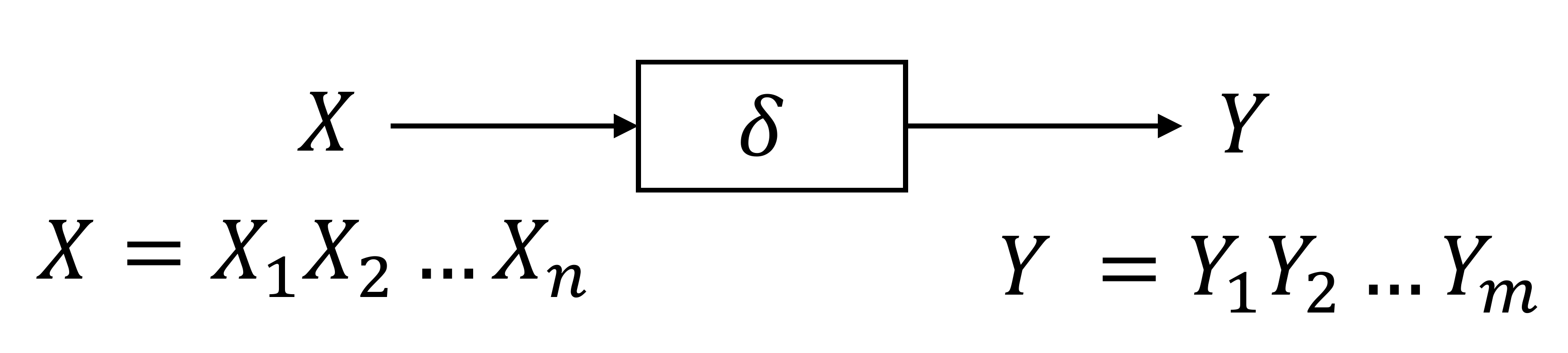}
\caption{The single-trace deletion channel model.}
\label{fig:1deletion}
\end{center}
\end{figure}
\begin{enumerate}[leftmargin = *]
\item \textbf{Sequencewise maximum-likelihood with one trace:} For $t=1$ (also called \textit{single-trace deletion channel}, see Fig.~\ref{fig:1deletion}), what is the maximum-likelihood estimate of $X$ having observed $Y=y$, i.e., a solution to $\argmax\limits_{x\in \{0,1\}^n}\ \Pr(Y=y|X=x)$.
\vspace{8pt}
\item \textbf{Sequencewise maximum-likelihood with multiple traces:}  For a fixed $t$, with $t>1$, what is the maximum-likelihood estimate of $X$ having observed $Y^{1}=y^1,Y^{2}=y^2,...,Y^{t}=y^t $, i.e., $$\argmax\limits_{x\in \{0,1\}^n}\ \Pr(Y^{1}=y^1,Y^{2}=y^2,...,Y^{t}=y^t|X=x).$$
\vspace{-12pt}
\item  \textbf{Symbolwise MAP with one trace:} For $t=1$  and  $X_i \sim\ \text{ind. Ber}(p_i)$ in Fig.~\ref{fig:1deletion}, what are the posterior distributions of $X_i$ given the trace $Y=y$, i.e., compute $\Pr(X_i=\alpha|Y=y)$.
\item \textbf{Symbolwise MAP with multiple traces:} For a fixed $t$, with $t>1$   and $X_i \sim\ \text{i.i.d. Ber}(0.5)$  in Fig.~\ref{fig:tdeletion}, what are the posterior distributions of $X_i$ given all traces $Y^{1}=y^1, Y^{2}=y^2,...,Y^{t}=y^t$, i.e., compute $\Pr(X_i=\alpha|Y^{1}=y^1, Y^{2}=y^2,...,Y^{t}=y^t)$.
\end{enumerate}

\vspace{5mm}

\noindent We make a few notes.

\begin{itemize}[leftmargin = *]
\item For a channel with memory such as the deletion channel, the symbolwise MAP/ML estimate and sequencewise MAP/ML estimate are not equivalent. For example, consider $t=1$, $n =6$ in Fig.~\ref{fig:1deletion} and say we observe the trace $Y = 1010$. The symbolwise MAP estimate with uniform priors for this case can be computed to be $\hat X_{smap} = 100110$ whereas the sequencewise ML estimate is $\hat X_{ml} = 101010$.
\item An answer to  3) above doesn't lead to a natural solution for 4) which is also due to deletion channels possessing memory. In particular, for a memoryless channel, we have $Y^{j}_i - X_i - Y^{k}_i$ and hence $\Pr(X_i=\alpha|Y^{j}, Y^{k}) \propto \Pr(Y^{j}_i, Y^{k}_i|X_i=\alpha) = \Pr(Y^{j}_i|X_i=\alpha) \Pr(Y^{k}_i|X_i=\alpha)\propto \Pr(X_i=\alpha| Y^{j}) \Pr(X_i=\alpha | Y^{k})$; so one could first obtain the posterior probabilities from each independent observation and combine them after. However, this is not the case for deletion channels since the markov chain $Y^{j}_i - X_i - Y^{k}_i$ no longer holds. As a result, one first needs to  ``align'' all the observations in order to compute the likelihoods.
\item Solving 2) and 4) naturally leads to two different algorithms for average-case trace reconstruction -- one that selects the most likely sequence $X$ and the other that selects the most likely value for each symbol $X_i$. However, the problem formulations in 3) and 4) ask a question complementary to that of trace reconstruction: given a fixed (possibly a few) number of traces, what is our ``best'' guess of $X$? The two problems 2) and 4) have different quantification of the word ``best''. Unlike trace reconstruction, we are not concerned with perfect reconstruction (since perfect reconstruction may not be possible with just a few traces). We also note that error rate guarantees for our algorithms (not a part of this work) would naturally lead to upper bounds for trace reconstruction.
\item The challenges associated with solving 1) and 2) and solving 3) and 4) are very different. On the one hand, solving 1) and 2) amounts to discovering alternate, equivalent or approximate formulations for the seemingly difficult discrete optimization problems. On the other hand, the challenge with 3) and 4) involves the design of efficient algorithms that are capable of exactly computing/approximating the symbolwise posterior probabilities, for which ``closed form'' expressions can be derived.
\end{itemize}

\vspace{5mm}
\noindent \textbf{Contributions.}
Our main contributions are as follows. 
\begin{itemize}[leftmargin=*]
\item We introduce mathematical tools and constructs to visualize and analyze single-trace and $t$-trace deletion error events (see Section~\ref{sec:notation}).  
\item For the single-trace deletion channel, we establish an equivalence between finding the optimal ML decoder and a continuous optimization problem we introduce (see Section~\ref{sec:1deletion_ML}).  This equivalence allows for the use of existing techniques for continuous optimization to be employed for a seemingly difficult discrete optimization problem.  This continuous optimization problem also turns out to be a signomial optimization.  Furthermore we also provide a polynomial time trace reconstruction heuristic with multiple traces that exploits this formulation. 
\item In Section~\ref{sec:1deletion}, we prove the following:
\begin{theorem}
For the single-trace deletion channel model with priors $X_i \sim \text{ind. Ber}(p_i)$ and observed trace $Y=y$, the symbolwise posterior probabilities $\Pr(X_i=1|Y=y)\ \forall\ i$ can be computed in $O(n^2)$ time complexity.
\end{theorem}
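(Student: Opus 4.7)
The plan is to establish the theorem by constructing an explicit $O(n^{2})$ forward--backward algorithm on an ``edit graph'' trellis, in the spirit of BCJR, whose edge weights encode both the channel statistics and the per-position Bernoulli priors.

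First I would rewrite the likelihood in a form amenable to dynamic programming. For any $x\in\{0,1\}^{n}$ producing a trace of length $m$, we have $\Pr(Y=y\mid X=x)=\delta^{\,n-m}(1-\delta)^{m}\,N(x,y)$, where $N(x,y)$ is the number of ways $y$ embeds as a subsequence of $x$. Setting $w_{j}(1):=p_{j}$ and $w_{j}(0):=1-p_{j}$ and using independence of the priors, summation over $x$ gives
\[
\Pr(Y=y)=\delta^{\,n-m}(1-\delta)^{m}\!\!\sum_{1\le i_{1}<\cdots<i_{m}\le n}\,\prod_{k=1}^{m} w_{i_{k}}(y_{k}).
\]
Each increasing tuple $(i_{1},\dots,i_{m})$ is an embedding of $y$ into $x$ and corresponds to a source-to-sink path in an edit graph on vertices $(a,b)$ for $0\le a\le n$, $0\le b\le m$, whose edges are either horizontal, $(a-1,b)\to(a,b)$ of weight $\delta$ (``delete position $a$''), or diagonal, $(a-1,b-1)\to(a,b)$ of weight $(1-\delta)\,w_{a}(y_{b})$ (``preserve $x_{a}=y_{b}$''). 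A direct check shows that the sum of path weights from $(0,0)$ to $(n,m)$ equals $\Pr(Y=y)$.

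Next I would tabulate, by the standard two-term recurrences, forward and backward sums $F(a,b)$ and $B(a,b)$ of path weights from $(0,0)$ to $(a,b)$ and from $(a,b)$ to $(n,m)$ respectively, in $O(nm)=O(n^{2})$ time. To extract the symbolwise posterior I would partition paths by the type of edge used in column $i$: a horizontal edge there leaves $X_{i}$ free and so contributes the prior $p_{i}$ to the event $\{X_{i}=1\}$; a diagonal edge from row $b-1$ to $b$ pins $X_{i}=y_{b}$ and therefore contributes iff $y_{b}=1$. Combining the two cases yields the closed form
\[
\Pr(X_{i}=1,Y=y)=p_{i}\,\delta\sum_{b=0}^{m} F(i-1,b)\,B(i,b)\;+\;p_{i}(1-\delta)\!\!\sum_{b:\,y_{b}=1}\! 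F(i-1,b-1)\,B(i,b),
\]
which is $O(m)$ per index $i$. Dividing by $\Pr(Y=y)=F(n,m)$ produces $\Pr(X_{i}=1\mid Y=y)$, and sweeping $i=1,\dots,n$ adds another $O(nm)=O(n^{2})$ operations, establishing the claim.

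The principal conceptual hurdle is the first step: recognizing that (i) $\Pr(Y=y\mid X=x)$ factors cleanly through the combinatorial quantity $N(x,y)$ and (ii) independence of the priors lets one absorb the per-position Bernoulli weights into the edge weights of a \emph{fixed} trellis that does not depend on $x$. Once this reformulation is in hand the forward--backward machinery is routine; the only subtlety is the asymmetric treatment of horizontal versus diagonal edges at column $i$ under the event $\{X_{i}=1\}$, which differs from a vanilla BCJR application since horizontal edges carry no information about $X_{i}$ while diagonal edges determine it outright.
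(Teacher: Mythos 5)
Your proof is correct and is essentially the paper's own argument in different clothing: your forward/backward tables $F(a,b)$ and $B(a,b)$ are, up to the $\delta^{\cdot}(1-\delta)^{\cdot}$ factors that cancel in the ratio, exactly the paper's prefix and suffix relaxed binomial coefficients $\mathbf F(p_{[1:a]},y_{[1:b]})$ and $\mathbf F(p_{[a+1:n]},y_{[b+1:m]})$, and your partition of paths by the edge type at column $i$ is the paper's case split on whether index $i$ is deleted or matched to some $y_k$ (Theorem~\ref{thm:approx_smap_postprob}), with your convolution $\sum_b F(i-1,b)B(i,b)$ recovering the term $\mathbf F(p_{[n]\backslash\{i\}},y)$. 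The trellis/BCJR framing versus the paper's algebraic subset-splitting is purely presentational; the algorithm and the $O(n^2)$ accounting are the same.
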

 \item  In Section~\ref{sec:exactsmap}, we prove the following:
\begin{theorem}
For the $t$-trace deletion channel model with priors $X_i \sim \text{i.i.d. Ber}(0.5)$ and observed traces $Y^{1}=y^1,...,Y^{t}=y^t$, the symbolwise posterior probabilities $\Pr(X_i = 1|Y^{1}=y^1,...,Y^{t}=y^t)\ \forall\ i$ can be computed in $O(2^t n^{t+2})$ time complexity.\\
\end{theorem}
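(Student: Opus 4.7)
The plan is to reduce the problem, via Bayes' rule, to computing two weighted sums over inputs $x$ and their alignments with the traces, and then to evaluate those sums with a multi-dimensional dynamic program on a joint edit graph. Since the prior is uniform, $\Pr(X) = 2^{-n}$ cancels, and we get
\[
\Pr(X_i=1\mid Y^1=y^1,\ldots,Y^t=y^t) \;=\; \frac{\sum_{x:\,x_i=1}\prod_{k=1}^t \Pr(Y^k=y^k\mid X=x)}{\sum_{x}\prod_{k=1}^t \Pr(Y^k=y^k\mid X=x)},
\]
so the task is to evaluate the numerator and denominator for every $i$.

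Next I would introduce a $(t{+}1)$-dimensional edit graph whose nodes are tuples $(i,j_1,\ldots,j_t)$ with $i\in\{0,\ldots,n\}$ and $j_k\in\{0,\ldots,|y^k|\}$, interpreted as ``the first $i$ symbols of $x$ have been emitted and have produced the prefixes $y^k_{1:j_k}$ of the traces.'' From each node there are $2^t$ outgoing edges, indexed by the subset $S\subseteq\{1,\ldots,t\}$ of traces in which the next symbol $x_{i+1}$ is deleted; traces outside $S$ force $x_{i+1}=y^k_{j_k+1}$, so either the forced values agree (in which case $x_{i+1}$ is determined and the edge exists with weight $\delta^{|S|}(1-\delta)^{t-|S|}$) or they disagree (edge forbidden), while $S=\{1,\ldots,t\}$ gives a free coin flip and splits the weight evenly between $x_{i+1}=0$ and $x_{i+1}=1$. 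A straightforward bijection, using the alignment tools of Section~\ref{sec:notation}, shows that the sum of path weights from the source $(0,\ldots,0)$ to the sink $(n,|y^1|,\ldots,|y^t|)$ equals the joint probability $2^n\Pr(Y^1=y^1,\ldots,Y^t=y^t)$, and analogously for any sub-sum restricted to passing through edges labelled $x_i=1$.

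I would then run a forward-backward pass on this graph: forward values $F(i,j_1,\ldots,j_t)$ and backward values $B(i,j_1,\ldots,j_t)$ are computed by a standard product-over-paths recursion. Since there are $O(n^{t+1})$ nodes and $O(2^t)$ edges per node, each pass costs $O(2^t n^{t+1})$ time. For each $i$, the numerator of the posterior is obtained by summing $F(i{-}1,\mathbf{j})\cdot w\cdot B(i,\mathbf{j}')$ over the edges from level $i{-}1$ to level $i$ whose label is $x_i=1$, which takes $O(2^t n^{t+1})$. Doing this for all $n$ values of $i$ yields the claimed $O(2^t n^{t+2})$ bound; the denominator is the analogous unrestricted sum and is obtained in the same pass.

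The main obstacle I anticipate is the consistency bookkeeping in the transition rule: when $|S^c|\ge 2$, several non-deleted traces jointly constrain $x_{i+1}$, and one must prove that the product structure over traces interacts correctly with the shared underlying variable $x_{i+1}$, so that each input $x$ is counted exactly once with the right multiplicity $\prod_k(\text{number of deletion patterns of }x\text{ producing }y^k)$. Establishing this bijection between graph paths and $(x,\text{alignments})$-tuples is really the content of the proof; once it is in place, the complexity claim follows by counting nodes and edges in the joint edit graph.
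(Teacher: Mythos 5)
Your proposal is correct, but it takes a genuinely different route from the paper. The paper never builds a trellis over the input position: it converts the product $\prod_k {x \choose y^k}$ into a single sum $\sum_w \langle y^1\uparrow\cdots\uparrow y^t,w\rangle {x \choose w}$ via the infiltration identity (Lemma~\ref{lemma:bin_inf_relation}), evaluates $\sum_{x:\,x_i=a}{x \choose w}$ in closed form using the uniform prior (Lemma~\ref{lemma:smapsum}), and is then left with computing $\sum_{w:|w|=k}\langle\cdots,w\rangle$ and $\sum_{w:|w|=k,\,w_j=1}\langle\cdots,w\rangle$ by forward/reverse path-counting polynomials on the $t$-dimensional edit graph of the traces alone, tracking path length $k$ and edge position $j$ explicitly. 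Your $(t{+}1)$-dimensional trellis with state $(i,j_1,\ldots,j_t)$ absorbs the input coordinate into the graph, which makes the length and position bookkeeping automatic (every path has exactly $n$ edges, and "the $i$-th symbol" is just "the edge leaving level $i{-}1$") and dispenses with both lemmas; the bijection you flag as the crux — between source-to-sink paths and pairs $(x,\text{tuple of embeddings of }y^k\text{ into }x)$ — is exactly the multiplicity statement $\prod_k{x\choose y^k}=\#\{\text{embedding tuples}\}$ and is routine to discharge, with the minor bookkeeping caveat that the all-deleted transition should carry two edges (one per value of $x_{i+1}$) or an evenly split weight, a constant that cancels in the posterior ratio anyway. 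What each approach buys: the paper's route produces the explicit closed-form posterior of Theorem~\ref{thm:exactSMAP_posteriorprob} and ties into the channel-decomposition/infiltration machinery reused elsewhere; your route is more elementary, actually yields the slightly sharper bound $O(2^t n^{t+1})$ if the per-$i$ sums are restricted to the $O(2^t n^t)$ edges between consecutive levels, and — unlike the paper's Lemma~\ref{lemma:smapsum}, which hard-wires the uniform prior into the counts $2^{n-k}{n\choose k}$ — extends immediately to independent non-uniform priors $X_i\sim\text{Ber}(p_i)$ by reweighting the free and forced symbol choices, a case the paper explicitly defers to future work.
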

\end{itemize}

\noindent \textbf{Tools and techniques.} In terms of theoretical tools, the series of books by Lothaire (\cite{lothaire1997combinatorics,lothaire2002algebraic,lothaire2005applied}) extensively use algebraic tools  for problems in the combinatorics of sequences (or \textit{words}), and our work is inspired by such techniques. We borrow some notation and leverage a few of their results in our work. \\

\noindent \textbf{Biological motivation.} Trace reconstruction in itself was motivated, in part, by problems in DNA sequence reconstruction. One such problem was to infer the DNA sequence of a common ancestor from the samples of its descendants. Our problem definition, that considers a fixed value of $t$, would fit naturally in a scenario with a fixed number of descendants where perfect reconstruction may not be possible.
Our motivation for considering this problem also comes from a recent DNA sequencing technology called \textit{nanopore sequencing}. The $t$-trace deletion channel model is a simplistic model to approximately capture the process of a DNA sequence passed through a nanopore sequencer\footnote{As seen in
  \cite{Mao2017},\cite{MDK17} there are more complicated effects of
  the nanopore reader not captured in this simple 
  representation.}.  \\

\noindent \textbf{More related work.}
Our work falls under the general umbrella of  sequence reconstruction over deletion channels (also see Levenshtein's work \cite{levenshtein2001efficient}), where we offer, to the best of our knowledge, the first non-trivial results on maximum likelihood and maximum aposteriori estimates for the single and multiple deletion channel. As mentioned earlier, the complementary problem of trace reconstruction falls closest to this work.

The deletion channel by itself is known to be notoriously difficult to analyse. As stated earlier, the capacity of a single deletion channel is still unknown (\cite{diggavi2007capacity,diggavi2006information,diggavi2001transmission}); as are optimal coding schemes. Prior works have looked at the design of codes for deletion channels (\cite{ratzer2005marker,ratzer2000codes,thomas2017polar}); these works consider use of a codebook (we do not). Statistical estimation over deletion channels is a difficult problem to analyze due its highly combinatorial nature. To the best of our knowledge, as yet there are no efficient estimation algorithms over deletion channels with statistical guarantees.

Very recently, a variant of the trace reconstruction problem called \textit{coded trace reconstruction} has been proposed, motivated by portable DNA-based data storage systems using DNA nanopores (see \cite{abroshan2019coding}, \cite{cheraghchi2019coded}, \cite{brakensiek2019coded}) and we believe that the ideas in this work may prove useful in such a setting.

There are other works on sequence assembly (see for example, \cite{Li09fastand}, \cite{Shomorony2016}),
where multiple short reads (from different segments of a sequence) are used to reconstruct the bigger
sequence. This work differs from sequence assembly since we are interested in inferring the entire length sequence and not  just small segments of it (which are then ``stitched'' together in sequence assembly).\\

\noindent \textbf{Paper Organization.} Section~\ref{sec:notation} introduces our notation and visualization tools for the single and $t$-trace  channel error events; Section~\ref{sec:1deletion_ML} provides a result concerning questions 1) and 2) wherein we prove the equivalence of ML decoding in question 1) to solving a continuous optimization problem; Section~\ref{sec:1deletion} answers question 3) for the single-trace channel;
 Section~\ref{sec:exactsmap}) answers question 4) for the  $t$-deletion channel;  Section~\ref{sec:Numerics} gives numerical evaluations; and Section~\ref{sec:conclusions} concludes the paper.

\section{Notation and Tools}
\label{sec:notation}

\noindent \textbf{Basic notation:} We borrow some notation  from \cite{lothaire1997combinatorics} which deals with non-commutative algebra; we restate them here for convenience. Calligraphic letters refer to sets, capitalized letters
correspond to random variables and bold letters are used for functions.  Let $\mathcal{A}$ be the set of all
symbols. Throughout this work, we will focus on the case where $\mathcal{A}=\{0,1\}$,
though  our methods extend to arbitrarily large sets of finite size. Define
$\mathcal{A}^n$ to be the set of all $n$-length sequences and $\mathcal{A}^*$ to be the set of all finite length sequences with
symbols in $\mathcal{A}$. For a sequence $f$, $|f|$ denotes the length
of $f$. 

For integers $i,j$, we define $[i:j] \triangleq \{i,i+1,...,j\}$ if $j\geq i$ and $[i:j] \triangleq \emptyset$ otherwise. We also define $[i] \triangleq [1:i]$.

For a vector or sequence $x=(x_1,x_2,...,x_{i-1},x_i,x_{i+1},...,x_n)$, define $$x^{(i\rightarrow s)}\triangleq (x_1,x_2,...,x_{i-1},s,x_{i+1},...,x_n),$$ where the $i^{th}$ coordinate of $x$ is replaced by symbol $s$. 	\\

\noindent \textbf{Binomial coefficient (section 6.3 in \cite{lothaire1997combinatorics}):} Given sequences $f$ and $g$ in $\mathcal{A}^*$, the number of
subsequence patterns of $f$ that are equal to $g$ is called the
\textit{binomial coefficient} of $g$ in $f$ and is denoted by $f\choose g$.  For example, ${'apple' \choose 'ape'} = 2$ since $'ape'$ can be obtained from two (overlapping) subsequences of $'apple'$. This quantity has also been referred to as the \textit{embedding number} by another line of work \cite{elzinga2008algorithms}. For two sequences of lengths $n$ and $m$, the binomial coefficient can be computed using a dynamic programming approach in $O(nm)$ (see \cite{elzinga2008algorithms} or Proposition 6.3.2 in \cite{lothaire1997combinatorics}).
When the alphabet $\mathcal{A}$ is of cardinality 1, ${f \choose g} = {|f| \choose |g|}$, the classical binomial coefficient with their respective lengths as the parameters. This definition hence
could be thought of as a generalization of the classical binomial
coefficients. We will denote by $e$ the sequence of length 0, and {define ${f
  \choose e} \triangleq 1\ \forall\ f\ \in\ \mathcal{A}^*$.} We also define the classical binomial coefficient ${a \choose b}\triangleq 0,$ whenever $b>a$ or $b<0$ for ease of use.

The binomial coefficient forms the backbone for the probabilistic analysis of deletion channels since the input-output relation for a deletion channel (with deletion probability $\delta$, input $X$ and output $Y$) can be expressed as 
\begin{equation}
\label{eq:in-out_relation}
\Pr(Y=y|X=x) = {x \choose y} \delta^{|x|-|y|} (1-\delta)^{|y|}.
\end{equation}
The proof is straightforward -- the number of distinct error events that give rise to $y$ from $x$ is exactly the number of  subsequences of $x$ which are equal to $y$. Each of these error events has a probability $\delta^{|x|-|y|} (1-\delta)^{|y|}$, wherein the exponent of $\delta$ corresponds to the deleted symbols and the exponent of  $1-\delta$ to the undeleted symbols. \\

\noindent \textbf{Maximum Likelihood (ML) estimate:}
Given the definition of the binomial coefficient, the maximum-likelihood (ML) estimate over a deletion channel with observed output $Y=y$ can be cast in the following form:
\begin{align*}
\argmax_{x \in \{0,1\}^n}  {x \choose y}.\numberthis
\label{eq:ML_deletion}
\end{align*}
In the case of multiple deletion channels with observed traces $Y^{1}=y^1,...,Y^{t}=y^t$, the ML formulation is similar:
\begin{align*}
\argmax_{x \in \{0,1\}^n}  \prod_{j=1}^{t} {x \choose y^{j}}.\numberthis
\label{eq:ML_multiple_deletion}
\end{align*}
As yet, there is no known efficient way to come up with a solution for either of the above two formulations (see \cite{mitzenmacher2009survey}).\\

\noindent \textbf{Relaxed binomial coefficient.}
We now introduce the function  $\mathbf F(\cdot)$  which can be thought of as a real-valued relaxation of the binomial coefficient. This function is used in sections~\ref{sec:1deletion_ML} and~\ref{sec:1deletion}.

An intuitive definition is as follows: Consider a random vector $Z\in\{0,1\}^n$ such that $Z_i\sim$ ind. Ber$(p_i)$, and let $p$ be the vector of probabilities of length $n$. Then  $\mathbf F(p,v)=\mathbb E_{Z\sim p}\ {Z \choose v}$, i.e., $\mathbf F(p,v)$ is the expected number of times $v$ appears as a subsequence of $Z$. If $p \in \{0,1\}^n$, then $Z=p$ with probability 1 and $\mathbf F(p,v) = {p \choose v}$.
More precisely, $\mathbf F(\cdot)$ is defined as:
\begin{definition}
\label{def:f}
\begin{align*}
&\mathbf F: [0,1]^n \times \{0,1\}^m \rightarrow \mathbb{R},\\
\mathbf F(p,  v)\triangleq &\begin{cases} 
      \sum\limits_{\substack{\mathcal S|\mathcal S\subseteq [n],\\|\mathcal S|=m}} \quad \prod\limits_{i=1}^m p_{\mathcal S_i}^{v_i} (1-p_{\mathcal S_i})^{1-v_i} & 1 \leq m\leq n \\
	  1 &  0=m\leq n \\
      0 & \text{else}.  
   \end{cases}
\end{align*}
\end{definition}
Though at first sight $\mathbf F(p,v)$ sums over an exponential number of subsets, a dynamic programming approach can be used to compute it in $O(nm)$ time complexity (see Appendix~\ref{app:F_compute}). Note that this is the same complexity as computing the binomial coefficient.\\

\noindent \textbf{Decomposition of the $t$-trace deletion channel:}
 The following definitions and ideas are relevant to the results pertaining to multiple traces. We first state a result that aids in thinking about error events in multiple deletion channels.

The events occurring in the $t$-deletion channel model can be categorized into two groups:
\begin{enumerate}
\item an input symbol is deleted in \textit{all} the $t$-traces,
\item an input symbol is reflected in at least one of the traces.
\end{enumerate}
The error events of the first kind are in some sense ``not correctable'' or even ``detectable'' in any situation since it is impossible to tell with absolute certainty what and where the deleted symbol could have been (although the probabilities need not be uniform). The events of the second kind, however, can be detected and corrected in some situations. This thought process gives rise to a natural decomposition of the $t$-deletion channel model into a cascade of two channels: the first one being a deletion channel which captures error events of the first kind and the second one is what we call the \textit{remnant channel} which captures events of the second kind (see Fig.~\ref{fig:channel_equiv}). More precisely, we define the remnant channel as follows:
\begin{definition}
\textit{Remnant channel:} an input symbol to the remnant channel is reflected in any $k>0$ uniformly random traces and deleted in the rest with a probability ${t \choose k} \frac{\delta^{t-k}(1-\delta)^k}{1-\delta^t}$. Thus, the probability of an input symbol reflected in a \textit{fixed} set of $k>0$ traces is equal to $\frac{\delta^{t-k}(1-\delta)^k}{1-\delta^t}$.
\end{definition}
Note that probability of the union of all possible events here is $\sum_{k=1}^t {t \choose k} \frac{\delta^{t-k}(1-\delta)^k}{1-\delta^t}=1$, validating our definition.

\vspace{-5mm}

\begin{figure}[!h]
\begin{center}
\includegraphics[scale=0.4]{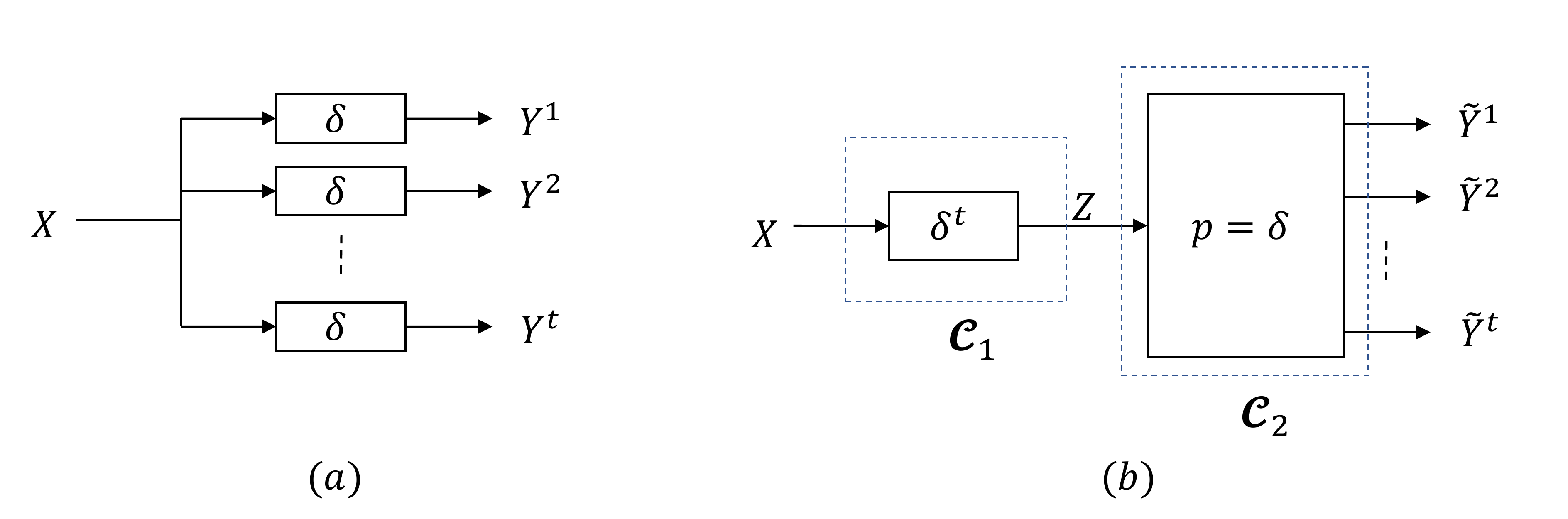}
\caption{A channel equivalence result: the $t$-trace deletion channel model in (a) is probabilistically equivalent to the the cascade of a deletion channel with the \textit{remnant channel} ($ \mathcal C_2$) in (b).}
\label{fig:channel_equiv}
\end{center}
\end{figure}
\vspace{-10mm}
\begin{restatable}{theorem}{channelequiv}
\label{thm:channel_equiv}
The $t$-deletion channel model and the cascade of the deletion channel with remnant channel shown in Fig.~\ref{fig:channel_equiv} are probabilistically equivalent, i.e., 
$$\Pr({Y}^{1}=y^1,{Y}^{2}=y^2,...,{Y}^{t}=y^t|X = x) = \Pr(\tilde{Y}^{1}=y^1,\tilde{Y}^{2}=y^2,...,\tilde{Y}^{t}=y^t|X = x).$$
\end{restatable}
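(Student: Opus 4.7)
The plan is to exhibit a common combinatorial parameterization of the randomness in both channel models and then show that the induced distributions coincide symbol by symbol. Given $X = x$ with $|x| = n$, I will call an \emph{error pattern} any tuple $\pi = (S_1, \ldots, S_n)$ with $S_i \subseteq [t]$, with the interpretation that symbol $x_i$ is reflected in trace $j$ if and only if $j \in S_i$. Observe that the pair $(x, \pi)$ deterministically produces the output traces: each $y^j$ is the subsequence of $x$ indexed by $\{i : j \in S_i\}$, taken in order. Consequently, both sides of the claimed identity equal the sum of $\Pr(\pi \mid x)$, under the respective model, taken over all error patterns $\pi$ that yield the prescribed $(y^1,\ldots,y^t)$. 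It therefore suffices to show that the two models induce the same distribution on $\pi$.

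For the direct $t$-trace deletion channel, the $t$ deletion channels are independent and memoryless, so the events ``$j \in S_i$'' are mutually independent across both $i$ and $j$, each with probability $1-\delta$. This gives $\Pr_{\mathrm{direct}}(\pi \mid x) = \prod_{i=1}^{n} \delta^{t-|S_i|}(1-\delta)^{|S_i|}$. For the cascade I will derive the same per-symbol marginal in two stages: the first channel deletes $x_i$ independently with probability $\delta^t$ (which exactly matches $S_i = \emptyset$); conditionally on survival, the remnant channel assigns $S_i = S$ for any nonempty $S \subseteq [t]$ with probability $\delta^{t-|S|}(1-\delta)^{|S|}/(1-\delta^t)$, independently across the surviving input positions. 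Multiplying the two stages yields
$$\Pr_{\mathrm{cascade}}(S_i = S \mid x) = \begin{cases} \delta^t & S = \emptyset, \\ \delta^{t-|S|}(1-\delta)^{|S|} & S \neq \emptyset, \end{cases}$$
which matches the direct per-symbol expression (noting $\delta^{t-0}(1-\delta)^0 = \delta^t$). Combined with the independence across $i$, this gives $\Pr_{\mathrm{cascade}}(\pi \mid x) = \prod_{i=1}^{n} \delta^{t-|S_i|}(1-\delta)^{|S_i|} = \Pr_{\mathrm{direct}}(\pi \mid x)$, and summing over all $\pi$ consistent with $(y^1,\ldots,y^t)$ closes the argument.

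I expect the main subtlety to be justifying the independence of the $S_i$ in the cascade, since the remnant channel as defined operates on the intermediate (random-length) output of the first stage rather than on the original indices of $x$. The cleanest way to handle this is to observe that because the first stage is an i.i.d.\ deletion channel and the remnant channel acts independently on each of its inputs, the pair (survival indicator for $x_i$, remnant reflection set if it survived) is an i.i.d.\ product across $i \in [n]$. Once this factorization is established, the equality of per-symbol distributions derived above immediately upgrades to equality of the joint distributions on error patterns, and hence of the observed trace probabilities.
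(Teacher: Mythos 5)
Your proof is correct and follows essentially the same route as the paper's: the paper likewise reduces the claim to showing that the per-input-symbol deletion pattern (your $S_i$, the paper's $D_i \in \{-,+\}^t$) has the same distribution under both models, splitting into the all-deleted case (probability $\delta^t$ in both) and the case of a nonempty reflection set (where the $1-\delta^t$ factors cancel), and then invoking independence across input positions. Your explicit remark on why the $S_i$ remain independent in the cascade, despite the remnant channel acting on the random-length intermediate output, is a point the paper passes over more quickly, but it is the same argument.
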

A rigorous proof of this theorem for arbitrary length sequences  can be found in Appendix~\ref{app:channel_equiv}. A similar, though not equivalent, decomposition has been exploited in \cite{haeupler2014repeated} albeit for the purpose of characterizing the capacity of multiple deletion channels -- there the authors consider deletion patterns which are ``undetectable''; for example, a deletion in the deletion channel $\mathcal C_1$ in the cascade model is undetectable since none of the traces will reflect that input symbol. However, our channel decomposition result does not appear in \cite{haeupler2014repeated}.\\

\noindent\textbf{Edit graph} (\cite{Gusfield1997}): Similar graph constructs have been defined in related problems on common supersequences and subsequences (see \cite{Nicosia2001} for example). This graph is closely related to  the error events in the remnant channel. We start with a simple case and generalize subsequently. Define a directed graph called \textit{edit graph} given two sequences $f$ and $g$, where every path connecting the ``origin'' to the ``destination'' on the edit graph yields a supersequence $h$ of $f,g$, where $h$ is ``covered'' by $f,g$ -- i.e.,  each symbol of $h$ comes from either $f$ or $g$ or
both. In other words, given that $f$ and $g$ are the outputs of the remnant channel (with two outputs), each path from the origin of the edit graph to the destination corresponds to a possible input $h$ to the remnant channel and to an error event  which resulted in outputs $f,g$ with  input $h$.

For $f$ and $g$ in $\mathcal
A^*$, we form a directed graph $\mathcal{G}(f,g)$ with $(|f|+1)(|g|+1)$
vertices each labelled with a distinct pair $(i,j), 0\leq i\leq
|f|,\ 0\leq j\leq |g|$. A directed edge
$(i_1,j_1)\rightarrow(i_2,j_2)$ exists iff at least one of the following
holds: 
\begin{enumerate}
\item$i_2-i_1=1$ and $j_1=j_2$, or
\item $j_2-j_1=1$ and $i_1=i_2$, or
\item  $i_2-i_1=1$, $j_2-j_1=1$ and $f_{i_2}=g_{j_2}$, 
\end{enumerate}
where $f_i$ is the $i^{th}$ symbol of the sequence $f$. The origin is the vertex $(0,0)$ and the destination $(|f|,|g|)$.

\begin{figure}[!h]
\centering
\includegraphics[scale=0.25]{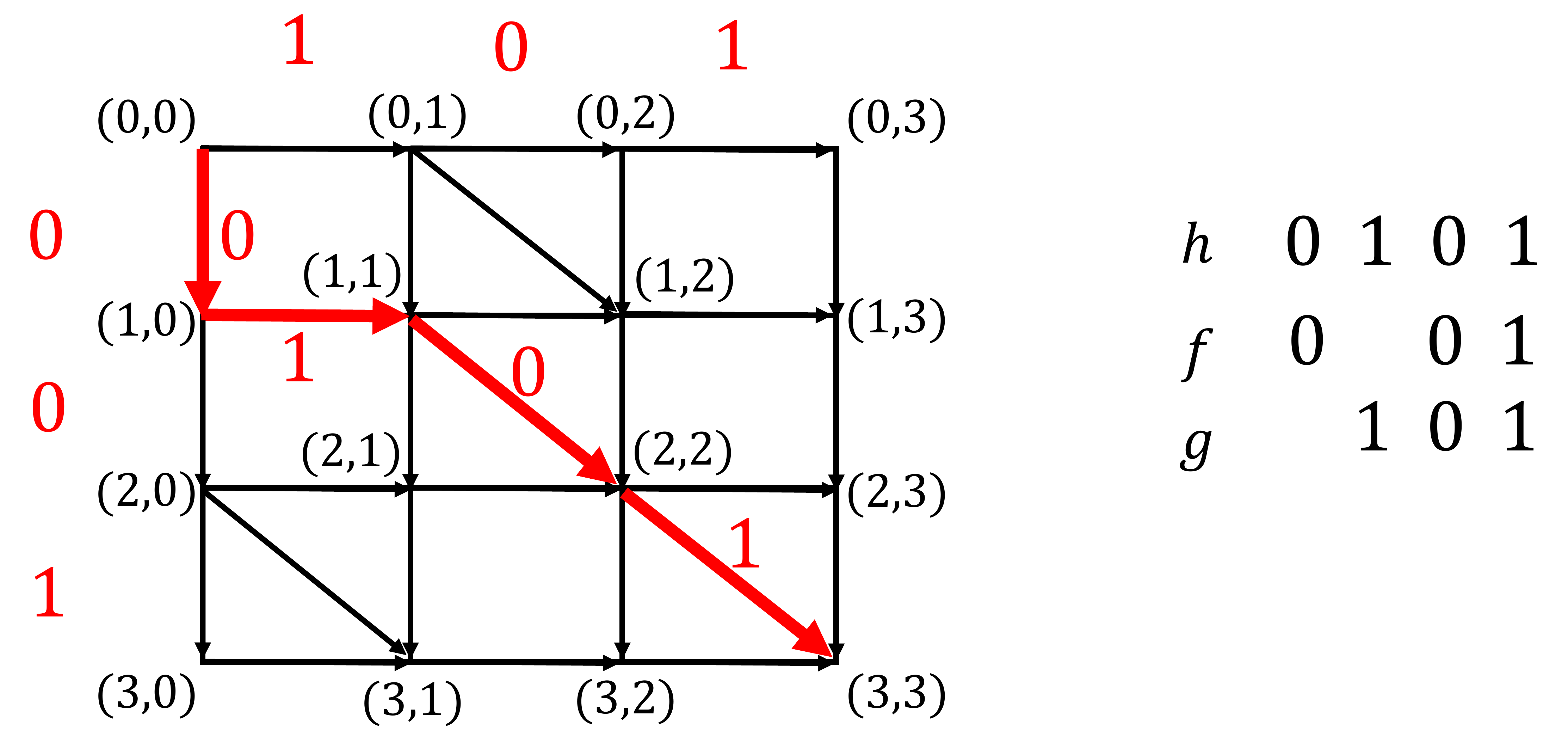}
\caption{ Edit graph for sequences $f=$ `001' and $g=$ `101'. Make a grid so the vertical edges are aligned with a symbol in $f$ and horizontal edges with $g$ as shown. A diagonal edge $(i{-}1,j{-}1)\rightarrow (i,j)$ exists if $f_i = g_j$. The thick red edges form a path from the origin to the destination; this path corresponds to $h=$`0101' -- sequentially append the corresponding symbol to which each edge is aligned. It can also be verified that $h$ is a supersequence of both $f$ and $g$, and could be obtained as a covering of $f$ and $g$; the path itself gives one such covering. This covering also corresponds to an error event (or a deletion pattern) in the remnant channel which would result in outputs $f$ and $g$ with input $h=$ `0101' -- the deletion pattern is shown in the figure.}
\label{fig:editgraph_smap1}
\end{figure}

Let $p=((i_1,j_1),(i_2,j_2),...,(i_m,j_m))$ be a path in $\mathcal{G}(f,g)$. We define $ s(p)$ to be the sequence corresponding to the path. Intuitively, $s(p)$ is formed by appending symbols in the following
way: append the corresponding $f$ symbol for a vertical edge, $g$ symbol for horizontal edge, and $f$ or $g$ symbol for  diagonal edge (see example Fig.~\ref{fig:editgraph_smap1}). Any path from $(0,0)$ to $(|f|,|g|)$ corresponds to a supersequence of $f$ and $g$ and which is covered by $f$ and $g$. More formally, define $ s(p)\triangleq x_1x_2...x_{m-1}$ where    
$$x_k = 
     \begin{cases}
       f_{i_{k+1}} \quad\text{if }j_{k}=j_{k+1},\\ 
       g_{j_{k+1}} \quad\text{if }i_{k}=i_{k+1},\\
       f_{i_{k+1}} \quad\text{else.}
    \end{cases}	
$$

The construct of edit graph can be extended to more than 2 sequences with the same idea. For sequences $f_1,f_2,...,f_t$, construct a $t$-dimensional grid with a number of vertices $(|f_1|+1)(|f_2|+1)...(|f_t|+1)$ labeled from $(0,0,...,0)$ to $(|f_1|,|f_2|,...,|f_t|)$. A vertex $u=(i_1,i_2,...,i_t)$ is connected to $v=(j_1,j_2,...,j_t)$ (we say $u \rightarrow v$) iff both of the following conditions are met:
\begin{itemize}
\item $j_l=i_l$ or $j_l=i_l+1$ $\forall\ l\in [t]$, i.e., $(i_1,...,i_t)$ and $(j_1,...,j_t)$ are vertices of a particular unit cube. Only these type of vertices can share an edge in the grid graph.
\item  Let $\mathcal T \subseteq [t]$ be the collection of indices where $j_l=i_l+1$. Then ${f_l}_{j_l}$ is equal $\forall\ l \in \mathcal T$. For example in 4 dimensional grid, consider the two vertices $(10,5,8,2)$ and $(10,6,9,2)$. In this case $\mathcal T = \{2,3\}$ since the second and third coordinates differ by 1. Therefore $(10,5,8,2)\rightarrow (10,6,9,2)$ iff ${f_2}_{5}={f_3}_{9}$. Note that if only one coordinate differs by 1 in the two vertices, a directed edge always exists (in other words all non-diagonal edges exist).
\end{itemize}
Define the vertex $(0,...,0)$ to be the origin of this graph and the vertex $(|f_1|,...,|f_t|)$ to be the destination. If $|f_j|=O(n)\ \forall\ j$, this graph has a number of vertices $O(n^t)$ and a maximum number of edges $O((2n)^t)$ since each vertex has at most $2^t-1$ outgoing edges.\\

\noindent\textbf{Infiltration product} (introduced in section 6.3 of \cite{lothaire1997combinatorics}):
The infiltration product has been extensively used in \cite{lothaire1997combinatorics}, as a tool in non-commutative algebra. Here, we give an edit-graph interpretation of this tool. A formal algebraic definition of the infiltration product is in Appendix~\ref{app:infil_def}. Using the edit graph we can construct the set of possible
supersequences $\mathcal{S}(f,g)$ of $f$, $g$ that are covered by the symbols in $f$ and $g$.
Indeed, multiple paths could yield the same supersequence and we
can count the number of distinct ways  $\mathbf N(h;f,g)$ one can construct
the same supersequence $h$ from $f$, $g$.
We can  informally define the
\emph{infiltration product $f\uparrow g$} of $f$ and $g$, as a polynomial with monomials the supersequences $h$ in
$\mathcal{S}(f,g)$ and coefficients  $\langle f\uparrow g,h\rangle$ equal to $\mathbf N(h;f,g)$. For the example in Fig.~\ref{fig:editgraph_smap1}, there is exactly one path corresponding to `101001' and hence $\langle 001\uparrow 101,101001 \rangle=1$ and similarly $\langle 001\uparrow 101,01001 \rangle=2$. One could find these coefficients for all relevant sequences and form the polynomial as described. We now give additional examples (see 6.3.14 in \cite{lothaire1997combinatorics}). Let $\mathcal{A}=\{a,b\}$, then
\begin{itemize}[wide=2pt]
\item $ab\uparrow ab=ab+2aab+2abb+4aabb+2abab$,
\item $ab\uparrow ba=aba+bab+abab+2abba+2baab+baba.$
\end{itemize}
The  infiltration operation is commutative and associative, and infiltration of two sequences $f\uparrow g$ is a polynomial with variables of length (or \textit{degree}) at most $|f|+|g|$; see \cite{lothaire1997combinatorics}.
The definition of infiltration extends to two polynomials via distributivity (precisely defined in Appendix~\ref{app:infil_def}), and consequently to multiple sequences as well. For multiple sequences, infiltration has the same edit graph interpretation: $\langle f_1\uparrow f_2 \uparrow...\uparrow f_t, w \rangle$ is the number of distinct ways of constructing $w$ as a supersequence of $f_1, f_2, ... ,f_t$ so that the construction covers $w$, i.e., construct the $t$-dimensional edit graph of $f_1, f_2, ... ,f_t$ and count the number of paths corresponding to $w$.
\\
{\small
\begin{center}
\begin{tabular}{ |P{3cm}|P{10cm}|  }
 \hline
 \multicolumn{2}{|c|}{Table of notation} \\
\hline
 $\mathcal A$ & A set \\
  \hline
 $X$    & A random variable or a random vector \\
  \hline
 $x$ &  A scalar or a vector variable\\
\hline
 $|x|$ & Length of the sequence $x$\\
 \hline
 $[i:j]$ & $\{i,i+1,...,j\}$\\
 \hline
 $x^{(i\rightarrow s)}$   & $(x_1,x_2,...,x_{i-1},s,x_{i+1},...,x_n)$   \\
  \hline
 ${f \choose g}$&   Binomial coefficient: number of subsequence patters of $f$ equal to $g$ \\
  \hline
 $\mathbf F(p,v)$ & Relaxed binomial coefficient: $\mathbb E_{Z\sim p} {Z \choose v}$ \\
  \hline
 $\langle f \uparrow g,h \rangle$    & Infiltration product: number of ways of obtaining sequence $h$ as a ``covered'' supersequence of $f$ and $g$ \\
  \hline
\end{tabular}
\end{center}} \vspace{2mm}

\section{Sequencewise ML for the deletion channel}
\label{sec:1deletion_ML}
\subsection{A continuous optimization formulation for the single trace ML}
We here consider the single-trace  ML decoding in (\ref{eq:ML_deletion}),
assuming that the output sequence $Y=y$ is non-empty. 
To the best of our knowledge, the only known method to solve \eqref{eq:ML_deletion} involves solving a combinatorial optimization, essentially iterating over all possible choices of $x$ and computing the objective value for each of the choices.
The reason is that there seems to be no discernible pattern exhibited by the true ML sequence; as we see in the table below, the true ML sequence at times extends a few runs, and at times even introduces new runs! Here, we list a few examples of the trace and the corresponding 10-length ML sequences. 
\begin{center}
\begin{tabular}{ | c | c| } 
\hline
$y$ & The set of all $x_{ml}$ sequences \\ 
\hline
10111 & 1100111111 \\ 
\hline
1010 & 1101010100  \\ 
\hline
000100 & 0000001000, \quad 0000010000, \quad 0000011000 \\
\hline 
111101 & 1111111001,\quad 1111111011 \\
\hline
\end{tabular}
\end{center}

\vspace{5mm}

 In this section, we show that one could equivalently solve the continuous relaxation of \eqref{eq:ML_deletion} to obtain a solution for \eqref{eq:ML_deletion}. 
Before presenting the main result, we first state a useful lemma which factors a given coordinate $p_i$ out of the relaxed binomial coefficient $\mathbf F(p,y)$ we introduced in Definition~\ref{def:f}. 
\begin{restatable}{lemma}{deletionMLrellemma}
For $p=(p_1,p_2,..,p_i,...,p_n)$ and $Y=y=y_1...y_m$ with $n \geq m > 0$, we have
\begin{align*}
  \mathbf F(p,y) = \mathbf F( p_{
   [n]\backslash \{i\}},y) +  p_i \sum\limits_{k|y_k=1}\mathbf F( p_{[1:i-1]},  y_{[1:k-1]})\mathbf F( p_{[i+1:n]},  y_{[k+1,m]})\\ + (1-p_i) \sum\limits_{k|y_k=0}\mathbf F( p_{[1:i-1]},  y_{[1:k-1]})\mathbf F( p_{[i+1:n]},  y_{[k+1,m]}).
\end{align*}
\label{lemma:F_decomposition}
\end{restatable}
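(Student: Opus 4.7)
My plan is to exploit the probabilistic interpretation of $\mathbf F$ stated just after Definition~\ref{def:f}: $\mathbf F(p,y) = \mathbb{E}_{Z\sim p}{Z \choose y}$, where $Z\in\{0,1\}^n$ has independent $\text{Ber}(p_i)$ entries. The binomial coefficient ${Z \choose y}$ counts the number of index sets $\mathcal{S}\subseteq [n]$ of size $m$ such that $Z_\mathcal{S}=y$, and every such count can be partitioned according to whether coordinate $i$ is used in the embedding or not. This partition is exactly what the two terms on the right-hand side correspond to, so the whole identity will follow by taking expectations term-by-term.

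Concretely, the first step is to write
\[
{Z \choose y} \;=\; {Z^{-i} \choose y} \;+\; \sum_{k=1}^{m} \mathbbm{1}\{Z_i=y_k\}\,{Z_{[1:i-1]} \choose y_{[1:k-1]}}{Z_{[i+1:n]} \choose y_{[k+1:m]}},
\]
where $Z^{-i}$ denotes $Z$ with coordinate $i$ removed. The first summand accounts for embeddings that skip index $i$; the $k$-th term in the sum accounts for embeddings that use index $i$ as the $k$-th matched position, which requires $Z_i = y_k$ and forces the remaining matches to split cleanly on either side of $i$. This combinatorial identity is essentially the same ``condition on whether symbol $i$ is used'' decomposition that underlies the dynamic programme for computing ${f \choose g}$, so it requires no new idea beyond careful bookkeeping.

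Next, I take expectations with respect to $Z\sim p$. By independence of the coordinates, $Z^{-i}$ is distributed as $p_{[n]\setminus\{i\}}$, so $\mathbb{E}[{Z^{-i} \choose y}] = \mathbf F(p_{[n]\setminus\{i\}},y)$. For the sum, $Z_i$ is independent of $Z_{[1:i-1]}$ and $Z_{[i+1:n]}$, so each term factors into $\Pr(Z_i=y_k)\cdot \mathbf F(p_{[1:i-1]},y_{[1:k-1]})\cdot \mathbf F(p_{[i+1:n]},y_{[k+1:m]})$. Finally, splitting the outer sum according to whether $y_k=1$ (contributing the factor $p_i$) or $y_k=0$ (contributing $1-p_i$) yields exactly the stated identity.

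The only real obstacle is verifying that the degenerate cases are handled consistently with the definition. Specifically, when $k=1$ or $k=m$ the sub-sequences $y_{[1:k-1]}$ or $y_{[k+1:m]}$ are empty, and when $i=1$ or $i=n$ the sub-vectors $p_{[1:i-1]}$ or $p_{[i+1:n]}$ are empty; in all these cases the corresponding factor must evaluate to $1$, which matches the convention $\mathbf F(\cdot,e)\triangleq 1$ (and the branch $0=m\leq n$ in the definition). Once these boundary conditions are checked, the identity holds verbatim.
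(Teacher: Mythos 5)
Your proof is correct and rests on the same decomposition as the paper's: partition the embeddings of $y$ into $Z$ (equivalently, the subsets $\mathcal S$ in Definition~\ref{def:f}) according to whether index $i$ is used and, if so, at which position $k$ it is matched, then use independence of the coordinates to factor the contributions of $[1:i-1]$, $\{i\}$, and $[i+1:n]$. The only difference is presentational — you state the case split as a pointwise identity for ${Z \choose y}$ and then take expectations, whereas the paper performs the identical split directly on the defining sum of $\mathbf F$ — and your check of the empty-subsequence boundary conventions is consistent with the definition.
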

Recall that $\mathbf F(p,y)$ sums over all $m$-length subsets $\mathcal S$ and associates $p_{\mathcal S}$ with $y.$ Intuitively, this recursive relationship considers separately the cases where
\begin{itemize}
\item $i \notin \mathcal S$,
\item $i \in \mathcal S$ and is associated with a particular $y_k$ where $y_k = 1$,
\item $i \in \mathcal S$ and is associated with a particular $y_k$ where $y_k = 0$.
\end{itemize}
The detailed proof can be found in Appendix~\ref{app:F_lemma_proof}. It is clear from Lemma~\ref{lemma:F_decomposition} that $\mathbf F(p,y)$ is affine when projected onto each coordinate $p_i$. Thus, the extrema of $\mathbf F(p,y)$ must occur at the boundary of the support set of $p_i$; i.e., at either $p_i = 0$ or $p_i = 1$. Combining this with the fact that  $\mathbf F(\cdot)$ is a relaxed version of the binomial coefficient, we observe that the maximization problem in \eqref{eq:ML_deletion} is equivalent to its real-valued relaxation. The following result makes this precise.

\begin{theorem} The ML decoding problem for the single-trace deletion channel
\begin{equation}
\max_{x\in \{0,1\}^n} {x \choose y} 
\label{eq:ml_opt_equiv1}
\end{equation}
is equivalent to the problem
\begin{equation}
\max_{p\in [0,1]^n} \mathbf F(p,y).
\label{eq:ml_opt_equiv2}
\end{equation}
Furthermore, given any non-integral $p^* \in [0,1]^n$ that maximizes $\mathbf F(p,y)$, we can construct a corresponding integral solution $x^* \in \{0,1\}^n$ that maximizes $\mathbf F(x,y)$ and consequently also maximizes ${x \choose y}$.
\label{thm:ML_relaxation}
\end{theorem}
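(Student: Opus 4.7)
The plan is to establish the theorem via a standard rounding argument for multi-affine functions, where Lemma~\ref{lemma:F_decomposition} serves as the crucial ingredient. First I would observe that when $p \in \{0,1\}^n$ is integral, $\mathbf F(p,y) = {p \choose y}$. This can be verified directly from Definition~\ref{def:f}: when each $p_{\mathcal S_i} \in \{0,1\}$, the factor $p_{\mathcal S_i}^{y_i}(1-p_{\mathcal S_i})^{1-y_i}$ is the indicator that $p_{\mathcal S_i} = y_i$, so the product over $i$ is the indicator that the subsequence of $p$ indexed by $\mathcal S$ equals $y$, and summing over all $m$-subsets $\mathcal S$ recovers the binomial coefficient. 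Since $\{0,1\}^n \subseteq [0,1]^n$, this immediately gives the easy direction $\max_{p \in [0,1]^n} \mathbf F(p,y) \geq \max_{x \in \{0,1\}^n} {x \choose y}$.

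For the reverse inequality, and simultaneously for the construction of the integral solution, I would invoke the multi-affine structure supplied by Lemma~\ref{lemma:F_decomposition}. The lemma rewrites $\mathbf F(p,y) = A_i + p_i\, B_i$, where $A_i$ and $B_i$ depend only on the coordinates of $p$ other than $p_i$; in particular, $\mathbf F(p,y)$ is affine in each $p_i$ separately. Consequently, for any candidate $p^\ast \in [0,1]^n$ and any index $i$, at least one of the two replacements $p_i \leftarrow 0$ or $p_i \leftarrow 1$ yields an objective value no smaller than $\mathbf F(p^\ast, y)$, by the elementary fact that the maximum of an affine function on $[0,1]$ is attained at an endpoint. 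Iterating this rounding across all $n$ coordinates in turn produces a point $x^\ast \in \{0,1\}^n$ with $\mathbf F(x^\ast, y) \geq \mathbf F(p^\ast, y)$. Chaining with the easy direction gives ${x^\ast \choose y} = \mathbf F(x^\ast, y) \geq \max_{p \in [0,1]^n} \mathbf F(p, y) \geq \max_{x \in \{0,1\}^n} {x \choose y}$, so all three quantities coincide and $x^\ast$ is simultaneously optimal for \eqref{eq:ml_opt_equiv2} and for \eqref{eq:ml_opt_equiv1}. This also settles the ``furthermore'' clause, since the rounding procedure itself is the promised construction.

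The main obstacle is really the decomposition Lemma~\ref{lemma:F_decomposition} itself, whose combinatorial proof (deferred to the appendix) carries most of the substance; once that is in hand, the rounding step is entirely standard and purely notational. One point worth emphasizing in the final write-up is that the argument is constructive and efficient: it produces an integral optimum from any fractional one in $O(n)$ evaluations of $\mathbf F$, and hence in $O(n^2 m)$ total time using the dynamic program for $\mathbf F$ described in Appendix~\ref{app:F_compute}. A minor subtlety I would flag is that the rounding may break ties arbitrarily when $B_i = 0$, which is fine for existence but means that distinct integral maximizers can be reached from the same fractional point $p^\ast$.
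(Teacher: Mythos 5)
Your proposal is correct and follows essentially the same route as the paper: both arguments rest on Lemma~\ref{lemma:F_decomposition} to show that $\mathbf F(p,y)$ is affine in each coordinate $p_i$, and then round coordinates one at a time to an endpoint of $[0,1]$ without decreasing the objective, yielding an integral maximizer. Your explicit statement of the easy direction ($\{0,1\}^n \subseteq [0,1]^n$ with $\mathbf F(x,y)={x \choose y}$ on integral points) and the remark on the $O(n)$-evaluation cost of the rounding are welcome clarifications, but the substance is identical to the paper's proof.
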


\begin{proof}
As noted earlier, we have ${x \choose y} = \mathbf F(x,y)$. Therefore, we are interested in proving the following:
\begin{align*}
\max_{x\in \{0,1\}^n} \mathbf F(x,y) \equiv \max_{p\in [0,1]^n} \mathbf F(p,y),\numberthis
\label{eq:ml_opt_equiv_proof1}
\end{align*}
where $\equiv$ refers to that the two problems are equivalent (have the same optimal objective value).
We prove this by applying the following claim.\\
\textbf{Claim:} Given any feasible $p=(p_1,p_2,...,p_i,...,p_n)$, at least one of the following holds true:
\begin{itemize}
\item$\mathbf F(p^{(i\rightarrow 0)},y) \geq \mathbf F(p,y)$. Recall from notation that $p^{(i\rightarrow 0)}=(p_1,p_2,...,p_{i-1},0,p_{i+1}...,p_n)$ is the vector where the $i^{th}$ coordinate is replaced by $0$.
\item $\mathbf F(p^{(i\rightarrow 1)},y) \geq \mathbf F(p,y)$.
\end{itemize}
Thus if $p^*$ is an optimal solution to \eqref{eq:ml_opt_equiv2} with $p_i\in (0,1)$, then at least one of $p^{(i\rightarrow 0)}$ or $p^{(i\rightarrow 1)}$ is also an optimal solution. Sequentially applying this argument for each coordinate of $p$ shows that there exists a point in $\{0,1\}^n$ which is an optimal solution to \eqref{eq:ml_opt_equiv2} and consequently to \eqref{eq:ml_opt_equiv1}.

It remains to prove our claim. We use Lemma~\ref{lemma:F_decomposition} to factor out $p_i$ terms in $\mathbf F(p,Y)$:
\begin{align*}
\mathbf F(p,y) = \mathbf F( p_{[n]\backslash \{i\}},y) +  p_i \sum\limits_{k|y_k=1}\mathbf F( p_{[1:i-1]},  y_{[1:k-1]})\mathbf F( p_{[i+1:n]},  y_{[k+1,m]})\\ + (1-p_i) \sum\limits_{k|y_k=0}\mathbf F( p_{[1:i-1]},  y_{[1:k-1]})\mathbf F( p_{[i+1:n]},  y_{[k+1,m]}).
\end{align*}
Now we express $\mathbf F(p^{(i\rightarrow 0)},y)$ and $\mathbf F(p^{(i\rightarrow 1)},y)$ as
$$\mathbf F(p^{(i\rightarrow 0)},y) = \mathbf F( p_{[n]\backslash \{i\}},y) +  \sum\limits_{k|y_k=0}\mathbf F( p_{[1:i-1]},  y_{[1:k-1]})\mathbf F( p_{[i+1:n]},  y_{[k+1,m]}),$$
$$\mathbf F(p^{(i\rightarrow 1)},y) = \mathbf F( p_{[n]\backslash \{i\}},y) +  \sum\limits_{k|y_k=1}\mathbf F( p_{[1:i-1]},  y_{[1:k-1]})\mathbf F( p_{[i+1:n]},  y_{[k+1,m]}).$$

\noindent Because $0\leq p_i\leq 1$ it directly follows that $$\min \left\{\mathbf F(p^{(i\rightarrow 0)},y),\mathbf F(p^{(i\rightarrow 1)},y)\right\} \leq \mathbf F(p,y) \leq  \max \left\{\mathbf F(p^{(i\rightarrow 0)},y),\mathbf F(p^{(i\rightarrow 1)},y)\right\},$$
thus proving our claim.

\end{proof}

The real-valued optimization problem in \eqref{eq:ml_opt_equiv2} falls under the umbrella of signomial optimization which is, in general, NP-hard (see for example, \cite{xu2014signomial}, \cite{chand2016signomial}). A standard technique for signomial optimization uses convexification strategies to approximate the optimal value. In particular, 
as stated in \cite{chand2016signomial}, the main observation underlying their methods is that certifying the nonnegativity of a signomial with at most \textit{one negative coefficient} can be accomplished efficiently. However, there are two problems with this approach in relation to our work -- 1. when expressed as a signomial optimization problem, \textit{all} the coefficients are negative in the ML optimization objective function,  and 2. the objective function has an exponential number of signomial terms as can be seen from Definition~\ref{def:f}. As a result, such strategies turn out to not be useful for the ML optimization problem. For instance, the techniques in \cite{chand2016signomial} resulted in the bound $\mathbf F(p,Y) \leq {|p| \choose |Y|}$ for most instances of $p$ and $Y$, where $|\cdot|$ denotes the length of the vector/sequence. This is a trivial bound that uses no information about $p$ and $Y$ other than their lengths. Moreover, with a slight change of variables, \eqref{eq:ml_opt_equiv2} could also be expressed as a maximization of a convex function in a convex set. With that being said, it is still unclear if \eqref{eq:ml_opt_equiv2} is solvable in polynomial time or not.


\subsection{ML via gradient ascent}
Given the continuous variable formulation of the ML problem in \eqref{eq:ml_opt_equiv2}, a natural heuristic to find an estimate of the ML sequence is to employ \textit{projected gradient ascent} to solve \eqref{eq:ml_opt_equiv2}. The algorithm, in short, can be described as follows (the exact algorithm is detailed as Alg.~\ref{alg:grad_asc}):\\
Step I: Start from a randomly chosen interior point (in our case, we start from $p=(0.5,0.5,...,0.5)$, the point corresponding to the uniform distribution).\\
Step II: Take a small step in the direction of the gradient $\nabla_p\ \mathbf F(p,y)$.\\
Step III: If the gradient step results in $p$ moving out of $[0,1]^n$, project it back onto $[0,1]^n$. Repeat Steps II and III until convergence.\\
Step IV: From the final $p$, determine the closest binary sequence to be the reconstructed sequence.

Moreover in Appendix~\ref{app:F_grad_comp}, we show using Lemma~\ref{lemma:F_decomposition} that $\nabla_p\ \mathbf F(p,y)$ can be computed in $O(n^2)$ as a ``by-product'' of computing $\mathbf F(p,y)$.

\begin{algorithm}[t!]
\caption{Single trace projected gradient ascent for ML}\label{alg:grad_asc}
\begin{algorithmic}[1]
\item Input: Blocklength $n$, Trace {$Y=y$}, Initial point $p = (p_1,p_2,...,p_n)$, step-size $\epsilon$, Max iterations $M$, Convergence criteria $C$ \\ Outputs: Estimated sequence $\hat X$
\State Iteration count $j=0$
\While {$C$ is FALSE and $j<M$} 
\State $p \leftarrow p + \epsilon \frac{\nabla_p \mathbf F(p,y)}{\mathbf F(p,y)}$
\State Replace $p_i \leftarrow 1$ for all $i:p_i > 1$
\State Replace $p_i \leftarrow 0$ for all $i:p_i < 0$
\State $j\leftarrow j+1$
\EndWhile
\State For each $i$, set $\hat X_i = \mathbbm 1 \{p_i>0.5\}$.
\State \textbf{return} $\hat X = \hat X_1 \hat X_2 ... \hat X_n$
\end{algorithmic}
\end{algorithm}

\newpage
\subsection{A heuristic for multiple traces}

The continuous variable ML formulation in \eqref{eq:ml_opt_equiv2} optimizes over the distributions $p$, instead of sequences $x$. In particular, we proved the following:
\begin{equation*}
\max_{x\in \{0,1\}^n} {x \choose y}  \equiv \max_{p\in [0,1]^n} \mathbf F(p,y) \equiv\ \max_{p\in [0,1]^n}\ \mathbb E_{Z \sim p} {Z \choose y}. 
\end{equation*}

At this point, one could ask how this formulation extends to multiple traces $Y^1=y^1,Y^2=y^2,...,Y^t=y^t$. The following theorem gives such a continuous optimization formulation with multiple traces.

\begin{theorem} The ML decoding with multiple traces 
\begin{equation}
\max_{x\in \{0,1\}^n} {x \choose y^1}{x \choose y^2}...{x \choose y^t} 
\label{eq:ml_opt_traces1}
\end{equation}
is equivalent to
\begin{equation}
\max_{p\in [0,1]^n} \mathbb E_{Z\sim p} \left[{Z \choose y^1}{Z \choose y^2}...{Z \choose y^t}  \right].
\label{eq:ml_opt_traces2}
\end{equation}
Furthermore, given any non-integral $p^* \in [0,1]^n$ that maximizes $\mathbb E_{Z\sim p} \left[{Z \choose y^1}{Z \choose y^2}...{Z \choose y^t}  \right]$, we can construct a corresponding integral solution $x^* \in \{0,1\}^n$ that also maximizes ${x \choose y^1}{x \choose y^2}...{x \choose y^t} $.
\label{thm:ML_multi_traces}
\end{theorem}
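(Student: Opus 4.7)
The plan is to mirror the single-trace argument in Theorem~\ref{thm:ML_relaxation}: establish that the multi-trace objective is affine in each coordinate $p_i$ when the remaining coordinates are held fixed, and then iteratively round each non-integral coordinate to $\{0,1\}$ without decreasing the objective. Let $g(p) := \mathbb{E}_{Z\sim p}\!\left[\prod_{j=1}^{t}\binom{Z}{y^j}\right]$ denote the objective of \eqref{eq:ml_opt_traces2}.

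First I would prove the key structural claim: for every $i$, the function $p_i \mapsto g(p_1,\ldots,p_n)$ is affine. The cleanest route is to exploit the independence of the $Z_j$'s directly rather than re-deriving a multi-trace analog of Lemma~\ref{lemma:F_decomposition}. Conditioning on $Z_i$ and applying the law of total expectation,
\[
g(p) \;=\; p_i \cdot \alpha_i(p_{-i}) \;+\; (1-p_i)\cdot \beta_i(p_{-i}),
\]
where $\alpha_i(p_{-i}) := \mathbb{E}_{Z \sim p}\!\left[\prod_{j}\binom{Z}{y^j} \,\big|\, Z_i=1\right]$ and $\beta_i(p_{-i})$ is the analogous conditional expectation given $Z_i=0$; both depend only on $p_{-i}$ by independence. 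Equivalently, one observes that $\Pr(Z=z)=\prod_i p_i^{z_i}(1-p_i)^{1-z_i}$ is multilinear in $p$, so $g(p)$, a linear combination of such products, is multilinear and in particular affine in each $p_i$.

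From affinity I would deduce the analog of the claim in Theorem~\ref{thm:ML_relaxation}: for any feasible $p\in[0,1]^n$ and any coordinate $i$,
\[
\min\!\left\{g(p^{(i\to 0)}),\,g(p^{(i\to 1)})\right\}\;\leq\; g(p) \;\leq\; \max\!\left\{g(p^{(i\to 0)}),\,g(p^{(i\to 1)})\right\},
\]
so at least one of $p^{(i\to 0)}$ or $p^{(i\to 1)}$ has objective value at least $g(p)$. Starting from any optimizer $p^*$ of \eqref{eq:ml_opt_traces2} and applying this replacement coordinate-by-coordinate yields an optimizer $x^*\in\{0,1\}^n$ with $g(x^*)\geq g(p^*)$, which also gives the constructive procedure asserted in the theorem for producing an integral solution from a non-integral one. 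Finally, when $p=x\in\{0,1\}^n$ the vector $Z$ equals $x$ with probability $1$, so $g(x)=\prod_j \binom{x}{y^j}$ on the integer lattice; combined with the rounding argument this shows the two optima in \eqref{eq:ml_opt_traces1} and \eqref{eq:ml_opt_traces2} coincide.

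I do not expect a substantive obstacle: the multi-trace case is actually simpler to argue than the single-trace case, because affinity in each $p_i$ is automatic from the independence of the coordinates of $Z$, whereas Theorem~\ref{thm:ML_relaxation} required the explicit recursion of Lemma~\ref{lemma:F_decomposition}. The only mild care needed is to note that the conditional expectations $\alpha_i(p_{-i})$ and $\beta_i(p_{-i})$ are genuinely functions of $p_{-i}$ alone (so that the rounding step is well-defined), which again is immediate from independence.
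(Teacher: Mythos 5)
Your proof is correct, and while it shares the paper's overall skeleton (prove the objective is affine in each coordinate $p_i$, deduce the sandwich $\min\{g(p^{(i\to 0)}),g(p^{(i\to 1)})\}\le g(p)\le\max\{g(p^{(i\to 0)}),g(p^{(i\to 1)})\}$, and round coordinate by coordinate), it establishes the key affinity fact by a genuinely different and more elementary route. The paper's proof of this theorem consists almost entirely of that affinity step, which it obtains by invoking Lemma~\ref{lemma:bin_inf_relation} to expand $\prod_j\binom{Z}{y^j}=\sum_w\langle y^1\uparrow\cdots\uparrow y^t,w\rangle\binom{Z}{w}$, pushing the expectation inside to get $\sum_w\langle y^1\uparrow\cdots\uparrow y^t,w\rangle\,\mathbf F(p,w)$, and then citing the affinity of each $\mathbf F(p,w)$ from Lemma~\ref{lemma:F_decomposition}. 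You instead observe that $\Pr(Z=z)=\prod_i p_i^{z_i}(1-p_i)^{1-z_i}$ is multilinear in $p$ (equivalently, condition on $Z_i$ and use independence), so \emph{any} expectation $\mathbb E_{Z\sim p}[h(Z)]$ of a function on the hypercube is multilinear, and in particular affine in each $p_i$ regardless of the form of $h$. This is cleaner, requires neither the infiltration product nor Lemma~\ref{lemma:F_decomposition}, and in fact would also dispense with Lemma~\ref{lemma:F_decomposition} in the proof of Theorem~\ref{thm:ML_relaxation} (though that lemma remains useful elsewhere, e.g.\ for the $O(n^2)$ gradient computation). What the paper's heavier route buys is the explicit representation $\sum_w\langle y^1\uparrow\cdots\uparrow y^t,w\rangle\,\mathbf F(p,w)$, which motivates the discussion immediately after the theorem about why the multi-trace objective is hard to evaluate and why the heuristic of maximizing $\prod_j\mathbf F(p,y^j)$ is adopted instead; your argument proves the theorem but does not yield that representation.
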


\begin{proof}
This theorem can be proved in the same way as Theorem~\ref{thm:ML_relaxation}, by showing that \\ $\mathbb E_{Z\sim p} \left[{Z \choose y^1}{Z \choose y^2}...{Z \choose y^t}  \right]$ is an affine function of each $p_i$; here we only prove this fact and the rest of the arguments follow exactly as in the proof of Theorem~\ref{thm:ML_relaxation}. 

To show this we use Lemma~\ref{lemma:bin_inf_relation} stated below; this Lemma is also closely related to the channel equivalence of Theorem~\ref{thm:channel_equiv} (see Appendix~\ref{app:bin_inf_lemma}).

\begin{restatable}{lemma}{bininfrelation}
\label{lemma:bin_inf_relation}
For $h,f_1,f_2,...,f_m \in \mathcal{A}^*$,\\
$${h \choose f_1} {h \choose f_2}...{h \choose f_m}=\sum_{w\in \mathcal{A}^*}\langle f_1\uparrow f_2\uparrow  ...\uparrow f_m,w \rangle{h \choose w}.$$
\end{restatable}

Using Lemma~\ref{lemma:bin_inf_relation}, we now have
\begin{align*}
\mathbb E_{Z\sim p} \left[{Z \choose y^1}{Z \choose y^2}...{Z \choose y^t}\right] &=  \mathbb E_{Z\sim p} \sum_{w\in \mathcal{A}^*}\langle y^1\uparrow y^2\uparrow  ...\uparrow y^t,w \rangle{Z \choose w}\\
&= \sum_{w\in \mathcal{A}^*}\langle y^1\uparrow y^2\uparrow  ...\uparrow y^t,w \rangle \mathbb E_{Z\sim p}  {Z \choose w}\\
& = \sum_{w\in \mathcal{A}^*}\langle y^1\uparrow y^2\uparrow  ...\uparrow y^t,w \rangle \mathbf F(p,w).
\end{align*}
Note that $\mathbf F(p,w)$ is affine in each $p_i$. Thus $\mathbb E_{Z\sim p} \left[{Z \choose y^1}{Z \choose y^2}...{Z \choose y^t}\right]$ is a linear combination of affine functions of each $p_i$, and hence is also affine in each $p_i$.
\end{proof}
The formulation of \eqref{eq:ml_opt_traces2}, by itself, is not very useful as it is unclear on how to efficiently compute $\mathbb E_{Z\sim p} \left[{Z \choose y^1}{Z \choose y^2}...{Z \choose y^t}\right]$. Indeed, if ${Z \choose y^i}\ \indep\ {Z \choose y^j}$, the expectation of products would decompose into the product $\prod_{j} \mathbb E_{Z\sim p} {Z \choose y^j} = \prod_{j} \mathbf F(p,y^j)$, and each of the terms in the product can be computed in $O(n^2)$ as detailed in Appendix~\ref{app:F_compute} -- this is however not the case as ${Z \choose y^i}$ and ${Z \choose y^j}$ are not independent. 

Having said that, we can now solve the maximization problem $\argmax_{p\in [0,1]^n} \prod_{j=1}^t \mathbf F(p,y^j)$ and hope that the resultant solution is also a good solution for  $\argmax_{p\in [0,1]^n} \mathbb E_{Z\sim p} \left[{Z \choose y^1}...{Z \choose y^t}\right]$; Algorithm~\ref{alg:grad_asc_traces} makes this idea precise. Moreover, instead of maximizing $\prod_{j=1}^t \mathbf F(p,y^j)$, we can further simplify the gradient computations by taking the log of the objective function, i.e., we solve $\argmax_{p\in [0,1]^n} \sum_{j=1}^t \log \mathbf F(p,y^j)$. This heuristic turns out to perform well in a variety of situations, as illustrated in Section~\ref{sec:Numerics}. As for the complexity, note that Alg.~\ref{alg:grad_asc_traces} involves the computation of $t$ gradients (each of which takes $O(n^2)$) at each gradient iteration. For a fixed number of max iterations $M$, the complexity of the algorithm is $O(n^2t)$.

\begin{algorithm}[t!]
\caption{Trace reconstruction heuristic via projected gradient ascent}\label{alg:grad_asc_traces}
\begin{algorithmic}[1]
\item Input: Blocklength $n$, Traces {$Y^1=y^1,Y^2=y^2,...,Y^t=y^t$}, Initial point $p = (p_1,p_2,...,p_n)$, step-size $\epsilon$, Max iterations $M$, Convergence criteria $C$ \\ Outputs: Estimated sequence $\hat X$
\State Iteration count $j=0$
\While {$C$ is FALSE and $j<M$} 
\State $p \leftarrow p + \epsilon \sum_{j=1}^t \frac{\nabla_p \mathbf F(p,y^j)}{\mathbf F(p,y^j)}$
\State Replace $p_i \leftarrow 1$ for all $i:p_i > 1$
\State Replace $p_i \leftarrow 0$ for all $i:p_i < 0$
\State $j\leftarrow j+1$
\EndWhile
\State For each $i$, set $\hat X_i = \mathbbm 1 \{p_i>0.5\}$.
\State \textbf{return} $\hat X = \hat X_1 \hat X_2 ... \hat X_n$
\end{algorithmic}
\end{algorithm}

\section{Symbolwise MAP for the single-trace deletion channel}
\label{sec:1deletion}

We here develop an algorithm to compute the symbolwise posterior probabilities for the single-trace deletion channel when the input symbols are independently generated with arbitrary priors. Consider the single deletion channel model in Fig.~\ref{fig:1deletion}, where $X=X_1...X_n$, each input symbol is generated $X_i \sim \text{ind. Ber}\ (p_i)$, and we observe  the trace $Y = y = y_1y_2...y_m$ with $m\leq n$. Define the vector of priors as  $p\triangleq(p_1,p_2,...,p_n)$. We first give an $O(n^2)$ algorithm to calculate the posterior probabilities $\Pr(X_i=1|Y=y)$, which in turn provides the symbolwise MAP estimate for the considered model. We then show how this algorithm can be used for trace reconstruction. We take three steps to present the algorithm.

\noindent \textbf{An expression for $\Pr(X_i=1|Y=y)$.} 
Let $\Pr(X_i=1)=p_i$. As a first step, we have
\vspace{6pt}
\begin{align*}
\Pr(X_i=1|{Y=y}) &= \frac{\Pr(X_i=1,Y=y)}{\Pr(Y=y)}
 = \frac{ \sum\limits_{\substack{ x| x_i=1}}  \Pr({X=x}) \Pr(Y=y|X=x)}{ \sum_{\substack{x}}  \Pr({X=x}) \Pr(Y=y|X=x)} \\
 &\overset{(a)}{=} \frac{ \sum\limits_{\substack{ x| x_i=1}}  \Pr({X=x}) { x \choose  y}}{ \sum_{\substack{x}}  \Pr({X=x}) { x \choose  y}}, \numberthis
\label{eq:approx_smap_1}
\end{align*}
where $(a)$ is because for a deletion channel $\Pr(Y=y|X=x)={x \choose y} \delta^{|x|-|y|}(1-\delta)^{|y|}$.
To proceed, we need to evaluate the summation in the numerator and the denominator.
Theorem~\ref{thm:approx_smap_postprob} expresses \eqref{eq:approx_smap_1} in terms of relaxed binomial coefficient terms $\mathbf F(\cdot)$. Recall that $\mathbf F(p,y) \triangleq \mathbb E_{X\sim p} {X \choose y}$, which is the denominator term in \eqref{eq:approx_smap_1}.

\begin{theorem}
\label{thm:approx_smap_postprob}
Let $X=X_1...X_n$ where $X_i \sim \text{ind. Ber}\ (p_i)$, and let $Y=y$ be the observed trace when $X$ is passed through a deletion channel. Then,
\begin{align*}
\Pr(X_i&=1|Y=y) = \frac{p_i}{\mathbf F( p,  y)} \left( \mathbf F( p_{[n]\backslash \{i\}},  y) +  \sum\limits_{k|y_k=1}\mathbf F( p_{[1:i-1]},  y_{[1:k-1]})\mathbf F( p_{[i+1:n]},  y_{[k+1,m]}) \right). \numberthis
\label{eq:smap_postprob_thm}
\end{align*}
\end{theorem}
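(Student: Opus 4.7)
The plan is to start from the ratio in \eqref{eq:approx_smap_1} and rewrite each of the numerator and denominator in terms of $\mathbf F(\cdot)$, at which point the theorem will reduce to a direct application of Lemma~\ref{lemma:F_decomposition}. The denominator $\sum_x \Pr(X=x)\,{x \choose y}$ is, by the very definition of $\mathbf F$ as an expectation, already equal to $\mathbf F(p,y)$. So only the numerator requires work.

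For the numerator, I would rewrite
\[
\sum_{x:\,x_i=1}\Pr(X=x)\,{x\choose y} \;=\; \Pr(X_i=1)\cdot \mathbb{E}\Bigl[{X\choose y}\,\Big|\,X_i=1\Bigr] \;=\; p_i\,\mathbf F\bigl(p^{(i\to 1)},\,y\bigr),
\]
where the last equality uses the fact that conditioning on $X_i=1$ turns the $i$-th prior into a point mass at $1$, which is exactly the prior vector $p^{(i\to 1)}$ (the other coordinates remain independent $\mathrm{Ber}(p_j)$, so the conditional expectation matches the $\mathbf F$ definition). Next, I would invoke Lemma~\ref{lemma:F_decomposition} with the argument $p^{(i\to 1)}$: since the $i$-th coordinate of $p^{(i\to 1)}$ equals $1$, the coefficient $(1-p_i)$ in front of the third summand vanishes, leaving exactly the two terms inside the parentheses in \eqref{eq:smap_postprob_thm}. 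Dividing by $\mathbf F(p,y)$ then yields the desired identity.

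The only conceptual step is recognizing that the numerator equals $p_i\,\mathbf F(p^{(i\to 1)},y)$; once this is in hand, the theorem is a one-line corollary of Lemma~\ref{lemma:F_decomposition}. Should one prefer a self-contained combinatorial derivation that bypasses Lemma~\ref{lemma:F_decomposition}, an equivalent route is to decompose ${x\choose y}$ for each $x$ with $x_i=1$ by splitting the subsequence matches into those that skip position $i$ (contributing ${x_{[n]\setminus\{i\}}\choose y}$) and those that use position $i$ to match some $y_k$, which forces $y_k=1$ since $x_i=1$ (contributing $\sum_{k:\,y_k=1}{x_{[1:i-1]}\choose y_{[1:k-1]}}{x_{[i+1:n]}\choose y_{[k+1:m]}}$); taking expectations over the independent $X_j$'s then factorizes each product term into a product of $\mathbf F$'s, recovering the same statement. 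I do not foresee a genuine obstacle, as both routes reduce to a routine case split on whether the matching subsequence uses index $i$.
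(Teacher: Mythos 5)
Your proposal is correct, and your primary route is genuinely more economical than the paper's. The paper proves the theorem by redoing the full combinatorial computation from scratch: it expands ${x\choose y}=\sum_{\mathcal S}\mathbbm 1\{x_{\mathcal S}=y\}$, swaps the order of summation, and splits into the cases $i\notin\mathcal S$ and $i\in\mathcal S$ (with a further split over the position $k$ at which $i$ sits in $\mathcal S$) --- i.e., it carries out essentially the same subset-splitting argument that already appears in the appendix proof of Lemma~\ref{lemma:F_decomposition}, just with the prior weights $\Pr(X=x)$ attached. Your route instead isolates the one genuinely new observation --- that $\sum_{x:x_i=1}\Pr(X=x){x\choose y}=p_i\,\mathbb E\bigl[{X\choose y}\mid X_i=1\bigr]=p_i\,\mathbf F(p^{(i\to 1)},y)$, since conditioning on $X_i=1$ simply replaces the $i$-th coordinate of the prior vector by a point mass while leaving the other coordinates independent --- and then lets Lemma~\ref{lemma:F_decomposition} evaluated at $p^{(i\to 1)}$ do the rest, with the $(1-p_i)$ term vanishing and the $p_i$ coefficient becoming $1$. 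This is a clean two-line corollary that avoids duplicating the subset-splitting work, and it makes transparent why the same parenthetical expression appears both here and in the proof of Theorem~\ref{thm:ML_relaxation} (where $\mathbf F(p^{(i\to 1)},y)$ is computed the same way). Your fallback combinatorial derivation is precisely the paper's argument, so nothing is lost if one prefers self-containment; the only thing the paper's longer derivation "buys" is that it does not rely on the reader having internalized the expectation characterization $\mathbf F(q,y)=\mathbb E_{Z\sim q}{Z\choose y}$ for a prior vector with a degenerate coordinate, which your argument uses but which is immediate from Definition~\ref{def:f}.
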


\begin{proof}
The proof of this theorem employs the same trick used in the proof of Lemma~\ref{lemma:F_decomposition}. From \eqref{eq:approx_smap_1}, we have
\begin{align*}
\Pr(X_i = 1 | Y=y) = \frac{ \sum\limits_{\substack{ x| x_i=1}} \Pr({X=x}) { x \choose  y}}{\mathbf F(p,y)}.
\end{align*}
Now,
\begin{align*}
\sum_{\substack{ x| x_i=1}} & \Pr({X=x}) { x \choose y}
=\sum_{\substack{ x|x_i=1}} \Pr({X=x}) \sum_{\substack{\mathcal S\subseteq [n]\\ |\mathcal S|=m}} \mathbbm{1}\{ x_{\mathcal S}= y\}\\
&=\sum_{\substack{\mathcal S\subseteq [n]\\ |\mathcal S|=m}} \sum_{\substack{ x|x_i=1\\x_{\mathcal S}=y}}  \Pr({X=x}).\numberthis \label{eq:smapiter1}
\end{align*}
We first separate the outer summation into two cases: (a)  $\mathcal S|i \notin \mathcal S$ and (b)  $\mathcal S|i\in \mathcal S$.
We can express the first case as
\begin{align*}
&\hspace{-1cm}\sum_{\substack{\mathcal S\subseteq [n] \\ |\mathcal S|=m,i\notin \mathcal S}}\sum_{\substack{ x|x_i=1\\x_{\mathcal S}=y}}  \Pr({X=x})
=\sum_{\substack{\mathcal S\subseteq [n]\backslash \{i\}\\ |\mathcal S|=m}} \sum_{\substack{ x|x_i=1\\x_{\mathcal S}=y}}  \Pr({X=x})\\
&=\sum_{\substack{S\subseteq [n]\backslash \{i\}\\ |\mathcal S|=m}} \sum_{\substack{ x|x_i=1\\  x_{\mathcal S}= y}} \Big(\Pr(X_i=1)\Pr( X_{\mathcal S}= y)    \Pr( X_{[n]\backslash \mathcal S\cup\{i\}}= x_{[n]\backslash \mathcal S\cup\{i\}}) \Big)\\  
&=\sum_{\substack{\mathcal S\subseteq [n]\backslash \{i\}\\ |\mathcal S|=m}} p_i \Pr( X_{\mathcal S}= y)  \left(\sum_{\substack{ x|x_i=1\\  x_{\mathcal S}= y}}   \Pr( X_{[n]\backslash \mathcal S\cup\{i\}}= x_{[n]\backslash \mathcal S\cup\{i\}})\right)\\ 
&=\sum_{\substack{\mathcal S\subseteq [n]\backslash \{i\}\\ |\mathcal S|=m}} p_i \Pr( X_{\mathcal S}= y) \left(\sum_{(x_j|j\in [n]\backslash \mathcal S\cup \{i\})}   \Pr( X_{[n]\backslash \mathcal S\cup\{i\}}= x_{[n]\backslash \mathcal S\cup\{i\}})\right)\\
&=p_i \sum_{\substack{\mathcal S\subseteq [n]\backslash \{i\}\\ |\mathcal S|=m}}  \Pr( X_{\mathcal S}= y) = p_i \mathbf F( p_{[n]\backslash \{i\}},  y).\numberthis \label{eq:lemma3proof1}
\end{align*}

For the second term, we express the set $\mathcal S$ as a union  $\mathcal S = \mathcal S' \cup \{i\} \cup \mathcal S''$ such that $\mathcal S' \subseteq [i-1]$ and $\mathcal S'' \subseteq [i+1:n]$ to get:
\begin{align*}
&\sum_{\substack{\mathcal S\subseteq [n]\\ |\mathcal S|=m,\\i\in \mathcal S}} \sum_{\substack{ x|x_i=1\\x_{\mathcal S}=y}}  \Pr({X=x})= \sum_{k=1}^m\sum\limits_{\substack{\mathcal S\subseteq [n],\\|\mathcal S|=m,\\ \mathcal S_k = i}} \sum_{\substack{ x|x_i=1\\x_{\mathcal S}=y}}  \Pr({X=x})\\
&=\sum_{k=1}^m\sum_{\substack{\mathcal S'\subseteq [i-1]\\ |\mathcal S'|=k-1}}\sum_{\substack{\mathcal S''\subseteq [i+1:n]\\ |\mathcal S''|=m-k}} \sum_{\substack{ x|x_i=1\\x_{\mathcal S}=y}}\mathbbm{1}_{\{y_k=1\}}   \Pr({X=x}) \\
&=\sum_{k:y_k=1}\sum_{\substack{\mathcal S'\subseteq [i-1]\\ |\mathcal S'|=k-1}}\sum_{\substack{\mathcal S''\subseteq [i+1:n]\\ |\mathcal S''|=m-k}} \sum_{\substack{ x|x_i=1\\  x_{\mathcal S'}= y_{[1:k-1]}\\ x_{\mathcal S''}= y_{[k+1:m]}}} \Bigg ( \Pr(X_i=1) \Pr( X_{\mathcal S'}= y_{[1:k-1]})  
	\Pr( X_{\mathcal S''}= y_{[k+1:m]}) \\&\hspace{7cm} \Pr( X_{[n]\backslash \mathcal S'\cup \mathcal S''\cup \{i\}}= x_{[n]\backslash \mathcal S'\cup \mathcal S'' \cup\{i\}})\Bigg )\\
	&=p_i\sum_{k:y_k=1}\Bigg ( \Big( \sum_{\substack{\mathcal S'\subseteq [i-1]\\ |\mathcal S'|=k-1}}\Pr( X_{\mathcal S'}= y_{[1:k-1]})\Big) \Big(\sum_{\substack{\mathcal S''\subseteq [i+1:n]\\ |\mathcal S''|=m-k}}\Pr( X_{\mathcal S''}= y_{[k+1:m]})\Big ) \\
 & \hspace{5cm} \Big( \sum_{\substack{ x|x_i=1\\  x_{\mathcal S'}= y_{[1:k-1]}\\ x_{\mathcal S''}= y_{[k+1:m]}}} \Pr( X_{[n]\backslash \mathcal S'\cup \mathcal S''\cup \{i\}}= x_{[n]\backslash \mathcal S'\cup \mathcal S'' \cup\{i\}})\Big) \Bigg )\\ 
	&=p_i\sum_{k|y_k=1}\Bigg(\Big( \sum_{\substack{\mathcal S'\subseteq [i-1]\\ |\mathcal S'|=k-1}}\Pr( X_{\mathcal S'}= y_{[1:k-1]})\Big)	\Big( \sum_{\substack{\mathcal S''\subseteq [i+1:n]\\ |\mathcal S''|=m-k}}\Pr( X_{\mathcal S''}=y_{[k+1:m]}) \Big)\Bigg)\\
	&=p_i \sum\limits_{k|y_k=1}\mathbf F( p_{[1:i-1]},  y_{[1:k-1]})\mathbf F( p_{[i+1:n]},  y_{[k+1,m]}). \numberthis \label{eq:lemma3proof2}
\end{align*}
Plugging in \eqref{eq:lemma3proof1} and \eqref{eq:lemma3proof2} in \eqref{eq:approx_smap_1} proves the theorem.
\end{proof}

Alg.~\ref{alg:apprx_smap_dp} summarizes the computation of $\Pr(X_i=1|Y=y)$. 

\begin{algorithm}[t!]
\caption{Symbolwise posterior probabilities with one trace}\label{alg:apprx_smap_dp}
\begin{algorithmic}[1]
\item Input: Trace {$Y=y$}, priors $p$\\ Outputs: Posteriors $\Pr(X_i=1|Y=y)\ \forall\ i$
\State Compute $\mathbf F(p_{[1:k]},y_{[1:j]})\ \forall\ k,j$ and $\mathbf F(p_{[k:n]},y_{[j:m]})\ \forall\ k,j$ via Alg.~\ref{alg:F_comp}
\For {$i=1:n$}
\State Use \eqref{eq:smap_postprob_thm} to compute $\Pr(X_i=1|Y=y)$
\EndFor
\end{algorithmic}
\end{algorithm}

\noindent\textbf{A trace reconstruction heuristic with $t$ traces.}
The posterior probability computation in Alg.~\ref{alg:apprx_smap_dp} naturally gives rise to a trace reconstruction heuristic that  updates the symbolwise statistics sequentially on the traces, where we use Alg.~\ref{alg:apprx_smap_dp} with one trace at a time to continually update $\Pr(X_i=1|Y=y)$. The overall heuristic  is described in Alg.~\ref{alg:apprx_smap}. Note that the algorithm first needs to compute  $\mathbf F(p_{[1:k]},y_{[1:j]})\ \forall\ k,j$ and $\mathbf F(p_{[k:n]},y_{[j:m]})\ \forall\ k,j$ which requires $O(n^2)$ operations, as described in Appendix~\ref{app:F_compute}. Given this, the algorithm iterates over the $n$ indices and computes the posteriors in $O(n)$ for each of the index. Thus, the complexity of the algorithm is $O(n^2)$; note that $m=O(n)$ since $y$ is a deleted version of the input.

\begin{algorithm}[t!]
\caption{Trace reconstruction via iterative single-trace posterior probabilities}\label{alg:apprx_smap}
\begin{algorithmic}[1]
\item Input: Traces {$Y^{1}=y^1,...,Y^{t}=y^t$}, input length $n$ \\ Outputs: Estimate of the input $\hat X$
\State Initialize priors $p^{old}=p^{new} \gets (0.5,0.5,...,0.5)$
\For {$l=1:t$}
\State Use Alg.~\ref{alg:apprx_smap_dp} with $p^{old}$ and $y^{l}$ to update $p^{new}$
\State $p^{old}\gets p^{new}$
\EndFor
\For {$i=1:n$}
\If {$p^{new}_i\geq 0.5$}
 $\ \hat X_i \gets 1$
\Else
 $\ \hat X_i \gets 0$
\EndIf
\EndFor
\State \textbf{return} $\hat X_1 \hat X_2 ... \hat X_n$
\end{algorithmic}
\end{algorithm}

\section{Symbolwise MAP for the $t$-trace deletion channel}
\label{sec:exactsmap}

In this section, we put to use the ideas and constructs introduced in section~\ref{sec:notation} to exactly compute the symbolwise posterior probabilities given $t$-traces, which in turn gives a symbolwise MAP estimate with uniform input  priors  (motivated by average case trace reconstruction). With this formulation the symbolwise MAP with uniform priors can be seen as a minimizer of the symbol error rate in the context of average case trace reconstruction. In Appendix~\ref{app:remnant_postprob}, we also provide a method to compute the symbolwise posterior probabilities for the remnant channel -- we encourage the reader  to use this appendix as a warm-up. For the $t$-trace deletion channel, similar expressions arise due to the channel equivalence result of Theorem~\ref{thm:channel_equiv}.

Let $\mathcal A = \{0,1\}$, and  assume that $X\sim $ Uniform $\mathcal A^n$. Our goal is to compute the symbolwise posterior probabilities $\Pr(X_i=1|Y^1=y^1,...,Y^t=y^t)$, where $Y^j$ is the $j^{th}$ trace. Our proposed algorithm is provided in Alg.~\ref{alg:exact_smap} and estimates the symbolwise MAP  (with uniform priors). We can  directly leverage   Alg.~\ref{alg:exact_smap} to reconstruct the input as follows: for each index $i$, compute $\Pr(X_i =1|Y^1=y^1,...,Y^t=y^t)$ and decide
\begin{align*}
\hat{X}_i = \begin{cases}
 1,\quad &$if$\  \Pr(X_i =1|Y^1=y^1,...,Y^t=y^t)\geq 0.5 \\
0, \quad &$otherwise$.
\end{cases}
\end{align*}
Through the rest of this section, we show how to compute $\Pr(X_i =1|Y^1=y^1,...,Y^t=y^t)$\footnote{Symbolwise MAP with non-uniform priors is  part of on-going work.} in two steps:
\begin{itemize}
\item We first give an expression for $\Pr(X_i =1|Y^1=y^1,...,Y^t=y^t)$ which sums over potentially an exponential number of terms.
\item We then show that this summation can be computed in polynomial time (polynomial in the blocklength $n$).
\end{itemize}

\noindent \textbf{Step 1: An expression for $\Pr(X_i =1|Y^1=y^1,...,Y^t=y^t)$.}
\begin{theorem}
\label{thm:exactSMAP_posteriorprob}
Assume $X\sim $ Uniform $\mathcal A^n$ or equivalently $X_i \sim\ \text{Ber}(0.5)$. The posterior probability of the $i^{th}$ bit given the $t$ traces can be expressed as
\begin{align*}
\Pr(X_i =1|Y^1=y^1,...,Y^t&=y^t) \\ =
 & \Bigg[ \sum_{k=0}^n 2^{n-k-1}{n-1 \choose k} \sum_{w||w|=k}  \langle y^1 \uparrow ... \uparrow y^t,w \rangle \\
&+ \sum_{k=0}^n \sum_{j=1}^k 2^{n-k}{i-1 \choose j-1}{n-i \choose k-j} \sum_{\substack{w| |w|=k,\\w_j=1}} \langle  y^1 \uparrow ... \uparrow y^t,w \rangle \Bigg] \Big/ \\ 
 &\Bigg[ \sum_{k=0}^n 2^{n-k} {n \choose k} \sum_{\substack{w| |w|=k}} \langle  y^1 \uparrow ... \uparrow y^t,w \rangle  \Bigg]. \numberthis
 \label{eq:posterior_prob}
\end{align*}
\end{theorem}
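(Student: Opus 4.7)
The plan is to apply Bayes' rule, reduce to counting-type sums over all input sequences $x \in \{0,1\}^n$, and then use Lemma~\ref{lemma:bin_inf_relation} to convert the product of binomial coefficients into a single sum involving the infiltration coefficients $\langle y^1\uparrow\cdots\uparrow y^t,w\rangle$. Writing $\Pr(X_i=1|Y^1=y^1,\ldots,Y^t=y^t) = \Pr(X_i=1,Y^1=y^1,\ldots,Y^t=y^t)/\Pr(Y^1=y^1,\ldots,Y^t=y^t)$, I first use the fact that the $t$ channels are independent given $X$, together with~\eqref{eq:in-out_relation}, to write $\Pr(Y^1=y^1,\ldots,Y^t=y^t\mid X=x)=\prod_{j=1}^t\binom{x}{y^j}\delta^{n-|y^j|}(1-\delta)^{|y^j|}$. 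Since the $\delta$-factors depend only on the $y^j$'s, they factor outside and cancel between numerator and denominator. Combined with $\Pr(X=x)=2^{-n}$, this reduces the task to computing $\sum_{x\in\{0,1\}^n}\prod_{j}\binom{x}{y^j}$ (for the denominator) and $\sum_{x:x_i=1}\prod_{j}\binom{x}{y^j}$ (for the numerator).

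Next, I apply Lemma~\ref{lemma:bin_inf_relation} to expand the product: $\prod_{j=1}^t\binom{x}{y^j}=\sum_{w\in\mathcal{A}^*}\langle y^1\uparrow\cdots\uparrow y^t,w\rangle\binom{x}{w}$. Swapping the order of summation, both the numerator and the denominator become a weighted sum over $w$ of $\sum_{x}\binom{x}{w}$ and $\sum_{x:x_i=1}\binom{x}{w}$, respectively, with the weight $\langle y^1\uparrow\cdots\uparrow y^t,w\rangle$. The remaining work is a combinatorial count. Using $\binom{x}{w}=\sum_{\mathcal{S}\subseteq[n],|\mathcal{S}|=|w|}\mathbbm{1}\{x_{\mathcal{S}}=w\}$ and interchanging sums once more, I pick a subset $\mathcal{S}$ of size $k=|w|$, fix $x_{\mathcal{S}}=w$, and let the remaining $n-k$ coordinates of $x$ range freely. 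This gives $\sum_{x\in\{0,1\}^n}\binom{x}{w}=\binom{n}{k}2^{n-k}$, which matches the denominator in~\eqref{eq:posterior_prob}.

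For the numerator I split into the cases $i\notin\mathcal{S}$ and $i\in\mathcal{S}$ (mirroring the bookkeeping already used in the proofs of Lemma~\ref{lemma:F_decomposition} and Theorem~\ref{thm:approx_smap_postprob}). In the first case, the constraint $x_i=1$ removes one free coordinate, yielding $\binom{n-1}{k}2^{n-k-1}$ and accounting for the first term of the numerator. In the second case, when $\mathcal{S}_j=i$ for some $j\in[k]$, the constraint $x_i=1$ is consistent only if $w_j=1$; the $j-1$ earlier positions of $\mathcal{S}$ must lie in $[i-1]$ and the $k-j$ later ones in $[i+1{:}n]$, giving $\binom{i-1}{j-1}\binom{n-i}{k-j}2^{n-k}$ for each valid $j$, and restricting the $w$-sum to $w$ with $w_j=1$. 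Summing over $j$ and $w$ (and weighting by the infiltration coefficient) produces the second term of the numerator, completing the identification with~\eqref{eq:posterior_prob}.

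I do not expect any deep obstacle in this derivation: the infiltration identity does the heavy lifting by collapsing the product of $t$ binomial coefficients into a linear object in $\binom{x}{w}$, after which everything reduces to a careful but standard subset-counting argument. The main care point is simply to handle the positional bookkeeping for the case $i\in\mathcal{S}$ correctly and to track that the $\delta$-dependent factors cancel cleanly so that the final expression depends only on the traces via the infiltration coefficients.
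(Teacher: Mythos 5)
Your proposal is correct and follows essentially the same route as the paper: Bayes' rule plus channel independence to reduce to the sums $\sum_{x}\prod_j\binom{x}{y^j}$ and $\sum_{x:x_i=1}\prod_j\binom{x}{y^j}$, Lemma~\ref{lemma:bin_inf_relation} to linearize the product via infiltration coefficients, and then the subset-counting identity for $\sum_{x:x_i=1}\binom{x}{w}$. The only cosmetic difference is that you re-derive that counting identity inline, whereas the paper isolates it as Lemma~\ref{lemma:smapsum}; the bookkeeping in your two cases ($i\notin\mathcal S$ giving $\binom{n-1}{k}2^{n-k-1}$, and $\mathcal S_j=i$ forcing $w_j=1$ and giving $\binom{i-1}{j-1}\binom{n-i}{k-j}2^{n-k}$) matches the paper exactly.
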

Note that the summation index, $w| |w|{=}k$ is over all sequences $w$ of length $k$; this is an alternate expression for $w|w{\in}\mathcal A^k$. We follow this convention throughout the rest of the paper. 
\begin{proof}
\begin{align*}
\Pr(X_i=1&|Y^1=y^1,...,Y^t=y^t) = \sum_{\substack{x||x|=n,\\x_i=1}} \Pr(X=x|Y^1=y^1,...,Y^t=y^t)\\
&\overset{(a)}{=} \frac{1}{2^n \Pr(Y^1=y^1,...,Y^t=y^t)} \sum_{\substack{x||x|=n,\\x_i=1}} \Pr(Y^1=y^1,...,Y^t=y^t|X=x)\\
&\overset{(b)}{=} \frac{1}{2^n \Pr(Y^1=y^1,...,Y^t=y^t)} \sum_{\substack{x||x|=n,\\x_i=1}} \prod_{j=1}^t\Pr(Y^{j}=y^j|X=x),
\end{align*}
where $(a)$ uses Bayes' principle and $(b)$ is because each deletion channel acts independently. Recall that for  a deletion channel with deletion probability $\delta$, $\Pr(Y=y|X=x)={x \choose y}\delta^{|x|-|y|}(1-\delta)^{|y|}$. Also, using the fact that $\Pr(Y^1=y^1,...,Y^t=y^t)=\sum\limits_{\substack{x||x|=n}}\Pr(x) \Pr(Y^1=y^1,...,Y^t=y^t|X=x)$ we have,
\begin{align*}
\Pr(X_i=1|Y^1=y^1,...,Y^t=y^t)=  \frac{\sum\limits_{\substack{x||x|=n,\\x_i=1}} {x \choose y^1}...{x \choose y^t}}{\sum\limits_{\substack{x||x|=n}} {x \choose y^1}...{x \choose y^t}}. \numberthis \label{eq:thmexactsmap_proofterm0}
\end{align*}

We first simplify the numerator $\sum\limits_{\substack{x||x|=n,\\x_i=1}} {x \choose y^1}...{x \choose y^t}$; the denominator can be simplified using the same approach. Now,

\begin{align*}
\sum\limits_{\substack{x||x|=n,\\x_i=1}} {x \choose y^1}...{x \choose y^t}  &\overset{(a)}{=} \sum_{\substack{x||x|=n,\\x_i=1}}  \sum_{w\in \{0,1\}^*} {x \choose w} \langle y^1  \uparrow ... \uparrow y^t,w \rangle \\
  &=\sum_{w\in \mathcal A^*} \langle y^1  \uparrow ... \uparrow y^t,w \rangle \sum_{\substack{x||x|=n,\\x_i=1}}{x \choose w}\\
&\overset{(b)}{=}\sum_{w\in \mathcal A^*} 2^{n-|w|}  \langle y^1  \uparrow ... \uparrow y^t,w \rangle   \left(\frac{1}{2}{n-1 \choose |w|}+\sum_{j|w_j=1}{i-1 \choose j-1}{n-i \choose |w|-j}\right)
\end{align*}
where $(a)$ is due to Lemma~\ref{lemma:bin_inf_relation} and $(b)$ due to Lemma~\ref{lemma:smapsum} (both introduced in \cite{Srini2018}); see Appendix~\ref{app:bin_inf_lemma} and Appendix~\ref{app:smapsum} for the statement and proof. 

\noindent Therefore we have,
\begin{align*}
\sum\limits_{\substack{x||x|=n,\\x_i=1}} {x \choose y^1}...{x \choose y^t} 
&\overset{(a)}{=} \sum_{k=0}^{\infty} 2^{n-k-1}{n-1 \choose k} \sum_{w| |w|=k}  \langle y^1  \uparrow ... \uparrow y^t,w \rangle \\
&\hspace{1cm}+ \sum_{k=0}^{\infty} \sum_{j=1}^k 2^{n-k}{i-1 \choose j-1}{n-i \choose k-j} \sum_{\substack{w| |w|=k,\\w_j=1}} \langle y^1  \uparrow ... \uparrow y^t,w \rangle \\ 
&\overset{(b)}{=} \sum_{k=0}^n 2^{n-k-1}{n-1 \choose k} \sum_{w| |w|=k}  \langle y^1  \uparrow ... \uparrow y^t,w \rangle \\
&\hspace{1cm}+ \sum_{k=0}^n \sum_{j=1}^k 2^{n-k}{i-1 \choose j-1}{n-i \choose k-j} \sum_{\substack{w| |w|=k,\\w_j=1}} \langle y^1  \uparrow ... \uparrow y^t,w \rangle, \numberthis
\label{eq:thmexactsmap_proofterm1}
\end{align*}
where in $(a)$ we first fix $|w|$ and then sum over all $w$ of the given length and $(b)$ holds because the combinatorial terms are $0$ when $k>n$. 
A similar analysis gives  \begin{align*}
\sum\limits_{x| |x|=n} &{x \choose y^1}...{x \choose y^t}  = \sum_{k=0}^n 2^{n-k} {n \choose k} \sum_{\substack{w| |w|=k}} \langle y^1  \uparrow ... \uparrow y^t,w \rangle.\numberthis
\label{eq:thmexactsmap_proofterm2}
\end{align*} 
Plugging \eqref{eq:thmexactsmap_proofterm1} and \eqref{eq:thmexactsmap_proofterm2} in \eqref{eq:thmexactsmap_proofterm0}, we get the expression in Theorem~\ref{thm:exactSMAP_posteriorprob},
\begin{align*}
\Pr(X_i =1|Y^1=y^1,...,&Y^t=y^t) \\ =
 & \Bigg[ \sum_{k=0}^n 2^{n-k-1}{n-1 \choose k} \sum_{w||w|=k}  \langle y^1  \uparrow ... \uparrow y^t,w \rangle \\
&+ \sum_{k=0}^n \sum_{j=1}^k 2^{n-k}{i-1 \choose j-1}{n-i \choose k-j} \sum_{\substack{w| |w|=k,\\w_j=1}} \langle y^1  \uparrow ... \uparrow y^t,w \rangle \Bigg] \Big/ \\ 
 &\Bigg[ \sum_{k=0}^n 2^{n-k} {n \choose k} \sum_{\substack{w| |w|=k}} \langle y^1  \uparrow ... \uparrow y^t,w \rangle \Bigg].
\end{align*}
\end{proof}

\noindent \textbf{Step 2: Dynamic program to compute $\sum\limits_{w| |w|=k}  \langle y^1  \uparrow ... \uparrow y^t,w \rangle$ and $\sum\limits_{\substack{w| |w|=k,\\w_j=1}} \langle y^1  \uparrow ... \uparrow y^t,w \rangle$.}
  Note that the number of sequences $w$ such that $|w|=k$ is $O(2^k)$ so a naive evaluation is exponential in the blocklength $n$. We can, however, exploit the edit graph to come up with a dynamic program resulting in an algorithm which is polynomial in $n$. 

Recall that in the edit graph, $\langle y^1  \uparrow ... \uparrow y^t,w \rangle$ is equal to the number of distinct paths from the origin $(0,...,0)$ to the destination $(|y^1|,...,|y^t|)$ and which correspond to $w$. Hence,
\begin{enumerate}[wide=0pt]
\item[(a)]  $\sum\limits_{w| |w|=k}  \langle y^1  \uparrow ... \uparrow y^t,w \rangle$ is the number of distinct paths of length $k$ from origin to destination and,
\item[(b)]  $\sum\limits_{\substack{w| |w|=k,\\w_j=1}} \langle y^1  \uparrow ... \uparrow y^t,w \rangle$ is the number of such paths of length $k$ such that the $j^{th}$ edge of the path corresponds to a `1'.
\end{enumerate}

\noindent With this interpretation, the dynamic program for (a) follows naturally -- the number of  $k$-length paths from the origin to any vertex is the sum of the number of $(k{-}1)$-length paths from the origin to all incoming neighbors of the vertex. To make this formal, associate a  polynomial (in $\lambda$) for each vertex, such that the coefficient of $\lambda^k$ 
is equal to the number of paths of length $k$ from the origin to $v$: we call it the "forward-potential" polynomial $p^{for}_v(\lambda)$ for vertex $v$, the coefficient of $\lambda^k$ as earlier is denoted by $\langle p^{for}_v(\lambda),\lambda^k \rangle $. The dynamic program to compute $p^{for}_v(\lambda)$ for all $v$ can be expressed as:
\begin{equation}
p^{for}_v(\lambda) = \sum_{u|u\rightarrow v} \lambda p^{for}_u(\lambda).
\end{equation}
With this definition, we have $$\sum\limits_{w| |w|=k}  \langle y^1  \uparrow ... \uparrow y^t,w \rangle =\langle p^{for}_{destination}(\lambda),\lambda^k \rangle.$$ In the example in Fig.~\ref{fig:editgraph_smap1}, one could do the following: order the vertices $(0,0)$ to $(3,3)$ lexicographically and then compute $p^{for}_v(\lambda)$ in the same order. Because of the directed grid nature of the edit graph, every vertex has incoming neighbors which are lexicographically ahead of itself. Also we initialize $p^{for}_{(0,0)}(\lambda)=1$.  For the example in Fig.~\ref{fig:editgraph_smap1}, the forward-potentials are shown in Fig.~\ref{fig:editgraph_smap2}. The complexity of this dynamic program is $O(2^tn^{t+1})$ as it goes over $O(n^t)$ vertices and for each vertex it sums $O(2^t)$ polynomials, each of degree $O(n)$.

\begin{figure}[!h]
\centering
\includegraphics[scale=0.25]{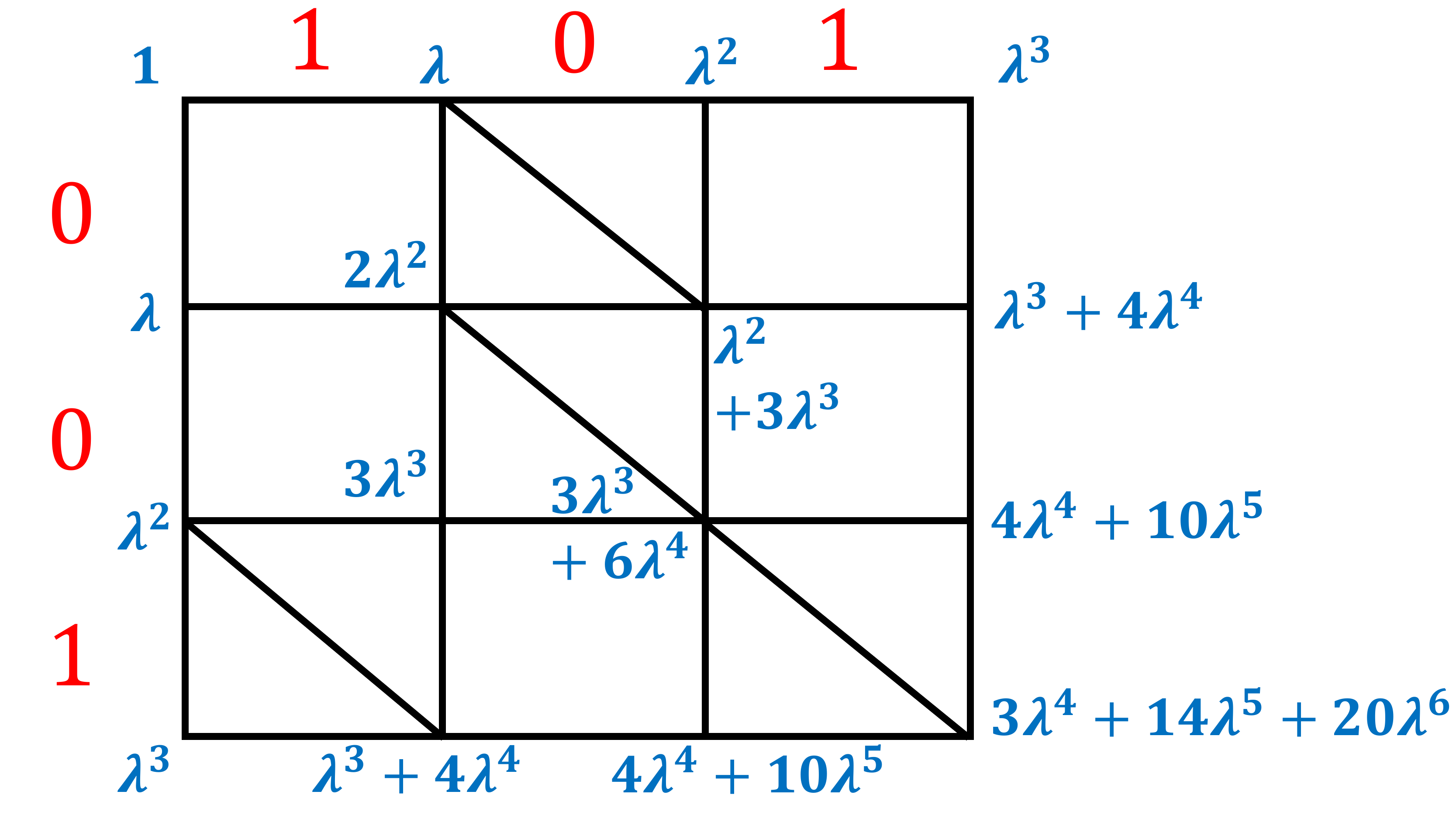}
\caption{The forward-potential $p^{for}_v(\lambda)$ at each vertex.} 
\label{fig:editgraph_smap2}
\end{figure}

We compute (b) as follows: pick an edge $(u{\rightarrow}v)$ which corresponds to `1', count the number of $(j{-}1)$-length paths from origin to $u$ and multiply it with the number of $(k{-}j)$-length paths from $v$ to the destination -- this is exactly the number of paths of length $k$ such that its $j^{th}$ edge is $(u{\rightarrow}v)$. Summing this term for all such edges which correspond to 1 gives us the term in (b). Note that we have already computed the number of $k$-length paths ($\forall k$) from origin to every vertex  in $p^{for}_v(\lambda)$ . We can similarly compute the number of $k$-length  paths ($\forall k$) from every vertex to the destination as $p^{rev}_v(\lambda)$ -- the "reverse potential" polynomial. The dynamic program for $p^{rev}_v(\lambda)$ is:
\begin{equation}
p^{rev}_v(\lambda) = \sum_{u|v\rightarrow u} \lambda p^{rev}_u(\lambda),
\end{equation}
with $p^{rev}_{destination}(\lambda)=1$.  The reverse potentials for the example in Fig.~\ref{fig:editgraph_smap1} is shown in Fig.~\ref{fig:editgraph_smap3}. Like in the case of forward potential, we first order the vertices reverse lexicographically and then invoke the dynamic program above sequentially to compute the reverse potential polynomial at each vertex.

\begin{algorithm}[t!]
\caption{Computing the forward-potentials $p^{for}_u(\lambda)$ }\label{alg:forward_pot}
\begin{algorithmic}[1]
\item Input: Edit graph {$\mathcal G(y^1,...,y^t)$}\\ Outputs: $p^{for}_v(\lambda)\ \forall\ v$
\State Order the vertices from $(0,0,...,0)$ to $(|y^1|,|y^2|,...,|y^t|)$ lexicogaphically; let the ordered list be $\mathcal V$
\State Initialise $p^{for}_{(0,...,0)}(\lambda)\gets 1$
\For{$v\ \in\ \mathcal V$}
\State \textbf{assign} $p^{for}_v(\lambda)\gets  \sum_{u|u\rightarrow v} \lambda p^{for}_u(\lambda)$
\EndFor
\end{algorithmic}
\end{algorithm}

\begin{algorithm}[t!]
\caption{Computing the reverse-potentials $p^{rev}_u(\lambda)$ }\label{alg:reverse_pot}
\begin{algorithmic}[1]
\item Input: Edit graph {$\mathcal G(y^1,...,y^t)$}\\ Outputs: $p^{rev}_v(\lambda)\ \forall\ v$
\State Order the vertices from $(|y^1|,|y^2|,...,|y^t|)$ to $(0,0,...,0)$ reverse lexicogaphically; let the ordered list be $\mathcal V$
\State Initialise $p^{rev}_{(|y^1|,|y^2|,...,|y^t|)}(\lambda)\gets 1$
\For{$v\ \in\ \mathcal V$}
\State \textbf{assign} $p^{rev}_v(\lambda) \gets \sum_{u|v\rightarrow u} \lambda p^{rev}_u(\lambda)$
\EndFor
\end{algorithmic}
\end{algorithm}

\begin{figure}[!h]
\centering
\includegraphics[scale=0.25]{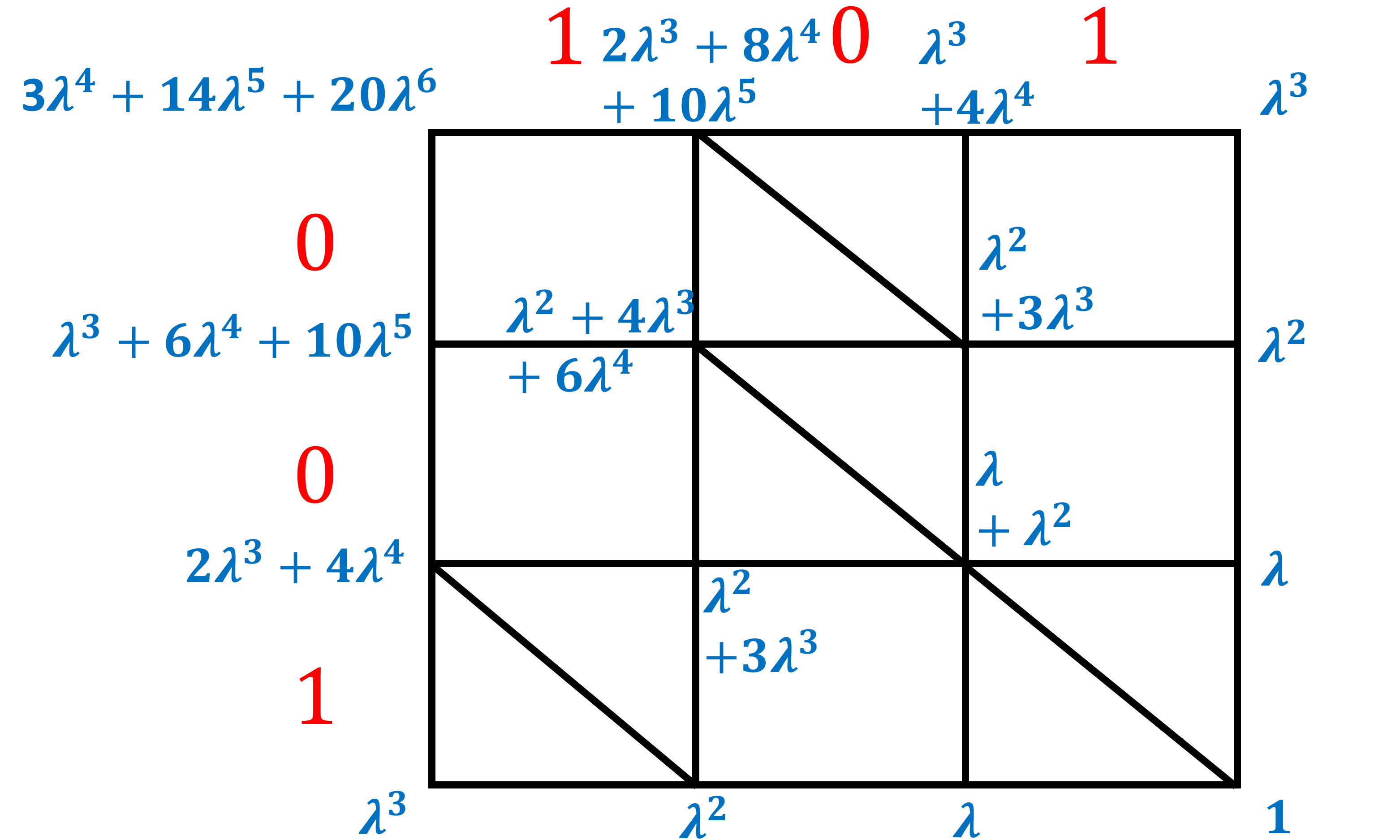}
\caption{The reverse-potential $p^{rev}_v(\lambda)$ at each vertex.} 
\label{fig:editgraph_smap3}
\end{figure}

With this, the term in (b) can be expressed as:

\begin{align*}
\sum\limits_{\substack{w| |w|=k,\\w_j=1}}\langle y^1 \uparrow ... \uparrow y^t,w \rangle =\hspace{-2mm} \sum_{\substack{(u,v)|\\s(u\rightarrow v)=1}} \hspace{-2mm}\langle p^{for}_u(\lambda),\lambda^{j-1} \rangle \langle p^{rev}_v(\lambda),\lambda^{k-j} \rangle.
\end{align*}

Alg.~\ref{alg:exact_smap} now summarizes the computation of the posterior probabilities. This algorithm iterates over all the edges (we have $O((2n)^t)$ of these), and also $k,j$ ($O(n)$ each). The time complexity of Alg.~\ref{alg:exact_smap} hence is $O(2^tn^{t+2})$.

\begin{algorithm}[t!]
\caption{Symbolwise MAP with $t$ traces} 
\label{alg:exact_smap}
\begin{algorithmic}[1]
\item Input: Traces {$Y^1=y^1,...,Y^t=y^t$, input length $n$}\\ Output: $\hat X = \hat X_1\hat X_2...\hat X_n$
\State Construct edit graph $\mathcal G(y^1,...,y^t)$
\State Use Alg.~\ref{alg:forward_pot} and Alg.~\ref{alg:reverse_pot} on $\mathcal G(y^1,...,y^t)$ to calculate $p^{for}_v(\lambda)$ and $p^{rev}_v(\lambda)$ $\forall\ v$

\For{$k \in\ [0:n]$}
\State \textbf{assign} $\sum\limits_{w| |w|=k}  \langle y^1 \uparrow ... \uparrow y^t,w \rangle \gets \langle p^{for}_{destination}(\lambda),\lambda^k \rangle.$
\For{each $j \in\ [1:n]$}
\State Initialize $temp \leftarrow $   0
\For{each edge $u\rightarrow v\ \in\ \mathcal G$}
\If{$s(u{\rightarrow} v)=$ `1'}
\State $temp\ +=  \langle p^{for}_u(\lambda),\lambda^{j-1} \rangle \langle p^{rev}_v(\lambda),\lambda^{k-j} \rangle$
\EndIf
\EndFor
\State \textbf{assign} $\sum\limits_{\substack{w| |w|=k,\\w_j=1}} \langle y^1 \uparrow ... \uparrow y^t,w \rangle \gets temp$
\EndFor
\EndFor
\For{$i \in\ [1:n]$}
\State Use \eqref{eq:posterior_prob} to compute $\Pr(X_i=1|Y^1=y^1,...,Y^t=y^t)$
\State $\hat X_i \leftarrow 1$ if $\Pr(X_i=1|Y^1=y^1,...,Y^t=y^t) > 0.5$ and $\hat X_i \leftarrow 0$ otherwise
\EndFor
\State \textbf{return} $\hat X_1\hat X_2...\hat X_n$
\end{algorithmic}
\end{algorithm}

\section{Numerical results}
\label{sec:Numerics}

In this section we show numerics supporting our theoretical results. In all of our experiments, we generate the input sequence uniformly at random (motivated by average case trace reconstruction), and obtain the $t$ traces by passing the input through a deletion channel (with a deletion probability $\delta$) $t$ times. We then reconstruct the input from the obtained traces and measure how \textit{close} the reconstructed sequence is, to the actual input sequence.

We use two metrics  to measure the performance of the reconstruction algorithms: 1. \textit{Hamming error rate}, which is defined as the average Hamming distance between the actual input and the estimated sequence divided by the length of the input sequence and 2. \textit{Edit error rate}, which is defined as the average edit distance between the actual input and the estimated sequence divided by the length of the input sequence.  The reason for using Hamming error rate is that our goal is to reconstruct a \textit{known-length} sequence, which has been the problem formulation throughout this work.  Moreover, the Hamming error rate is also of special interest to us since the symbolwise MAP is an optimal estimator for minimizing the Hamming error rate (see Appendix~\ref{app:smap_hamming} for a proof). We also use edit error rate as it is a typical metric used in the context of insertion/deletion channels.

{\small
\begin{center}
\begin{tabular}{ |P{3cm}|P{9.5cm}|P{2.5cm}|  }
 \hline
 \multicolumn{3}{|c|}{List of trace reconstruction algorithms compared in this work.} \\
\hline
 \bf Abbreviation & \bf Description & \bf Complexity\\
  \hline
Ind. post. comb. & Independent posterior combination (Alg.~\ref{alg:ind_comb}) & $O(n^2t)$\\
 \hline 
 BMA    & Bitwise majority alignment of \cite{Batu2004} (Alg.~\ref{alg:bitwise_majority}) & $O(nt)$ \\
\hline
 Trace stats. & Algorithm based on trace symbolwise statistics from \cite{Holenstein2008} (Alg.~\ref{alg:trace_statistics}) 
 & $O(n^{3.37}+nt)$\\
 \hline
 Grad asc.   &  Projected gradient ascent (Alg.~\ref{alg:grad_asc_traces}) &  $O(n^2t)$\\
  \hline
 SMAP seq. & Sequential symbolwise MAP heuristic (Alg.~\ref{alg:apprx_smap}) & $O(n^2t)$ \\
  \hline
 SMAP exact & Exact symbolwise MAP (Alg.~\ref{alg:exact_smap}) & $O(n^{t+2}2^t)$\\
  \hline
\end{tabular}
\end{center}}

\vspace{5mm}

\noindent \textbf{Baseline algorithms:}

\begin{enumerate}[leftmargin = *]
\item \textbf{Independent posterior combination:} As pointed in the introduction, computing the posterior probabilities for  each deletion channel and combining them as if they came from independent observations does not provide a natural solution for computing the posterior probabilities for the $t$-trace deletion channel. One could, however, check how such a naive combination of posteriors compares with our reconstruction algorithms for $t$-traces. This is detailed as Alg.~\ref{alg:ind_comb}. The complexity of this algorithm is $O(n^2t)$ since computing the posteriors takes $O(n^2)$ and we compute posteriors for $t$ traces.
\item \textbf{Bitwise Majority Alignment (introduced in \cite{Batu2004}):} BMA reconstructs the input sequence by first ``aligning'' the traces using a pointer for each trace, and then taking the majority of the pointed symbols. BMA is detailed as Alg.~\ref{alg:bitwise_majority}. From an efficiency standpoint, BMA is the most efficient of all the algorithms since it is linear in the blocklength as well as the number of traces ($O(nt)$).
\item \textbf{Trace statistics algorithm:} An algorithm based on trace symbol statistics (also called mean-based algorithms and summary statistics algorithms) has been extensively studied for worst-case trace reconstruction (see \cite{Holenstein2008}, \cite{De2017}, \cite{Nazarov:2017}). In essence, the algorithm first estimates the ``trace symbol statistics'' -- $\Pr(Y_i=1)\ \forall\ i$ -- from the obtained traces and uses only these estimates to reconstruct $X$. However, it uses a new set of traces for every position $i$, thus requiring at least $n$ traces (see (3.6) and the paragraph below (3.8) in \cite{Holenstein2008}). Here we modify the algorithm to adapt them for an arbitrary number of traces; in particular, we reuse the traces while estimating  $\Pr(Y_i=1)\ \forall\ i$. The algorithm is detailed in Alg.~\ref{alg:trace_statistics}. 

The complexity analysis for this gets tricky since it depends on the algorithm used to solve the set of $2n$ linear programs. The state-of-the-art algorithm for solving a linear program in $n$ variables takes approximately $O(n^{2.37})$ (see \cite{cohen2019solving}); thus the complexity of Trace statistics algorithm is $O(n^{3.37}+nt)$, where the $nt$ term corresponds  to the complexity of computing $\hat p_j$. However, in our implementation we use the solver from the "SciPy" Python library which uses primal-dual interior point methods for solving linear programs. The complexity of such methods is typically $O(n^3)$ making our implementation $O(n^4+nt)$. Also note that these are iterative methods and have many hidden constants (such as the number of iterations for convergence).
\end{enumerate}
We note that the state-of-the-art average-case trace reconstruction algorithms in the literature are applicable in the asymptotic regime where the blocklength $n$ and the number of traces $t$ approach $\infty$; it is not clear how to adapt such algorithms for a finite blocklength and a small number of traces. It is for this reason that we chose to compare against BMA and Trace statistics algorithm, which can be easily adapted for the finite blocklength regime and for a small number of traces. It should also be noted that the performance of the above two algorithms may not be reliable with a small number of traces (as they are not designed for this regime), yet we include them owing to the lack of better baselines.\\

\begin{algorithm}[t!]
\caption{Trace reconstruction via independent posterior combination}\label{alg:ind_comb}
\begin{algorithmic}[1]
\item Input: Traces {$Y^{1}=y^1,...,Y^{t}=y^t$}, input length $n$ \\ Outputs: Estimate of the input $\hat X$
\State Initialize priors $p^{old} \gets (0.5,0.5,...,0.5)$
\For {$l=1:t$}
\State Use Alg.~\ref{alg:apprx_smap_dp} with $p^{old}$ and $y^l$ to compute posteriors $p^{l,new}$
\EndFor
\For {$i=1:n$}
\If {$\prod_{l=1}^t p^{l,new}_i \geq \prod_{l=1}^t (1-p^{l,new}_i)$}
 $\ \hat X_i \gets 1$
\Else
 $\ \hat X_i \gets 0$
\EndIf
\EndFor
\end{algorithmic}
\end{algorithm}

\begin{algorithm}[!t]
\caption{Bitwise Majority Alignment} 
\label{alg:bitwise_majority}
\begin{algorithmic}[1]
\item Input: Traces {$Y^{1}=y^1,...,Y^{t}=y^t$, input length $n$}\\ Output: estimate of input $\hat X = \hat X_1 \hat X_2...\hat X_n$.
\State Initialize $c_j=1$ for $j\in [t]$.
\State Initialize $\hat X_i = 1$ for $i \in [n]$.
\For{$i \in\ [1:n]$}
\State Let $b$ be the majority over all $t$ of $y^j_{c_j}$
\State $\hat X_i \gets b$
\State Increment $c_j$ for each $j$ such that $y^j_{c_j} = b$ 
\EndFor
\end{algorithmic}
\end{algorithm}

\begin{algorithm}[!t]
\caption{Trace statistics heuristic} 
\label{alg:trace_statistics}
\begin{algorithmic}[1]
\item Input: Traces {$Y^{1}=y^1,...,Y^{t}=y^t$, input length $n$}\\ Output: estimate of input $\hat X = \hat X_1 \hat X_2...\hat X_n$.
\State Append each trace $y^j$ with zeros until each of them is of length $n$.
\State Assign $\hat p_j \leftarrow \frac{|\{y^l:y^l_j=1\}|}{t}$.
\For{$i \in\ [1:n]$}
\State Solve the 2 linear programs (3.6) in \cite{Holenstein2008}  by fixing $x_i=0$ and $x_i=1$: let the optimum value in the two cases be $m_0$ and $m_1$ respectively.
\State If $m_0<m_1$, assign $\hat X_i = x_i \leftarrow 0$. Else fix $\hat X_i = x_i \leftarrow 1$. 
\EndFor
\end{algorithmic}
\end{algorithm}

\newpage
\noindent \textbf{Algorithms introduced in this paper:}

\begin{enumerate}[leftmargin = *]
\item \textbf{Projected gradient ascent:} Alg.~\ref{alg:grad_asc_traces} used as described, with max iterations $M=100$ and convergence criteria $C$ set as follows: the percentage difference in $\sum_j \mathbf F(p,y^j)$ over two consecutive iterations is less than 0.1\%.
\item \textbf{Symbolwise MAP sequentially used one trace at a time:} Alg.~\ref{alg:apprx_smap} used as described.
\item \textbf{Exact symbolwise MAP:} Alg.~\ref{alg:exact_smap} used as described.
\end{enumerate}

\vspace{7mm}

\textbf{Observations:} In Fig.~\ref{fig:numerics_hamming} and Fig.~\ref{fig:numerics_edit}, we compare the Hamming and edit error rates for the different algorithms described above.
\begin{itemize}[leftmargin = *]
\item The 3 algorithms introduced in this work outperform the 3 baselines in most cases. The Hamming error rate of Grad asc. with 2 and 3 traces is a notable exception as it does worse than Ind. post. comb. However, it improves rapidly as we increase the number of traces as seen in Fig.~\ref{fig:numerics_hamming}. 
\item Both Ind. post. comb. as well as our SMAP seq. struggle with the problem of \textit{diminishing returns} for Hamming error rate as they do not improve much with the number of traces. This could indicate that considering traces one at a time could fail to accumulate extrinsic information (for instance, it completely neglects the possible alignments given multiple traces); one needs to simultaneously consider multiple traces in order to accomplish this. SMAP seq. however, improves with the number of traces with respect to edit error rate.
\item The Grad asc. is the ``champion'' amongst the algorithms we compare here, when it comes to the edit error rate as illustrated by Fig.~\ref{fig:numerics_edit}. The Grad asc. was constructed with the aim of maximizing the likelihood of the observed traces, and this in turn seems to have some correlation with minimizing the edit distance -- it is not clear why this is the case. 
\item As seen in Fig.~\ref{fig:numerics_hamming} (a) and (b), SMAP exact has the minimum Hamming error rate. This supports the fact that symbolwise MAP is the minimizer of the Hamming error rate. However, note that this does not necessarily minimize the edit error rate, as seen from Fig.~\ref{fig:numerics_edit} (a) and (b). 
\end{itemize}

\begin{figure}[!t]
\centering
\includegraphics[scale=0.85]{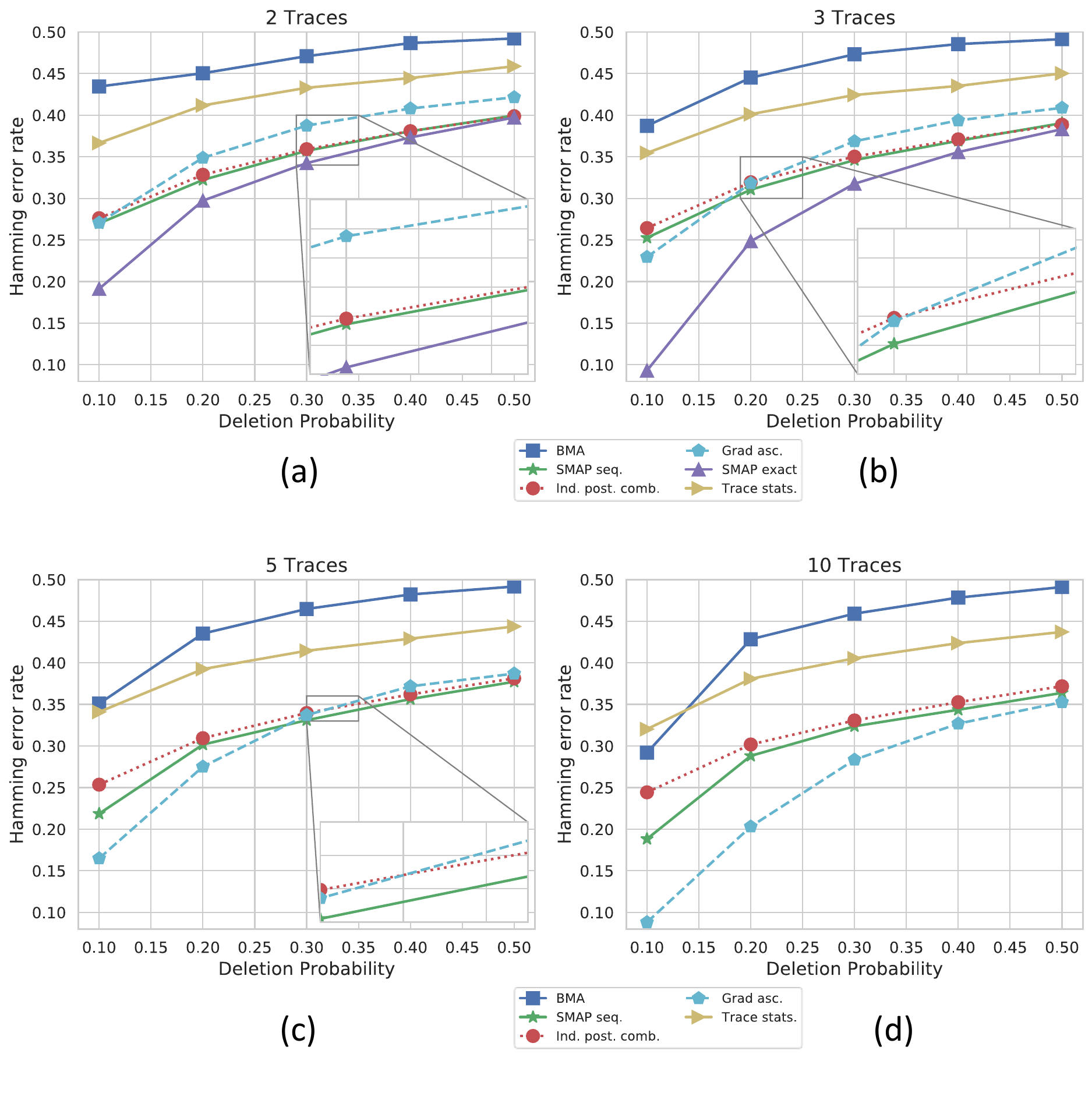}
\caption{Comparison of Hamming error rates for a blocklength $n=100$ illustrated with 2,3,5 and 10 observed traces. Note that we do not run SMAP exact. for 5 and 10 traces since its complexity grows exponentially with the number of traces. All the subplots are plotted on the same scale to aid comparability across subplots. Few of the subplots which contain algorithms with similar error rates also contain a zoomed-in inset view.}
\label{fig:numerics_hamming}
\end{figure}
\clearpage

\begin{figure}[!t]
\centering
\includegraphics[scale=0.85]{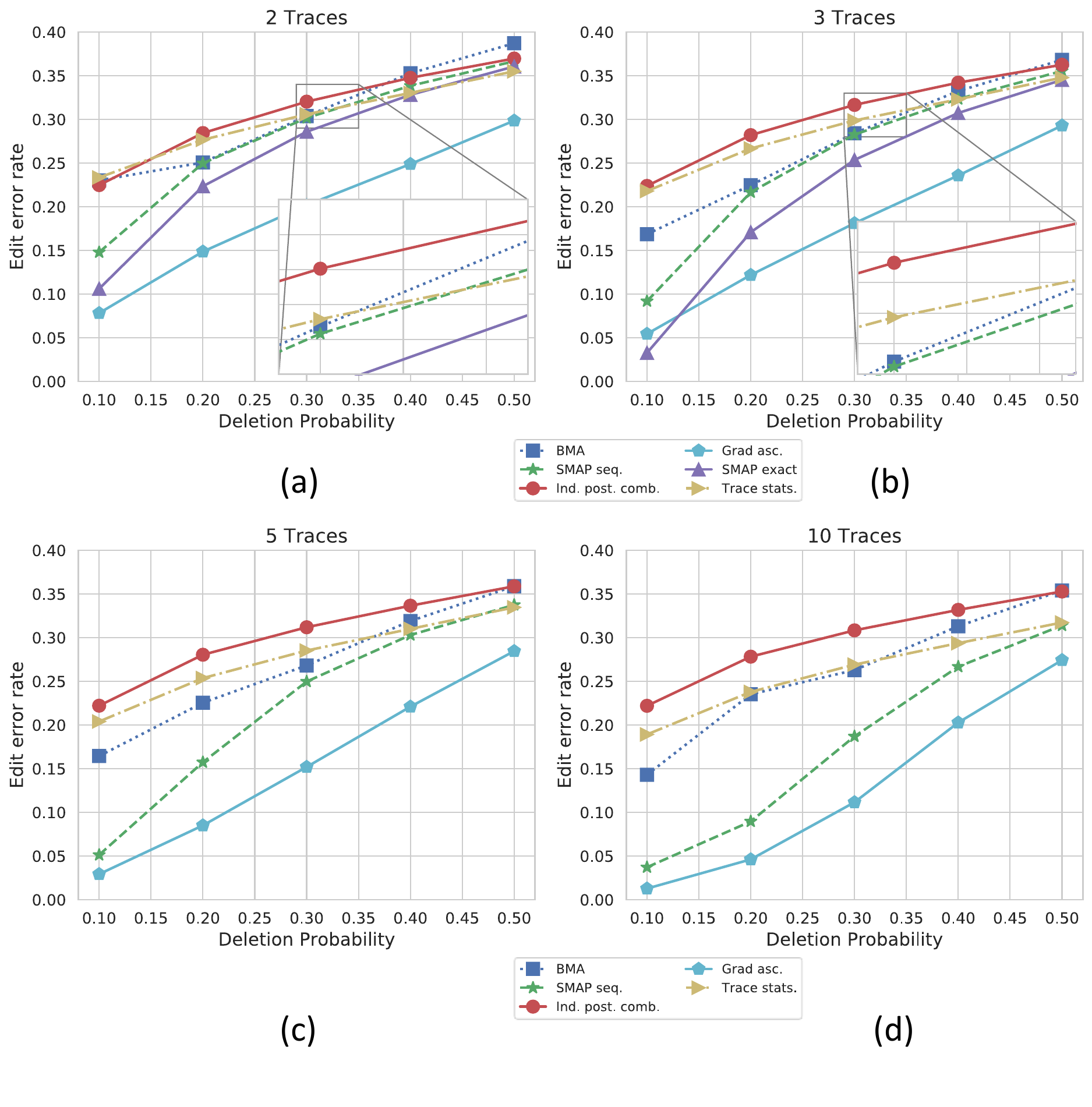}
\caption{Comparison of edit error rates for a blocklength $n=100$ illustrated with 2,3,5 and 10 observed traces. Note that we do not run SMAP exact. for 5 and 10 traces since its complexity grows exponentially with the number of traces. All the subplots are plotted on the same scale to aid comparability across subplots. Few of the subplots which contain algorithms with similar error rates also contain a zoomed-in inset view.}
\label{fig:numerics_edit}
\end{figure}
\clearpage

\section{Conclusions}

\label{sec:conclusions}
In this work we gave, to the best of our knowledge, the first results and techniques to compute posterior distributions over single and multiple deletion channels. We also provided a new perspective on the maximum-likelihood for the deletion channel by showing an equivalence between a discrete optimization problem and its relaxed version. In this process, we introduced a variety of tools (the relaxed binomial coefficient, edit graph and infiltration product) and demonstrated their use for analyzing deletion channels.  We also presented numerical evaluations of our algorithms and showed performance improvements over existing trace reconstruction algorithms.





\bibliographystyle{IEEEtran}
\bibliography{mybib}

\appendix
\subsection{Proofs.}

\subsubsection{Proof of Theorem~\ref{thm:channel_equiv}}
\label{app:channel_equiv}

The intuition behind the theorem is that the cascade model splits the error events in the $t$-trace deletion channel into 2 parts:\\
- When an input symbol is deleted in all the traces, which is captured by the deletion channel with parameter $\delta^t$.\\
- When an input symbol is not deleted in at least one of the traces, captured by the remnant channel.
\begin{figure}[!h]
\centering
\includegraphics[scale=0.4]{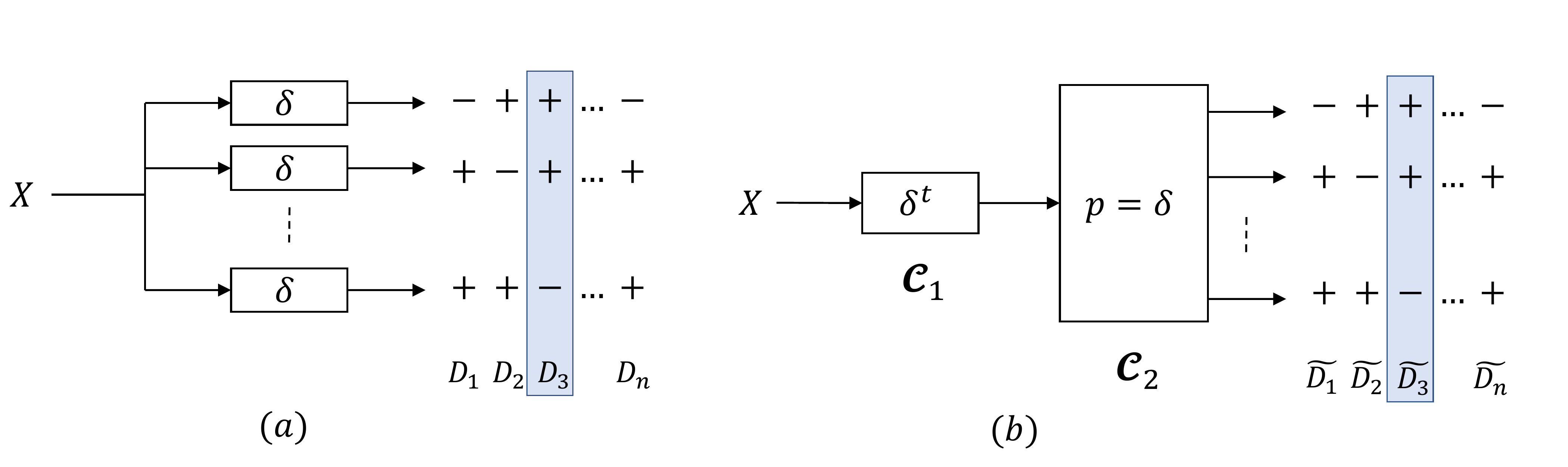}
\caption{The deletion error events occurring in the two channel models. Here `$-$' corresponds to a symbol being deleted and `$+$' corresponds to a transmission. The deletion pattern $D_i$ corresponds to the input symbol $X_i$.}
\end{figure}

In order to prove the theorem, we need to prove that the deletion patterns arising in the $t$-trace channel model and in the cascade model have the same distribution, i.e.,
\begin{align*}
\Pr(D_1=d_1,D_2=d_2,...,D_n=d_n) = \Pr(\widetilde D_1=d_1,\widetilde D_2=d_2,...,\widetilde D_n=d_n),
\end{align*}
where $d_i \in \{-,+\}^t$, where a $-$ corresponds to a deletion and a $+$ corresponds to a transmission.
Also from the definition of our channel models, the deletions act independently on each input symbol i.e., $D_i\ \indep\ D_j$ for $i\neq j$. So it is sufficient to prove that the distributions of each $D_i$ and $\widetilde D_i$ are the same.

\begin{figure}[!h]
\centering
\includegraphics[scale=0.5]{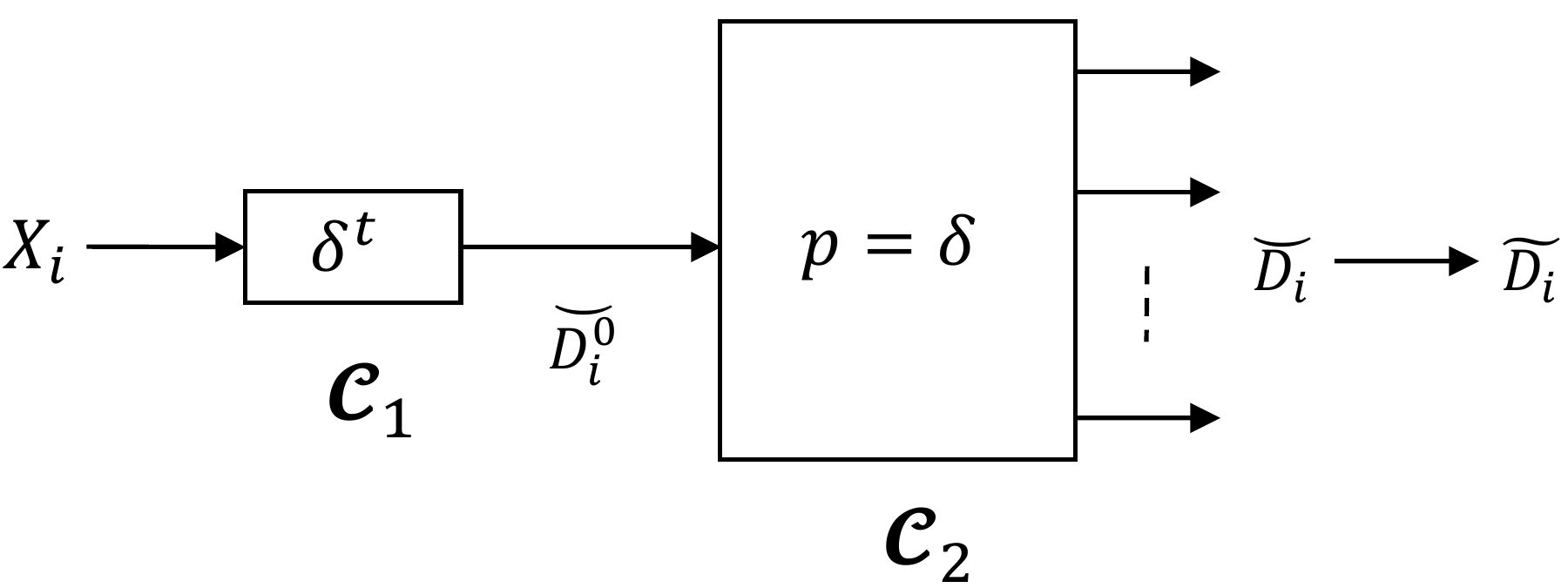}
\caption{The error events of the cascade model, expressed in terms of the error events of its components.}
\end{figure}

Consider $\widetilde D_i$ -- this is influenced by $\breve {D^0_i}$ which is the deletion in channel $\mathcal C_1$ and by $\breve D_i$ which are the deletion in the remnant channel $\mathcal C_2$. To prove the equivalence, we consider 2 cases:
\begin{itemize}
\item $d_i = (-,-,-,...,-)$, the error event where a symbol is deleted in all the observations. It can be seen that $\Pr(D_i = d_i)$ for this case is $\delta^t$. On the other hand, to compute $\Pr(\widetilde D_i = d_i)$, we note that this event is possible if and only if $\breve {D^0_i} = -$, since by definition, the remnant channel cannot delete the input symbol in all the $t$ observations. Therefore, $\Pr(\widetilde D_i = d_i)=\Pr(\breve D^0_i = -) = \delta^t$.
\item $d_i \neq (-,-,-,...,-)$, i.e., the input symbol is not deleted in at least one trace. Also let us define $k$ to be the count of $-$ in $d_i$. In this case, $\Pr(D_i = d_i)=\delta^{\text{Count(-) in } d_i}(1-\delta)^{\text{Count(+) in } d_i} = \delta^k (1-\delta)^{t-k}$. For the cascade model, this event requires that 
$\breve {D^0_i} = +$ and $\breve {D_i}=d_i$. Thus, $$\Pr(\tilde D_i=d_i)=\Pr(\breve {D^0_i} = +)\cdot \Pr(\breve {D_i}=d_i) = (1-\delta^t) \frac{\delta^k (1-\delta)^{t-k}}{1-\delta^t}=\delta^k (1-\delta)^{t-k}.$$
\end{itemize}
In both cases, the distributions of $D_i$ and $\widetilde D_i$ are the same, proving the equivalence.

\subsubsection{Proof of Lemma~\ref{lemma:F_decomposition}}
\label{app:F_lemma_proof}
\deletionMLrellemma*
\begin{proof}
The proof of this lemma uses a similar approach as the proof of Thm.~\ref{thm:approx_smap_postprob}. First, in the expression for $\mathbf F(\cdot )$, we separate out the subsets that contain index $i$:
\begin{align*}
\mathbf F(p,y) &= \sum\limits_{\substack{\mathcal S|\mathcal S\subseteq [n],\\|\mathcal S|=m}} \quad \prod\limits_{j=1}^m p_{\mathcal S_j}^{y_j} (1-p_{\mathcal S_j})^{1-y_j}\\
&= \sum\limits_{\substack{\mathcal S|\mathcal S\subseteq [n],\\|\mathcal S|=m, \\ i \notin \mathcal S}} \quad \prod\limits_{j=1}^m p_{\mathcal S_j}^{y_j} (1-p_{\mathcal S_j})^{1-y_j} + \sum\limits_{\substack{\mathcal S|\mathcal S\subseteq [n],\\|\mathcal S|=m,\\ i \in \mathcal S}} \quad \prod\limits_{j=1}^m p_{\mathcal S_j}^{y_j} (1-p_{\mathcal S_j})^{1-y_j}\\
&= \mathbf F(p_{[n]\backslash \{i\}},y) + \sum\limits_{\substack{\mathcal S|\mathcal S\subseteq [n],\\|\mathcal S|=m,\\ i \in \mathcal S}} \quad \prod\limits_{j=1}^m p_{\mathcal S_j}^{y_j} (1-p_{\mathcal S_j})^{1-y_j}. \numberthis
\label{eq:Flemma_proof1}
\end{align*}
Now the second term can be further split as,
\begin{align*}
\sum\limits_{\substack{\mathcal S|\mathcal S\subseteq [n],\\|\mathcal S|=m,\\ i \in \mathcal S}} \quad \prod\limits_{j=1}^m p_{\mathcal S_j}^{y_j} (1-p_{\mathcal S_j})^{1-y_j} &= \sum_{k=1}^m\sum\limits_{\substack{\mathcal S|\mathcal S\subseteq [n],\\|\mathcal S|=m,\\ \mathcal S_k = i}} \quad \prod\limits_{j=1}^m p_{\mathcal S_j}^{y_j} (1-p_{\mathcal S_j})^{1-y_j}.
\end{align*}
One could express the set $\mathcal S$ as the union  $\mathcal S = \mathcal S' \cup \{i\} \cup \mathcal S''$ such that $\mathcal S' \subseteq [i-1]$ and $\mathcal S'' \subseteq [i+1:n]$ to get
\begin{align*}
&\sum_{k=1}^m\sum\limits_{\substack{\mathcal S|\mathcal S\subseteq [n],\\|\mathcal S|=m,\\ \mathcal S_k = i}} \quad \prod\limits_{j=1}^m p_{\mathcal S_j}^{y_j} (1-p_{\mathcal S_j})^{1-y_j}\\
&= \sum_{k=1}^m\sum_{\substack{\mathcal S'|\\ \mathcal S'\subseteq [i-1]\\ |\mathcal S'|=k-1}}\sum_{\substack{\mathcal S''|\\ \mathcal S'' \subseteq [i+1:n]\\ |\mathcal S''|=m-k}}  \left( \prod\limits_{j=1}^{k-1} p_{\mathcal S'_j}^{y_j} (1-p_{\mathcal S'_j})^{1-y_j} \right) \left( p_{i}^{y_k} (1-p_{i})^{1-y_k} \right) \left( \prod\limits_{j=1}^{m-k} p_{\mathcal S''_j}^{y_{j+k}} (1-p_{\mathcal S''_j})^{1-y_{j+k}} \right)\\
&= \sum_{k=1}^m  p_{i}^{y_k} (1-p_{i})^{1-y_k}\left(\sum_{\substack{\mathcal S'|\\ \mathcal S'\subseteq [i-1]\\ |\mathcal S'|=k-1}} \prod\limits_{j=1}^{k-1} p_{\mathcal S'_j}^{y_j} (1-p_{\mathcal S'_j})^{1-y_j} \right)  \left( \sum_{\substack{\mathcal S''|\\ \mathcal S'' \subseteq [i+1:n]\\ |\mathcal S''|=m-k}}  \prod\limits_{j=1}^{m-k} p_{\mathcal S''_j}^{y_{j+k}} (1-p_{\mathcal S''_j})^{1-y_{j+k}} \right)\\
&= \sum_{k=1}^m  p_{i}^{y_k} (1-p_{i})^{1-y_k} \mathbf F(p_{[i-1]},y_{[k-1]})   \mathbf F(p_{[i+1:n]},y_{[k+1:m]}).
\end{align*}
The $\sum_{k=1}^m$ summation in the above expression could further be split into the two cases depending on whether $y_k=0$ or $y_k=1$, which simplifies the term  $p_{i}^{y_k} (1-p_{i})^{1-y_k}$ to either $1-p_i$ or $p_i$ respectively. Thus,
\begin{align*}
&\sum\limits_{\substack{\mathcal S|\mathcal S\subseteq [n],\\|\mathcal S|=m,\\ i \in \mathcal S}} \quad \prod\limits_{j=1}^m p_{\mathcal S_j}^{y_j} (1-p_{\mathcal S_j})^{1-y_j}\\
&= (1-p_i)\sum_{k|y_k=0} \mathbf F(p_{[i-1]},y_{[k-1]})   \mathbf F(p_{[i+1:n]},y_{[k+1:m]}) + p_i\sum_{k|y_k=1} \mathbf F(p_{[i-1]},y_{[k-1]})   \mathbf F(p_{[i+1:n]},y_{[k+1:m]}).\numberthis
\label{eq:Flemma_proof2}
\end{align*}
\noindent Plugging \eqref{eq:Flemma_proof2} in \eqref{eq:Flemma_proof1} concludes the proof of the Lemma.
\end{proof}

\subsubsection{Proof of Lemma~\ref{lemma:bin_inf_relation}}
The following Lemma forms the backbone of  the analyses for multiple traces. This lemma is also closely related to the channel equivalence in Theorem~\ref{thm:channel_equiv}.
\label{app:bin_inf_lemma}
\bininfrelation*

\begin{proof}
The channel equivalence can essentially be tied to this lemma as follows: consider the two channel models in Fig.~\ref{fig:channel_equiv}. The probability of observations given the input in both cases is proportional to the number of ways of obtaining the observations given the input.
\begin{itemize}
\item For the $t$-trace deletion channel model in Fig.~\ref{fig:channel_equiv} (a), the number of ways to obtain the traces given the input is equal to ${X \choose {Y}^{1}}{X \choose {Y}^{2}}...{X \choose {Y}^{t}}$.
\item For the cascade model in Fig.~\ref{fig:channel_equiv} (b), the number of ways to obtain the traces given the input is equal to $\sum_{z} {X \choose z} \langle \tilde  Y^{1}\uparrow \tilde Y^{2} \uparrow...\uparrow \tilde  Y^{t},z \rangle$, which we show below.
\end{itemize}
The above two are expression must be equal since the two channel models are equivalent.

We now first compute the probability of a given set of output sequences given an input sequence for the remnant channel, namely $\Pr(\tilde  Y^{1},\tilde Y^{2},...,\tilde  Y^{t}|Z)$. First, note that there can be multiple deletion patterns corresponding to outputs $\tilde  Y^{1},\tilde Y^{2},...,\tilde  Y^{t}$ resulting from a given input $Z$. The number of such patterns is equal to $\langle \tilde  Y^{1}\uparrow \tilde Y^{2} \uparrow...\uparrow \tilde  Y^{t},Z \rangle$, which essentially follows from the definition of the infiltration product. Consider one such valid deletion pattern, i.e., a deletion pattern $\mathcal{D}$ that is a mapping of the symbols in $Z$ onto the symbols in $\tilde  Y^{1},\tilde Y^{2},...,\tilde  Y^{t}$: $\mathcal{D}=\{(1,S_1),(2,S_2),...,(|Z|,S_{|Z|})\}$. Here $(i,S_i)$ represents the fact that $Z_i$ is not deleted in the output set $\tilde{Y}^{S_i}$ and is deleted in the rest. From the definition of the remnant channel, we have $|S_i|>0$ . Also $\sum_{i=1}^{|Z|}|S_i|=\sum_{j=1}^t|\tilde{Y}^{j}|$ since every symbol of each output is associated with exactly one input symbol and hence corresponds to one particular $S_i$. Thus,
\begin{align*}
\Pr(\tilde  Y^{1},\tilde Y^{2},...,\tilde  Y^{t}|Z) &= \langle \tilde  Y^{1}\uparrow \tilde Y^{2} \uparrow...\uparrow \tilde  Y^{t},Z \rangle \Pr(\tilde  Y^{1},\tilde Y^{2},...,\tilde  Y^{t}|Z,\mathcal{D})\\
&=\langle \tilde  Y^{1}\uparrow \tilde Y^{2} \uparrow...\uparrow \tilde  Y^{t},Z \rangle \prod_{i=1}^{|Z|}\frac{(1-\delta)^{|S_i|}\delta^{t-|S_i|}}{1-\delta^t}\\
&=\langle \tilde  Y^{1}\uparrow \tilde Y^{2} \uparrow...\uparrow \tilde  Y^{t},Z \rangle \frac{(1-\delta)^{\sum|S_i|}\delta^{|Z|t-\sum |S_i|}}{(1-\delta^t)^{|Z|}}\\
&=\langle \tilde  Y^{1}\uparrow \tilde Y^{2} \uparrow...\uparrow \tilde  Y^{t},Z \rangle \frac{(1-\delta)^{\sum|\tilde{Y}^{j}|}\delta^{|Z|t-\sum |\tilde{Y}^{j}|}}{(1-\delta^t)^{|Z|}}.
\end{align*}
We can then compute the probability of the output given the input for the cascade channel as 
\begin{align*}
\Pr(&\tilde Y^{1},\tilde Y^{2},...,\tilde  Y^{t}|X)\\ &= \sum_{z} \Pr(\tilde Y^{1},\tilde Y^{2},...,\tilde  Y^{t},Z=z|X)\\
&= \sum_{z} \Pr(Z=z|X)\Pr(\tilde Y^{1},\tilde Y^{2},...,\tilde  Y^{t}|Z=z)\\
&= \sum_{z} \Bigg[{X \choose z} \delta^{t(|X|-|z|)}(1-\delta^t)^{|z|}\langle \tilde  Y^{1}\uparrow \tilde Y^{2} \uparrow...\uparrow \tilde  Y^{t},z \rangle \frac{(1-\delta)^{\sum|\tilde{Y}^{j}|}\delta^{|z|t-\sum |\tilde{Y}^{j}|}}{(1-\delta^t)^{|z|}}\Bigg]\\
&= \left[\sum_{z} {X \choose z} \langle \tilde  Y^{1}\uparrow \tilde Y^{2} \uparrow...\uparrow \tilde  Y^{t},z \rangle\right] \delta^{t|X|-\sum |\tilde{Y}^{j}|} {(1-\delta)^{\sum|\tilde{Y}^{j}|}}. \numberthis
\label{eq:lemma2_proof_1}
\end{align*}

For the $t$-trace deletion channel model, we have:
\begin{align*}
\Pr(Y^1,Y^2,...,Y^t|X) &= \prod_{j=1}^t {X \choose Y^j} \delta^{|X|-|Y^j|}(1-\delta)^{|Y_j|} \\
 &= {X \choose {Y}^{1}}{X \choose {Y}^{2}}...{X \choose {Y}^{t}} \delta^{t|X|-\sum |{Y}^{j}|} {(1-\delta)^{\sum|{Y}^{j}|}}. \numberthis
 \label{eq:lemma2_proof_2}
 \end{align*}
Equating \eqref{eq:lemma2_proof_1} and \eqref{eq:lemma2_proof_2} with $X = h$ and traces as $Y^j = \tilde Y^j = f_j$ proves the Lemma.

Alternatively, we use also induction to prove the statement as we do below. The statement is trivially true when $m=1$ since, $\sum_{w}{h \choose w}\langle f_1,w \rangle={h \choose f_1}$ as $\langle f,w \rangle=\mathbbm{1}_{f=w}$. We refer the reader to equation 6.3.25 in \cite{lothaire1997combinatorics} for the proof of the lemma for the case $m=2$. Assume that the statement is true for $m=k \in \mathbb{Z}, k\geq 2$. We next prove the validity when $m=k+1$. \\
Consider 
\begin{align}
{h \choose f_1} {h \choose f_2}...{h \choose f_k}{h \choose f_{k+1}}&=\sum_{w}{h \choose w}\langle f_1\uparrow f_2\uparrow  ...\uparrow f_k,w \rangle {h \choose f_{k+1}}\nonumber\\
&=\sum_{w}\left[{h \choose w} {h \choose f_{k+1}}\right]\langle f_1\uparrow f_2\uparrow  ...\uparrow f_k,w \rangle\nonumber\\
&=\sum_{w}\left[\sum_v \langle w\uparrow f_{k+1},v \rangle {h \choose v}\right]\langle f_1\uparrow f_2\uparrow  ...\uparrow f_k,w \rangle\nonumber\\
&=\sum_{v}{h \choose v}\left[\sum_w \langle w\uparrow f_{k+1},v \rangle \langle f_1\uparrow f_2\uparrow  ...\uparrow f_k,w \rangle\right]. \label{eq:prop2proof}
\end{align} 
To evaluate the term in the square bracket, we use \eqref{def:infiltforseq}. For the case where $\tau \in \mathcal{A}^*,\sigma \in \mathbb{Z}\langle \mathcal{A} \rangle$ in \eqref{def:infiltforseq}, we have $$\sigma\uparrow \tau=\sum_{f\in \mathcal{A}^*} \langle \sigma,f \rangle  (f\uparrow \tau),$$ and thus \begin{equation}
\langle \sigma \uparrow \tau,u \rangle=\sum_{f\in \mathcal{A}^*} \langle \sigma,f \rangle  \langle f\uparrow \tau,u\rangle.
\label{eq:prop2proof2}
\end{equation}  We use \eqref{eq:prop2proof2}  to replace the term in the square bracket in \eqref{eq:prop2proof}, i.e.,
\begin{align}
{h \choose f_1} {h \choose f_2}...{h \choose f_k}{h \choose f_{k+1}}\nonumber\\
=\sum_{v}{h \choose v}\langle (f_1\uparrow f_2\uparrow  ...\uparrow f_k) \uparrow f_{k+1},v \rangle,
\end{align} 
and the lemma follows from the associativity property of the infiltration product.
\end{proof}

\subsubsection{Proof of Lemma~\ref{lemma:smapsum}}
\label{app:smapsum}
\begin{restatable}{lemma}{smapsum}
\label{lemma:smapsum}
\begin{align*}
\sum_{\substack{f||f|{=}n\\f_i{=}a}}{f \choose g} = 2^{n-|g|}\Bigg(\frac{1}{2}{n-1 \choose |g|} &+\sum_{j|g_j=a}{i-1 \choose j-1}{n-i \choose |g|-j}\Bigg),
\end{align*}
where $j\in\Big[\max\{1,|g|+i-n\}:\min\{i,|g|\}\Big]$. 
\end{restatable}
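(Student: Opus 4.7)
The plan is to expand the binomial coefficient ${f \choose g}$ using its combinatorial definition and swap the order of summation, then split into cases based on whether the index $i$ is used in the subsequence witness or not.

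First, I would write ${f \choose g} = \sum_{\mathcal S \subseteq [n],\,|\mathcal S|=m} \mathbbm{1}\{f_{\mathcal S}=g\}$, where $m=|g|$, and exchange the order of summation to obtain
\begin{equation*}
\sum_{\substack{f\in\{0,1\}^n\\ f_i=a}} {f \choose g} = \sum_{\substack{\mathcal S \subseteq [n]\\ |\mathcal S|=m}} \bigl|\{f\in\{0,1\}^n : f_i=a,\ f_{\mathcal S}=g\}\bigr|.
\end{equation*}
The point is that the inner count depends only on the combinatorial relation between $i$ and $\mathcal S$ (and, when $i\in\mathcal S$, on whether $g$ agrees with $a$ at the corresponding position), so it can be evaluated in closed form.

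Next, I would split the outer sum into two cases. Case (1): $i\notin\mathcal S$. Then the constraint $f_{\mathcal S}=g$ pins down $m$ coordinates of $f$ and $f_i=a$ pins down one more, leaving $n-m-1$ free bits; the number of admissible subsets is ${n-1 \choose m}$, contributing $2^{n-m-1}{n-1\choose m}$. Case (2): $i\in\mathcal S$, say $i=\mathcal S_j$. Then $f_i=g_j$ is forced, so the count is $2^{n-m}$ if $g_j=a$ and $0$ otherwise. Counting subsets $\mathcal S$ with $\mathcal S_j=i$ amounts to choosing $j-1$ indices from $[1:i-1]$ and $m-j$ indices from $[i+1:n]$, which gives ${i-1\choose j-1}{n-i\choose m-j}$; the ranges of $j$ in the Lemma's statement arise from requiring both binomial coefficients to be nonzero. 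Summing over $j$ with $g_j=a$ yields the second term.

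Combining both cases gives
\begin{equation*}
\sum_{\substack{f\in\{0,1\}^n\\ f_i=a}} {f \choose g} = 2^{n-m-1}{n-1\choose m} + 2^{n-m}\sum_{j:\,g_j=a} {i-1\choose j-1}{n-i\choose m-j},
\end{equation*}
which is exactly the claimed identity after factoring out $2^{n-m}$. There is no real obstacle here, as the argument is bookkeeping once the indicator expansion and the case split are in place; the only thing to be careful about is the boundary behavior of the binomial coefficients (empty ranges of $j$), which is handled by the convention ${a\choose b}=0$ for $b<0$ or $b>a$ set in Section~\ref{sec:notation}.
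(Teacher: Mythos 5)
Your proposal is correct and follows essentially the same route as the paper's proof: expand ${f\choose g}$ as a sum of indicators over ordered subsets $\mathcal S$, swap the summation order, and split on whether $i\notin\mathcal S$ or $i=\mathcal S_j$ for some $j$ (the paper finishes the resulting count by appeal to a figure, whereas you carry out the free-bit bookkeeping explicitly, which is if anything more complete).
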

\begin{proof}
First, observe that $${f \choose g} = \sum_{\substack{S\subseteq [n]:\\|S|=|g|}} \mathbbm 1_{f_S=g},$$ where the summation is over all ordered subsets of $[n]=\{1,2,...,n\}$ of size $|g|$ and $f_S$ corresponds to the subsequence of $f$ indexed by $S$. Thus, 
\begin{align*}
\sum_{\substack{f\in\mathcal{A}^n|\\f_i=a}}&{f \choose g} = \sum_{\substack{f\in\mathcal{A}^n|\\f_i=a}} \sum_{\substack{S\subseteq [n]|\\|S|=|g|}} \mathbbm 1_{f_S=g} = \sum_{\substack{S\subseteq [n]|\\|S|=|g|}} \sum_{\substack{f\in\mathcal{A}^n|\\f_i=a}} \mathbbm 1_{f_S=g}\\
&=\sum_{\substack{S\subseteq [n]|\\|S|=|g|\\i \notin S}} \sum_{\substack{f\in\mathcal{A}^n|\\f_i=a}} \mathbbm 1_{f_S=g} + 
\sum_{\substack{S\subseteq [n]|\\|S|=|g|\\i \in S}} \sum_{\substack{f\in\mathcal{A}^n|\\f_i=a}} \mathbbm 1_{f_S=g}\\
&=\sum_{\substack{S\subseteq [n]|\\|S|=|g|\\i \notin S}} \sum_{\substack{f\in\mathcal{A}^n|\\f_i=a}} \mathbbm 1_{f_S=g} + 
\sum_{j=1}^m \sum_{\substack{S\subseteq [n]|\\|S|=|g|\\S_j=i}} \sum_{\substack{f\in\mathcal{A}^n|\\f_i=a}} \mathbbm 1_{f_S=g}. \numberthis \label{eq:lemmasmapsum}
\end{align*}
\begin{figure}[!h]
\begin{center}
\includegraphics[scale=0.25]{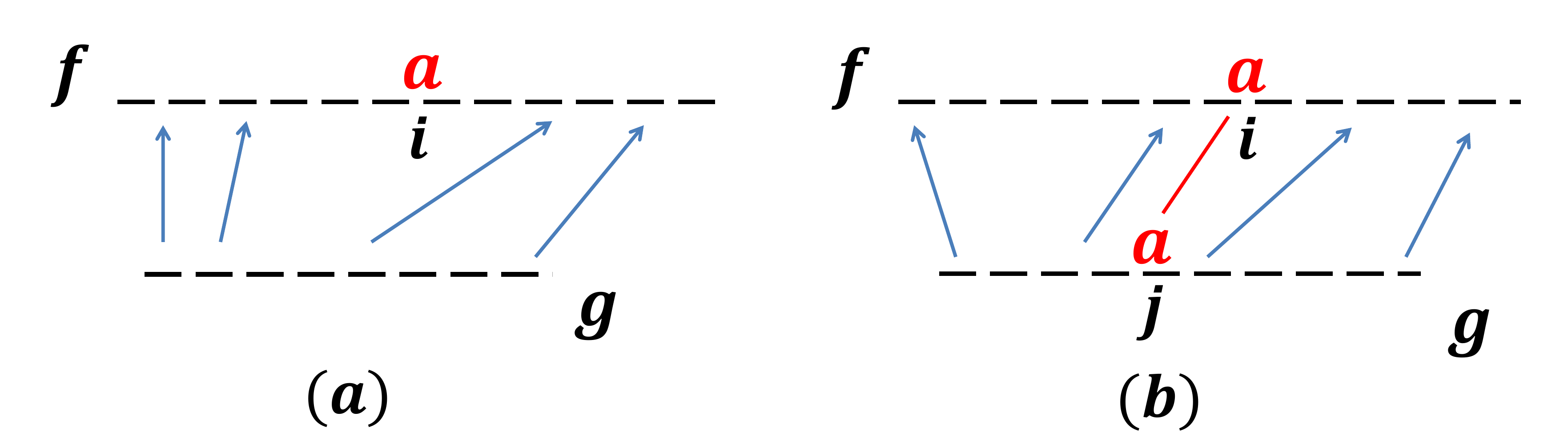}
\vspace*{-2mm}
\caption{Figure illustrating proof of Lemma~\ref{lemma:smapsum}.}
\label{fig:lemmasmapsum}
\end{center}
\end{figure}
The two terms in \eqref{eq:lemmasmapsum} can be visualized as the number of ways to fill up the blank spaces (spaces without arrows pointing to it in $f$) in Fig.~\ref{fig:lemmasmapsum}(a) and (b) respectively. Solving this counting problem, we get 
$$\sum_{\substack{f\in\mathcal{A}^n|\\f_i=a}}{f \choose g}=2^{n-|g|}\left(\frac{1}{2}{n-1 \choose |g|}+\sum_{j|g_j=a}{i-1 \choose j-1}{n-i \choose |g|-j}\right).$$
\end{proof}

\subsection{Dynamic program to compute $\mathbf F(\cdot)$ and $\nabla \mathbf F(\cdot)$}
\subsubsection{Computation of $\mathbf F(p,v)$}
\label{app:F_compute}
We here describe how to compute $\mathbf F(p,v)$ in $O(mn)$ time and space complexity, where $p=(p_1,...,p_n)$ and $v=v_1...v_m$, via a dynamic programming approach. Note that $m \leq n$ otherwise $\mathbf F(p,v)=0$.
We first define
\begin{align*}
  &\mathbf G^{for}(k,j)\triangleq \mathbf F(p_{[1:k]},v_{[1:j]}).
  \numberthis 
\label{eq:G}
  \end{align*}
  
Using Lemma~\ref{lemma:F_decomposition} with $i=n$, we get
\begin{align*}
\mathbf F(p,v) = \mathbf F(p_{[n-1]},v) + p_n^{v_m} (1-p_n)^{(1-v_m)}  \mathbf F(p_{[n-1]},v_{[m-1]}).
\end{align*}

\noindent This translates to the following dynamic program for $\mathbf G^{for}$:
\begin{align*}
\mathbf G^{for}(k,j) = \mathbf G^{for}(k-1,j)+ p_{k}^{v_j}(1-p_{k})^{1-v_j}&\mathbf G^{for}(k-1,j-1),\numberthis \label{eq:approx_smap_dpfor}
\end{align*}
with the boundary conditions $\mathbf G^{for}(k,0)=1\ \forall\ k \geq 0$ and $\mathbf G^{for}(k,j)=0\ \forall\ k<j$.
The algorithm is now summarized as Alg.~\ref{alg:F_comp}.

\begin{algorithm}[h!]
\caption{Computing $\mathbf F(p,v)$}\label{alg:F_comp}
\begin{algorithmic}[1]
\State Inputs: $p \in [0,1]^n$, $v \in \{0,1\}^m$ 
\State Outputs: $\mathbf F(p_{[1:k]},v_{[1:j]})$ for all $k \in [n]$ and $j\in[m]$ 
\State Initialize $\mathbf G^{for}(k,0)=1\ \forall\ k$ and $\mathbf G^{for}(k,j)=0\ \forall\ k<j$
\For {$k = 1:n$ and $j = 1:m$}
\State Use \eqref{eq:approx_smap_dpfor} to update $\mathbf G^{for}(k,j)$
\EndFor
\State return $\mathbf G^{for}(k,j)\ \forall\ k,j$
\end{algorithmic}
\end{algorithm}

We note that a similar dynamic programming approach yields $\mathbf F(p_{[k+1:n]},v_{[j+1:m]})$ for all $k \in [n]$ and $j\in[m]$ in $O(mn)$ time and space complexity by defining
\begin{align*}
  &\mathbf G^{rev}(k,j)\triangleq \mathbf F(p_{[k+1:n]},v_{[j+1:m]}).
  \end{align*}
\noindent The following dynamic program can be used for $\mathbf G^{rev}$:
\begin{align*}
\mathbf G^{rev}(k,j) = \mathbf G^{rev}(k+1,j)+ p_{k+1}^{v_{j+1}}(1-p_{k+1})^{1-v_{j+1}}&\mathbf G^{rev}(k+1,j+1),\numberthis \label{eq:approx_smap_dprev}
\end{align*}
with the boundary conditions $\mathbf G^{rev}(k,m)=1\ \forall\ k \geq 0$ and $\mathbf G^{rev}(k,j)=0\ \forall\ k,j: n-k<m-j$. \\ 
  
\subsubsection{Computation of $\nabla_p \mathbf F(p,v)$} 
\label{app:F_grad_comp}
First, from Lemma~\ref{lemma:F_decomposition}, we have
\begin{align*}
\mathbf F(p,v) =  \mathbf F(p_{[n]\backslash \{i\}},v) +& (1-p_i)\sum_{k|v_k=0} \mathbf F(p_{[i-1]},v_{[k-1]})   \mathbf F(p_{[i+1:n]},v_{[k+1:m]}) \\&+ p_i\sum_{k|v_k=1} \mathbf F(p_{[i-1]},v_{[k-1]})   \mathbf F(p_{[i+1:n]},v_{[k+1:m]}).
\end{align*}
Differentiating with respect to $p_i$, we get
\begin{align*}
\frac{\partial \mathbf F(p,v)}{\partial p_i} &=  \sum_{k|v_k=1} \mathbf F(p_{[i-1]},v_{[k-1]})   \mathbf F(p_{[i+1:n]},v_{[k+1:m]}) - \sum_{k|v_k=0} \mathbf F(p_{[i-1]},v_{[k-1]})   \mathbf F(p_{[i+1:n]},v_{[k+1:m]}) \\
&=  \sum_{k|v_k=1} \mathbf G^{for}(i{-}1,k{-}1)\mathbf G^{rev}(i,k) -  \sum_{k|v_k=0}\mathbf G^{for}(i{-}1,k{-}1)\mathbf G^{rev}(i,k). \numberthis
\label{eq:F_grad_comp}
\end{align*}
Thus, computing the $\mathbf G^{for}$ and $\mathbf G^{rev}$ terms is sufficient to compute the gradient. As discussed above, this computation requires $O(nm)$ operations. Given $\mathbf G^{for}$ and $\mathbf G^{rev}$, the computation of each partial derivative $\frac{\partial \mathbf F(p,v)}{\partial p_i}$ requires $O(m)$ operations, and we need to compute $n$ such partial derivatives. Thus, the complexity of computing $\nabla_p \mathbf F(p,v)$ can be done in $O(nm)$ time and space complexity.

\begin{algorithm}[h!]
\caption{Computing $\nabla_p \mathbf F(p,v)$}\label{alg:F_grad_comp}
\begin{algorithmic}[1]
\State Inputs: $p \in [0,1]^n$, $v \in \{0,1\}^m$ 
\State Outputs: $\nabla_p \mathbf F(p,v)$
\State Initialize $\mathbf G^{for}(k,0)=1\ \forall\ k$ and $\mathbf G^{for}(k,j)=0\ \forall\ k<j$
\State Initialize $\mathbf G^{rev}(k,m)=1\ \forall\ k$ and $\mathbf G^{rev}(k,j)=0\ \forall\ k,j: n-k<m-j$
\For {$k = 1:n$ and $j = 1:m$}
\State Use \eqref{eq:approx_smap_dpfor} and \eqref{eq:approx_smap_dprev} to compute $\mathbf G^{for}(k,j)$ and $\mathbf G^{rev}(k,j)$
\EndFor
\For {$i = 1:n$}
\State Use \eqref{eq:F_grad_comp} to compute $\frac{\partial \mathbf F(p,v)}{\partial p_i} $
\EndFor
\State return $\nabla_p \mathbf F(p,v)$
\end{algorithmic}
\end{algorithm}

\subsection{An algebraic definition of the infiltration product.}
\label{app:infil_def}
For completeness, we reproduce the formal definition of the infiltration product from Section 6.3 of \cite{lothaire1997combinatorics} (also see there for the equivalence of the two definitions). A \textit{formal series} with indeterminates (or variables) in a set $\mathcal A$ and coefficients in a commutative ring $\mathcal R$, is a mapping of $\mathcal A^*$ onto $\mathcal R$. Recall that a commutative ring is a set which forms an abelian group under an \textit{addition} operation, is a monoid under a \textit{multiplication} operation which commutes, and the multiplication operation distributes over addition. Here we consider $\mathbb Z$, the set of integers as the commutative ring  $\mathcal{R}$. A formal series is called a \textit{polynomial} if only a finite number of sequences are mapped to non-zero values, the rest of the sequences map to zero. Consider two polynomials $\sigma,\tau: \mathcal{A}^*\rightarrow \mathbb Z$. The value taken by a sequence $w\in \mathcal A^*$ on $\sigma$ (or the coefficient of $w$ in $\sigma$) is denoted by $\langle \sigma,w\rangle \in \mathbb R$. We also define binary addition ($\oplus$) and multiplication operations ($\times$) on the set of polynomials as follows:
\begin{align}
\langle \sigma\oplus \tau,w\rangle \triangleq \langle \sigma,w\rangle +  \langle \tau,w \rangle \quad \forall w\in \mathcal A^*,\label{eq:polynomial_add}\\
\langle \sigma\times \tau,w\rangle \triangleq \sum_{\substack{f,g\in \mathcal A^*:\\ f.g=w}}\langle \sigma,f\rangle  \langle \tau,g \rangle \quad \forall w\in \mathcal A^*.\label{eq:polynomial_prod}
\end{align}
We will use the usual symbols $+$ and $.$ in place of $\oplus$ and $\times$ in this work for convenience. The meaning of the operation would be clear depending on the operands. With these operations the set of polynomials form a non-commutative ring, and is denoted by $\mathbb Z\langle\mathcal A \rangle$, also called the free $\mathbb Z$-algebra on $\mathcal A$ in ring theory. Note that the addition and multiplication operations defined in \eqref{eq:polynomial_add} and \eqref{eq:polynomial_prod} are similar to the operations defined on commutative polynomials, except that the multiplication  operation under the summation in \eqref{eq:polynomial_prod} ($f.g=w$) is actually concatenation and is non-commutative.   The multiplication inside the summation in \eqref{eq:polynomial_prod} is multiplication in the real field and hence commutative. The multiplication defined in \eqref{eq:polynomial_prod} distributes over addition defined in \eqref{eq:polynomial_add}. Thus, a polynomial in $\mathbb Z\langle\mathcal A \rangle$ can be represented as a sum of monomials in $\mathcal A^*$ each with an associated coefficient in $\mathbb Z$, i.e., $\sigma=\sum\limits_{w\in \mathcal A^*} \langle\sigma,w \rangle w$. Define the \textit{degree} of a polynomial to be equal to the length of a longest sequence with a non-zero coefficient in the polynomial and the \textit{number of terms} of a polynomial as the number of sequences with non-zero coefficients in the polynomial. Note that a degree $d$ polynomial could have a number of terms upto $2^{d+1}-1$.
 
With this, the \textit{infiltration product} (in general, for two polynomials) is defined as follows:
\begin{align}
\forall f\in \mathcal{A}^*,& \quad f\uparrow e = e\uparrow f=f.\nonumber \\
\forall f,g\in \mathcal{A}^*&,\quad  \forall a,b\in \mathcal{A}, \nonumber\\
fa\uparrow gb=(f\uparrow gb)a&+(fa\uparrow g)b+\mathbbm{1}_{a=b}(f\uparrow g)a.\nonumber \\
\forall \sigma,\tau \in \mathbb{Z}\langle\mathcal{A} \rangle, \quad &\sigma\uparrow \tau=\sum_{f,g\in \mathcal{A}^*} \langle \sigma,f \rangle \langle \tau,g \rangle (f\uparrow g). \label{def:infiltforseq}
\end{align}

\subsection{Symbolwise posterior probabilities for the remnant channel}
\label{app:remnant_postprob}
Consider the remnant channel shown below, and let $Z=Z_1Z_2...Z_n$. Also let $Z_i \sim \text{Ber}(0.5)$. We aim to compute $\Pr(Z_i=1|\tilde  Y^{1}=y^1,\tilde Y^{2}=y^2,...,\tilde  Y^{t}=y^t)$. 
\begin{figure}[!h]
\centering
 \includegraphics[scale=0.4]{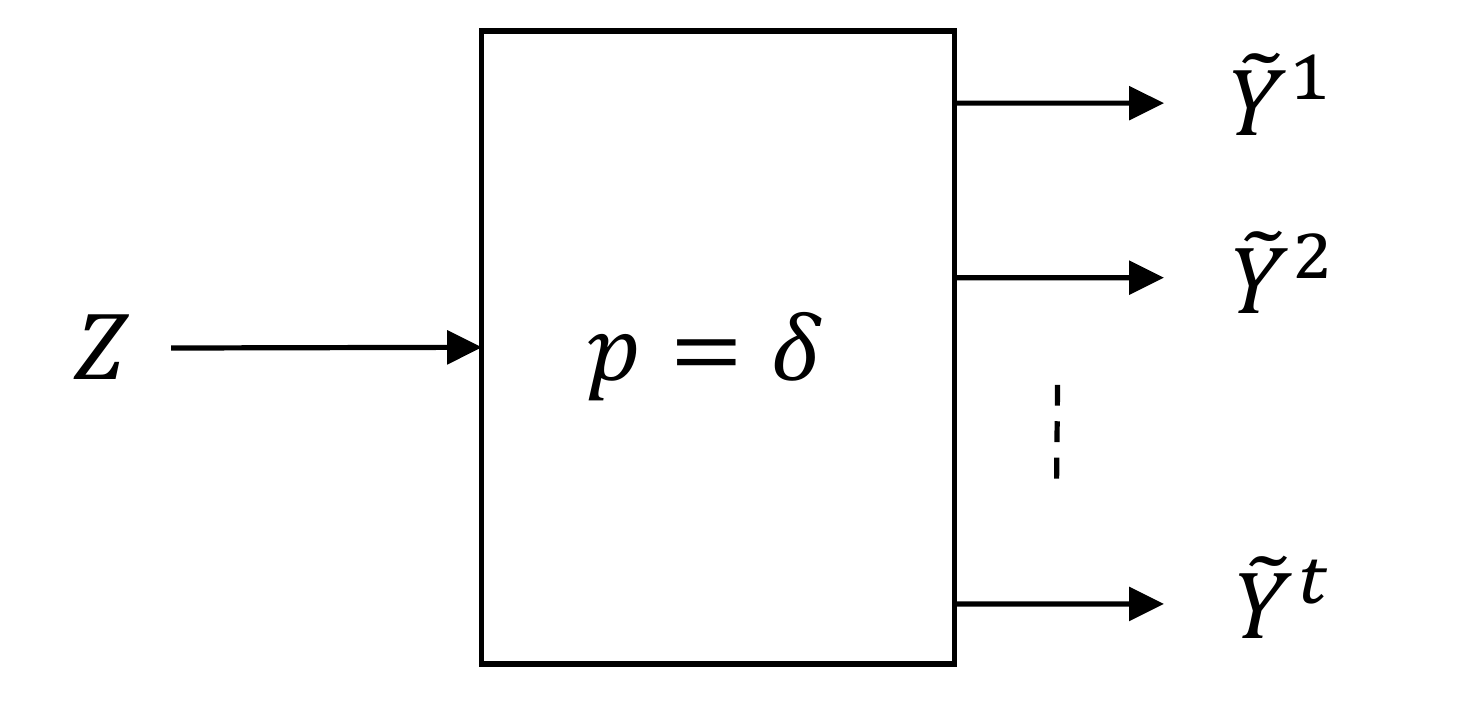}
 \caption{The remnant channel}
\end{figure}
From the definition of the infiltration product, the input-output relation for this channel can be derived to be:
\begin{align*}
\Pr(\tilde  Y^{1}=y^1,\tilde Y^{2}=y^2,...,\tilde  Y^{t}=y^t|Z) =\langle y^1\uparrow y^2 \uparrow...\uparrow y^t,Z \rangle \frac{(1-\delta)^{\sum|y^j|}\delta^{nt-\sum |y^j|}}{(1-\delta^t)^{n}}.
\end{align*}
Now, one could write the symbolwise posterior probabilities for $Z$ as:
\begin{align*}
\Pr(Z_i=1&|\tilde  Y^{1}=y^1,\tilde Y^{2}=y^2,...,\tilde  Y^{t}=y^t) = \sum_{\substack{z||z|=n,\\z_i=1}} \Pr(z|\tilde  Y^{1}=y^1,\tilde Y^{2}=y^2,...,\tilde  Y^{t}=y^t)\\
&{=} \frac{1}{2^n \Pr(\tilde  Y^{1}=y^1,\tilde Y^{2}=y^2,...,\tilde  Y^{t}=y^t)} \sum_{\substack{z||z|=n,\\z_i=1}} \Pr(\tilde  Y^{1}=y^1,\tilde Y^{2}=y^2,...,\tilde  Y^{t}=y^t|z)\\
&{=} \frac{{(1-\delta)^{\sum|y^{j}|}\delta^{nt-\sum |y^{j}|}}}{(1-\delta^t)^{n} 2^n \Pr(\tilde  Y^{1}=y^1,\tilde Y^{2}=y^2,...,\tilde  Y^{t}=y^t)} \sum_{\substack{z||z|=n,\\z_i=1}} \langle y^1 \uparrow y^2 \uparrow ... \uparrow y^t,z \rangle. \numberthis
\label{eq:remant_map_prob_1}
\end{align*}
A similar expression can be obtained for the case when $Z_i=0$ as 
\begin{align*}
\Pr(Z_i=0&|\tilde  Y^{1}=y^1,\tilde Y^{2}=y^2,...,\tilde  Y^{t}=y^t)\\
&{=} \frac{{(1-\delta)^{\sum|y^{j}|}\delta^{nt-\sum |y^{j}|}}}{(1-\delta^t)^{n} 2^n \Pr(\tilde  Y^{1}=y^1,\tilde Y^{2}=y^2,...,\tilde  Y^{t}=y^t)} \sum_{\substack{z||z|=n,\\z_i=0}} \langle y^1 \uparrow y^2 \uparrow ... \uparrow y^t,z \rangle. \numberthis 
\label{eq:remant_map_prob_0}
\end{align*}
We could further simplify \eqref{eq:remant_map_prob_1} and \eqref{eq:remant_map_prob_0} using the fact that the expressions in \eqref{eq:remant_map_prob_1} and \eqref{eq:remant_map_prob_0} must sum to 1, leading us to
\begin{align*}
\Pr(Z_i=1|\tilde  Y^{1}=y^1,\tilde Y^{2}=y^2,...,\tilde  Y^{t}=y^t) = \frac{ \sum\limits_{\substack{z||z|=n,\\z_i=1}} \langle y^1 \uparrow y^2 \uparrow ... \uparrow y^t,z \rangle}{\sum\limits_{\substack{z||z|=n}} \langle y^1 \uparrow y^2 \uparrow ... \uparrow y^t,z \rangle}.   \numberthis \label{eq:remnant_map_prob}
\end{align*}

We precisely describe the algorithm which computes the terms in \eqref{eq:remnant_map_prob} in section~\ref{sec:exactsmap}, by exploiting the edit graph interpretation of the infiltration product, but give a high level idea below. The complexity of such an algorithm is $O((2n)^t)$ which is equal to the number of edges in the edit graph. Note that for a fixed number of traces, this algorithm is polynomial in the blocklength as opposed to a naive approach of iterating through all the $n$-length sequences.

Recall that $\langle y^1 \uparrow y^2 \uparrow ... \uparrow y^t,z \rangle$ is the number of paths from origin to destination of the edit graph $\mathcal G(y^1,y^2,...,y^t)$ which correspond to $z$. Therefore, 
$\sum_{\substack{z||z|=n}} \langle y^1 \uparrow y^2 \uparrow ... \uparrow y^t,z \rangle$ is equal to the number of $n$-length paths in $\mathcal G(y^1,y^2,...,y^t)$ from the origin to the destination. Note that the edit graph has no cycles, so this quantity can be efficiently computed via the following dynamic program -- the number of $n$ length paths from the origin to a vertex $v$ is equal to the sum of the number of $n-1$ length paths from the origin to the in-neighbors of $v$. Such a procedure iterates over the vertex set of $\mathcal G(y^1,y^2,...,y^t)$ exactly once.

The numerator term $\sum_{\substack{z||z|=n\\z_i=1}}\langle y^1 \uparrow y^2 \uparrow ... \uparrow y^t,z \rangle$ can be interpreted in a similar way: it is equal to the number of $n$-length paths in $\mathcal G(y^1,y^2,...,y^t)$ from the origin to the destination such that the $i^{th}$ edge of the path corresponds to a `1'. The algorithm for this, therefore, follows a similar principle but has an extra step. For each vertex $v$, we compute
\begin{itemize}
\item the number of paths from the origin to $v$ of length $0,1,...,n$,
\item the number of paths from $v$ to the destination of length $0,1,...,n$.
\end{itemize}
Next we iterate over all edges in $\mathcal G(y^1,y^2,...,y^t)$ corresponding to a `1' and accumulate the number of $n$ length paths which have this particular edge as its $i^{th}$ edge. Thus, this algorithm iterates over the vertex set twice and the  edge set of $\mathcal G(y^1,y^2,...,y^t)$ once.

\subsection{A heuristic for ML optimization with a single trace.}

{The proof of Theorem~\ref{thm:ML_relaxation} inspires a heuristic for sequence reconstruction (see Alg.~\ref{alg:cood_switch}):
\begin{itemize}
\item Start from a given point $p = (p_1,...,p_n) \in [0,1]^n$.
\item One round of iteration is defined as follows: fix a traversal order for the indices $\{1,2,...,n\}$. Traverse through the indices $i$ in order and make $p_i$ either 0 or 1 depending on whether $\mathbf F(p^{(i\rightarrow 0)},y)$ or $\mathbf F(p^{(i\rightarrow 1)},y)$ is larger. This ensures that $\mathbf F(p,y)$ never decreases.
\item At the end of the round, check if the resultant $p$ was already obtained at the end of a previous round: if so, end the algorithm (to prevent it from going into an endless cycle). Otherwise, start a new round from the resultant $p$.
\end{itemize}
The resultant $p$ at the end of a round  is a lattice point since we make each $p_i$ to be 0 or 1. Therefore, the algorithm will end after a finite number of steps; in the worst case it will iterate through all $2^n$ sequences, although in practice we observe that it ends in 4-5 rounds (tested up to a blocklength of 100). We also note that the complexity of each round is $O(n^3)$ since it iterates through $n$ coordinates and for each coordinate computes $\mathbf F(\cdot)$, which is $O(n^2)$.}

\begin{algorithm}[t!]
\caption{Coordinate switch ML heuristic}\label{alg:cood_switch}
\begin{algorithmic}[1]
\item Input: Blocklength $n$, Trace {$Y=y$}, Initial point $p = (p_1,p_2,...,p_n)$ \\ Outputs: Estimated sequence $\hat X$
\State Initialize visited set $\mathcal V = \emptyset$
\While {True} 
\State Compute $\mathcal F_i = |\mathbf F(p^{(i\rightarrow 1)},y)- \mathbf F(p^{(i\rightarrow 0)},y)|\ \forall\ i$ and let $\mathcal F = (\mathcal F_1,\mathcal F_2,...,\mathcal F_n)$.
\State Define the ordered list $\mathcal S =$ \texttt{argsort}$(\mathcal F)$ where \texttt{argsort}$(\mathcal F)$ returns the index set $[n]$ sorted by descending order of $\mathcal F$, i.e., $\mathcal F_{\mathcal S_1}\geq \mathcal F_{\mathcal S_2}\geq ... \geq \mathcal F_{\mathcal S_n}$.
\For {$i \in \mathcal S$ (ordered traversal)}
\If {$\mathbf F(p^{(i\rightarrow 1)},y)- \mathbf F(p^{(i\rightarrow 0)},y) \geq 0$}
\State update $p \leftarrow p^{(i\rightarrow 1)}$
\Else
\State update $p \leftarrow p^{(i\rightarrow 0)}$
\EndIf
\EndFor
\If {$p \in \mathcal{V}$} break
\EndIf
\State $\mathcal V = \mathcal V \cup \{p\}$
\EndWhile
\State \textbf{return} $\hat X = p$
\end{algorithmic}
\end{algorithm} 


A natural question  is whether it makes a difference if Alg.~\ref{alg:cood_switch} starts from an interior point ($p = (p_1,...,p_n) \in [0,1]^n$ where $\exists\ p_i \in (0,1)$) as compared to starting from a lattice point (for instance, we could start from $p = (y,0,...,0) \in \{0,1\}^n$) which is the $n$-length sequence obtained via appending $y$ with zeros. It turns out that starting from an interior point results in better accuracy on both Hamming and edit error rate metrics, thus supporting the usefulness of our ML relaxation result. 

In Fig.~\ref{fig:singletrace}, we compare the performance of Coordinate switch heuristic with the other trace reconstruction heuristics in Section~\ref{sec:Numerics}.  We see that the coordinate switch with interior point initialization performs very similar to the true ML sequence (obtained via exhaustive search), in terms of both the Hamming error rate as well as the edit error rate. This intuitively supports the idea that this is a good heuristic for the ML optimization problem. However, at this point the heuristic is applicable for reconstruction using just a single trace and it is unclear on how to extend it to multiple traces.

\begin{figure}[!h]
\centering
\includegraphics[scale=0.5]{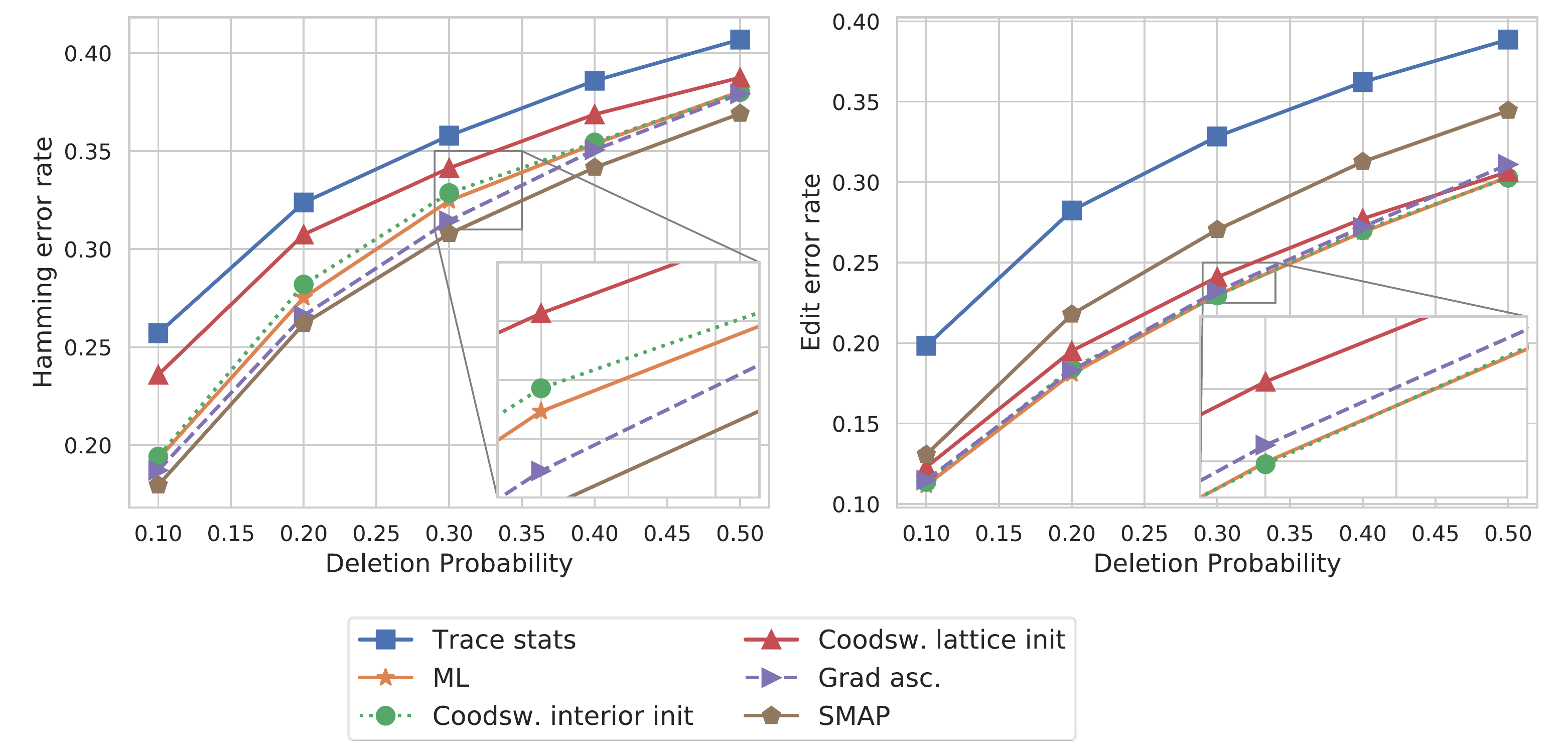}
\caption{Numerics for reconstruction from a single trace for a blocklength $n=20$. This plot compares the performance of coordinate switch heuristic (abbreviated ``Coodsw. interior init.'' and ``Coodsw. lattice init.'') with other trace reconstruction algorithms from Section~\ref{sec:Numerics}. ``ML'' refers to the true ML sequence obtained via an exhaustive search on all 20 length binary sequences. The interior point initialization initializes $p=(0.5,0.5,...,0.5)$ while the lattice point initialization appends the trace $y$ with zeros to obtain an $n$-length vector $p=(y,0,...,0)$.}
\label{fig:singletrace}
\end{figure}

\subsection{Symbolwise MAP as the minimizer of Hamming error rate}
\label{app:smap_hamming}

Symbolwise MAP  is an optimal estimator for minimizing the Hamming error rate for any channel, regardless of whether it is memoryless or not. This fact can be seen from the following argument: Consider a fixed observation $y$ (note that $y$ here can also be a collection of multiple observations, our arguments which follow remain unchanged) and that we aim to estimate a binary input sequence $X$; let the estimate of the input be $\hat X(y)$. Note that the estimate is a function of observation $y$ alone. Now the Hamming error rate of any estimator given $y$ is the expectation (over all inputs) of number of symbol mismatches divided by the blocklength, i.e.,
\begin{align*}
\frac{1}{n}\E \left[ \sum_{i=1}^n \mathbbm{1} \{X_i \neq \hat X_i(y)\} \Big| Y=y \right ] 
								&= \frac{1}{n} \sum_{i=1}^n \E \left [ \mathbbm{1}\{X_i \neq \hat X_i(y)\} \Big| Y=y  \right ]\\
& = \frac{1}{n}  \sum_{i=1}^n \Pr\left ( X_i \neq \hat X_i(y)\Big| Y=y \right ) \\
& = \frac{1}{n} \sum_{i=1}^n \Bigg( \Pr(X_i = 0|Y=y)\Pr(\hat X_i(y) = 1|X_i = 0,Y=y) \\ &  \hspace{1cm}+  
 \Pr(X_i = 1|Y=y)\Pr(\hat X_i(y) = 0|X_i = 1,Y=y) \Bigg ).
\end{align*}
But, $\hat X_i$ is a function of only $y$ and hence is conditionally independent of $X_i$ given $y$, which implies the following:
\begin{align*}
\frac{1}{n}\E \left[ \sum_{i=1}^n \mathbbm{1} \{X_i \neq \hat X_i(y)\} \Big| Y=y \right ] 
& = \frac{1}{n} \sum_{i=1}^n \Bigg( \Pr(X_i = 0|Y=y)\Pr(\hat X_i(y) = 1|Y=y) \\ &  \hspace{2cm}+  
 \Pr(X_i = 1|Y=y)\Pr(\hat X_i(y) = 0|Y=y)\Bigg ).
\end{align*}
To simplify notation, let the posterior probabilities be $q_i(y) \triangleq \Pr(X_i = 1|Y=y)$ and let $\alpha_i(y) \triangleq \Pr(\hat X_i(y) = 1|Y=y)$. Note that  $q_i(y)$ is a property of the channel and is fixed given $y$, while $\alpha_i(y)$ depends on the design of our estimator. With this, the above expression can be re-written as 
$$\frac{1}{n}\E \left[ \sum_{i=1}^n \mathbbm{1} \{X_i \neq \hat X_i(y)\} \Big| Y=y \right ] = \frac{1}{n} \sum_{i=1}^n \Bigg( (1-q_i(y)) \alpha_i(y) +    q_i(y) (1-\alpha_i(y))\Bigg ).$$ The optimal assignment of $\alpha_i(y)$ to minimize this expression is $\alpha_i(y) = 1$ if $q_i(y) \geq 0.5$ and $\alpha_i(y) = 0$ otherwise, which coincides with the symbolwise MAP estimate. This proves the optimality of symbolwise MAP for minimizing the Hamming error rate given any observation $y$, for any channel. 

\end{document}